\begin{document}

\title{\huge Spectral Ranking Inferences based on General Multiway Comparisons\thanks{%Jianqing Fan is Frederick L. Moore '18 Professor of Finance, Professor of Statistics, and Professor of Operations Research and Financial Engineering at Princeton University. Zhipeng Lou is a Postdoctoral Researcher at the Department of Operations Research and Financial Engineering, at Princeton University. Weichen Wang is an Assistant Professor of Innovation and Information Management, Faculty of Business and Economics at The University of Hong Kong. Mengxin Yu is a Postdoctoral Researcher at the Department of Statistics and Data Science, at the Wharton School, University of Pennsylvania.
					Emails: \texttt{jqfan@princeton.edu}, \texttt{ZHL318@pitt.edu}, \texttt{weichenw@hku.hk}, and \texttt{mengxiny@wharton.upenn.edu}.   The research was supported by ONR grant N00014-22-1-2340 and NSF grants DMS-2210833, DMS-2053832, and DMS-2052926. The work described in this paper was also partially supported by a grant from the Research Grants Council of the Hong Kong Special Administrative Region, China (Project No. 27307623).}} %The authors would like to express special thanks to Tracy Ke for introducing the statistical journal-ranking dataset and relevant papers. }}

\author{Jianqing Fan\qquad Zhipeng Lou\qquad Weichen Wang\qquad Mengxin Yu}

%\date{}

\maketitle

\begin{abstract}
This paper studies the performance of the spectral method in the estimation and uncertainty quantification of the unobserved preference scores of compared entities in a general and more realistic setup. Specifically, the comparison graph consists of hyper-edges of possible heterogeneous sizes, and the number of comparisons can be as low as one for a given hyper-edge.  Such a setting is pervasive in real applications, circumventing the need to specify the graph randomness and the restrictive homogeneous sampling assumption imposed in the commonly-used Bradley-Terry-Luce (BTL) or Plackett-Luce (PL) models. 
Furthermore, in scenarios where the BTL or PL models are appropriate, we unravel the relationship between the spectral estimator and the Maximum Likelihood Estimator (MLE). We discover that a two-step spectral method, where we apply the optimal weighting estimated from the equal weighting vanilla spectral method, can achieve the same asymptotic efficiency as the MLE. 
Given the asymptotic distributions of the estimated preference scores, we also introduce a comprehensive framework to carry out both one-sample and two-sample ranking inferences, applicable to both fixed and random graph settings. It is noteworthy that this is the first time effective two-sample rank testing methods have been proposed. 
Finally, we substantiate our findings via comprehensive numerical simulations and subsequently apply our developed methodologies to perform statistical inferences for statistical journals and movie rankings.
\end{abstract}

\section{Introduction}
\iffalse
Structure of the introduction:
\begin{itemize}
    \item Introduce Ranking problems.
    \item Introduce multiway comparisons and BTL model.
    \item Inadequacy of previous works. (can not deal with heterogeneous graphs, can not deal with different encoding methods.)
    \item Introduce what we did.
    \item introduces key conclusions. (uncertainty quantification and ranking inferences)
    \item further compares with previous works and the conclusion.
%    \item related literature.
\end{itemize}
\fi

Rank aggregation is crucial in various applications, including web search \citep{dwork2001rank,wang2016learning}, primate intelligence experiments \citep{johnson2002bayesian},
assortment optimization \citep{aouad2018approximability,chen2020dynamic}, recommendation systems \citep{baltrunas2010group, li2019estimating}, sports ranking \citep{massey1997statistical, turner2012bradley}, education \citep{avery2013revealed, caron2014bayesian}, voting \citep{plackett1975analysis, mattei2013preflib}, and instruction tuning used in the recent popular large language model ChatGPT \citep{ouyang2022training}. Therefore, it becomes an essential problem in many fields, such as psychology, econometrics, education, operation research, statistics, machine learning, artificial intelligence, etc. %\citep{mcfadden1973conditional,luce2012individual,baltrunas2010group, li2019estimating}{\red cite papers}

%Luce \citep{luce2012individual} introduced the celebrated axiom of choice.  
Luce~\citep{Luce1959} introduced the celebrated~{Luce's axiom of choice}. Let $p(i\given A)$ be the probability of selecting item $i$ over all other items in the set of alternatives $A$. According to the axiom, when comparing two items $i$ and $j$ in any sets of alternatives $A$ containing both $i$ and $j$, the probability of choosing $i$ over $j$ is unaffected by the presence of other alternatives in the set. In specific, the axiom postulates that
$$
{\frac{\PP(i \text{ is preferred in } A)}{\PP(j \text{ is preferred in } A)} = \frac{\PP(i \text{ is preferred in } \{i,j\})}{\PP(j \text{ is preferred in } \{i,j\})}\,}.
$$
This assumption gives rise to a unique parametric choice model, the Bradley-Terry-Luce (BTL) model for pairwise comparisons, and the Plackett-Luce (PL) model for $M$-way rankings $M\ge 2$.

In this paper,  we consider a collection of $n$ items whose true ranking is determined by some unobserved preference scores $\theta_{i}^*$ for $i=1,\cdots,n$.  In this scenario, the BTL model assumes that an individual or a random event ranks item $i$ over $j$ with probability {$\PP(\text{item $i$ is preferred over item $j$}) = e^{\theta_i^*} / (e^{\theta_i^*} + e^{\theta_j^*})$}. The Plackett-Luce model is an expanded version of pairwise comparison, which allows for a more comprehensive $M$-way full ranking, as initially described in \cite{plackett1975analysis}. { This model takes individual preferences into account when ranking a selected subset of items with size $M<\infty$ (among all $n$ items)}, which we represent as $i_1 \succ \dots \succ i_M$. Think of this full ranking as $M-1$ distinct events where $i_1$ is favored over the set ${i_1, \dots, i_M}$, followed by $i_2$ being favored over the set ${i_2,\dots,i_M}$, and so on.
The PL model calculates the probability of a full ranking $i_1 \succ \dots \succ i_M$ using the formula:
$$
    \PP(i_1 \succ \dots \succ i_M) = \prod_{j=1}^{M-1} \bigg[e^{\theta_{i_j}^*} / \sum_{k=j}^M e^{\theta_{i_k}^*}\bigg].
$$
Next, we will give a brief introduction to the literature that has made progress on model estimation and uncertainty quantification for the BTL and the PL models over the parametric model.
%{\red Briefly introduce what we did here}.
%\textbf{Introduce literature on BTL model and PL models statistical estimation and inference}

\subsection{Related literature}

A series of papers studied model estimation or inference based on the BTL or PL models. In the case of the Bradley-Terry-Luce model, its theoretical characteristics were solidified through a minorization-maximization algorithm, as outlined by  \cite{hunter2004mm}.  Additionally, \cite{negahban2012iterative}  developed an iterative rank aggregation algorithm called \emph{Rank Centrality} (spectral method), which can recover the BTL model's underlying scores at an optimal $\ell_2$-statistical rate. Subsequently, \cite{chen2015spectral} used a two-step methodology (spectral method followed by MLE) to examine the BTL model in a context where the comparison graph is based on the Erdős-Rényi model, where every item pair is assumed to have a probability $p$ of being compared, and once a pair is connected, it will be compared for $L$ times. 
Subsequently, under a similar setting with~\cite{chen2015spectral}, \cite{chen2019spectral} investigated the optimal statistical rates for recovering the underlying scores, demonstrating that regularized maximum likelihood estimation (MLE) and the spectral method are both optimal for retrieving top-$K$ items when the conditional number is a constant. They derived $\ell_2$- as well as $\ell_{\infty}$-rates for the unknown underlying preference scores in their study. Furthermore, \cite{chen2022partial} extended the investigation of \cite{chen2019spectral} to the partial recovery scenarios and improved the analysis to un-regularized MLE.

Expanding beyond simple pairwise comparisons, researchers also explored ranking issues through $M$-way comparisons, where $M\ge 2$. The Plackett-Luce model and its variations serve as prominent examples in this line of study, as evidenced by numerous references \citep{plackett1975analysis,guiver2009bayesian,cheng2010label,hajek2014minimax,maystre2015fast,szorenyi2015online,jang2018top}.
In particular, \cite{jang2018top} investigated the Plackett-Luce model within the context of a uniform hyper-graph, where a tuple with size $M$ is compared with probability $p$ and once a tuple is connected or compared in the hyper-graph, it will be compared for $L$ times. By dividing $M$-way comparison data into pairs, they employ the spectral method to obtain the $\ell_{\infty}$- statistical rate for the underlying scores. Additionally, they presented a lower bound for sample complexity necessary to identify the top-$K$ items under the Plackett-Luce model. In a more recent development, under the same model setting, \cite{fan2022ranking} enhanced the findings of \cite{jang2018top}, focusing solely on the top choice. Rather than splitting the comparison data into pairwise comparisons, they applied the Maximum Likelihood Estimation (MLE) and matched the sample complexity lower bound needed to recover the top-\emph{K} items. This contrasts with \cite{jang2018top}, which requires a significantly denser comparison graph or a much larger number of comparisons for their results to hold.

%The aforementioned literature mainly focuses on non-asymptotic statistical consistency results for the underlying scores of compared items through various ranking frameworks. However, the limiting distributional results for ranking models remain highly under-explored. There are several results on the asymptotic distributions for the ranking scores in the BTL model. For instance, \cite{simons1999asymptotics} derive the asymptotic normality of the MLE of the BTL model in the scenario where all pairs of comparison are fully conducted (i.e., $p=1$).   \cite{han2020asymptotic} further extend the results  to the regime  where the comparison graph (Erdős-Rényi random graph) is dense but not fully connected, i.e., $p \gtrsim n^{-1/10}$. In addition, recently, \cite{liu2022lagrangian} propose a Lagrangian debiasing method to derive asymptotic distribution for ranking scores, where they allow sparse comparison graph $p \asymp  1/n$ but require comparison times $L$ to be larger than $n^2.$
 %Moreover, \cite{gao2021uncertainty} utilize a `leave-two-out' trick to derive asymptotic distributions for ranking scores with optimal sample complexity in the regime where the comparison graph is sparse, i.e., $p \asymp  1/n$ up to logarithm terms.

The aforementioned literature primarily concentrated on the non-asymptotic statistical consistency for estimating item scores. However, the results of limiting distribution for ranking models remained largely unexplored. Only a limited number of findings on asymptotic distributions of estimated ranking scores exist under the Bradley-Terry-Luce model, whose comparison graphs are sampled from the Erdős-Rényi graph with connection probability $p$ and each observed pair has the same number of comparisons $L$. For example, \cite{simons1999asymptotics} established the asymptotic normality of the BTL model's maximum likelihood estimator when all comparison pairs are entirely conducted (i.e., $p=1$). \cite{han2020asymptotic} expanded these findings for dense, but not fully connected, comparison graphs (Erdős-Rényi random graphs) where $p \gtrsim n^{-1/10}$. Recently, \cite{liu2022lagrangian} introduced a Lagrangian debiasing approach to derive asymptotic distributions for ranking scores, accommodating sparse comparison graphs with $p \asymp 1/n$ but necessitating comparison times $L \gtrsim n^2$. Furthermore, \cite{gao2021uncertainty} employed a ``leave-two-out'' technique to determine asymptotic distributions for ranking scores, achieving optimal sample complexity and allowing $L=O(1)$ in the sparse comparison graph setting (i.e., $p \asymp 1/n$ up to logarithm terms). \cite{fan2022uncertainty} extended upon existing research by incorporating covariate information into the BTL model. By introducing an innovative proof, they 
%enhanced the MLE estimator's approximation error to its primary term, which 
presented the MLE's asymptotic variance with optimal sample complexity when $p \asymp 1/n$ and $L=\cO(1)$. In the sequel, \cite{fan2022ranking} broadened the asymptotic results of the BTL model to encompass Plackett-Luce (PL) models with $M\ge 2$ again with optimal sample complexity. They also developed a unified framework to construct rank confidence intervals, requiring the number of comparisons $L\gtrsim \textrm{poly}(\log n)$. Moreover, recently, \cite{han2023unified} further extended the settings in \cite{fan2022ranking} by investigating the asymptotic distribution of the MLE, where the comparisons are generated from a mixture of Erdős-Rényi graphs with different sizes or a hypergraph stochastic block model.  

{Finally, we discuss related literature of an important application of our framework: assortment optimization~%Learning those scores and the corresponding ranks is the primary objective of this paper. Such a scenario is also fairly common and fundamental in the context of assortment optimization~
\citep{Talluri2004, Rusmevichientong2012robust, Vulcano2012, Davis2014assortment, Zhang2020assortment, chen2020dynamic, Chen2023robust, Shen2023} which is of great importance in revenue management. %and aimed at selecting a subset of products to make available for purchase in order to maximize the expected revenue.
Specifically, in their settings, each product (including the no-purchase alternative) is also associated with an unknown customer preference score, which can characterize the customers' choice behavior over a set of offered products. Based on the consistently estimated preference scores and the available profit information of each product, various efficient algorithms have been proposed to determine the optimal assortment under different kinds of practical constraints~\citep{Talluri2004, Rusmevichientong2010dynamic, Gallego2014constrained, Sumida2021revenue}. Moreover, uncertainty quantification of the estimated preference scores also enables statistical inference on the properties of the optimal assortment~\citep{Shen2023}. }

%\textcolor{blue}{XXXXXXXX  Learning preference scores also constitutes the fundamental step for assortment optimization which has been  \citet{Talluri2004}   revenue management}

\subsection{Motivations and Contributions}
In this section, we discuss our motivation and problem settings and compare our results with previous literature in different aspects, namely the comparison graph, the connection between the spectral method and MLE, and ranking inferences.

\subsubsection{Comparison graph} 
\label{general_graph}
Previous studies on parametric ranking models mostly require comparison graphs derived from a specific random graph. For example, in the endeavor of understand multiway comparisons (e.g. \cite{jang2016top, fan2022ranking}), it is typically assumed that the comparisons are generated explicitly from a homogeneous graph with a known distribution.
This assumption may pose a challenge in certain contexts. Although practical applications with homogeneous comparisons exist, it is sometimes unrealistic to presume that all comparisons are generated from a known homogeneous distribution. In fact, there are more cases where we see heterogeneous comparisons, where some are compared more often than others and the comparison graph generation process is unknown. We will present Example~\ref{intro_exmp1} as our motivation.

\begin{example}
\label{intro_exmp1} 
There is a sequence of customers buying goods. For the $l$-th customer, according to her preference, her reviewed products are $A_l$ (a choice set), and she finally chose product $c_l\in A_l$ (her top choice). Then the total datasets presented to us is $\{(A_l,c_l)\}_{l=1}^{D}.$ 
If all choice sets are presented to the customers with the same probability and the reviewed number of items are of the same size (such as pairwise comparisons), we say the comparison graph is homogeneous. But if some choice sets, possibly with different sizes, are presented more often to the customers or are chosen arbitrarily based on the customers' preference profiles, then this heterogeneous comparison graph cannot be well approximated by a given random graph model. That is, the comparisons may not follow, say, the BTL or PL models with Erdős-Rényi types of uniform graphs. 
 
\end{example}
\begin{figure}[]
    \centering
       \begin{tabular}{cc}
   \hskip-30pt \includegraphics[height=3.4in,width=3.3in]{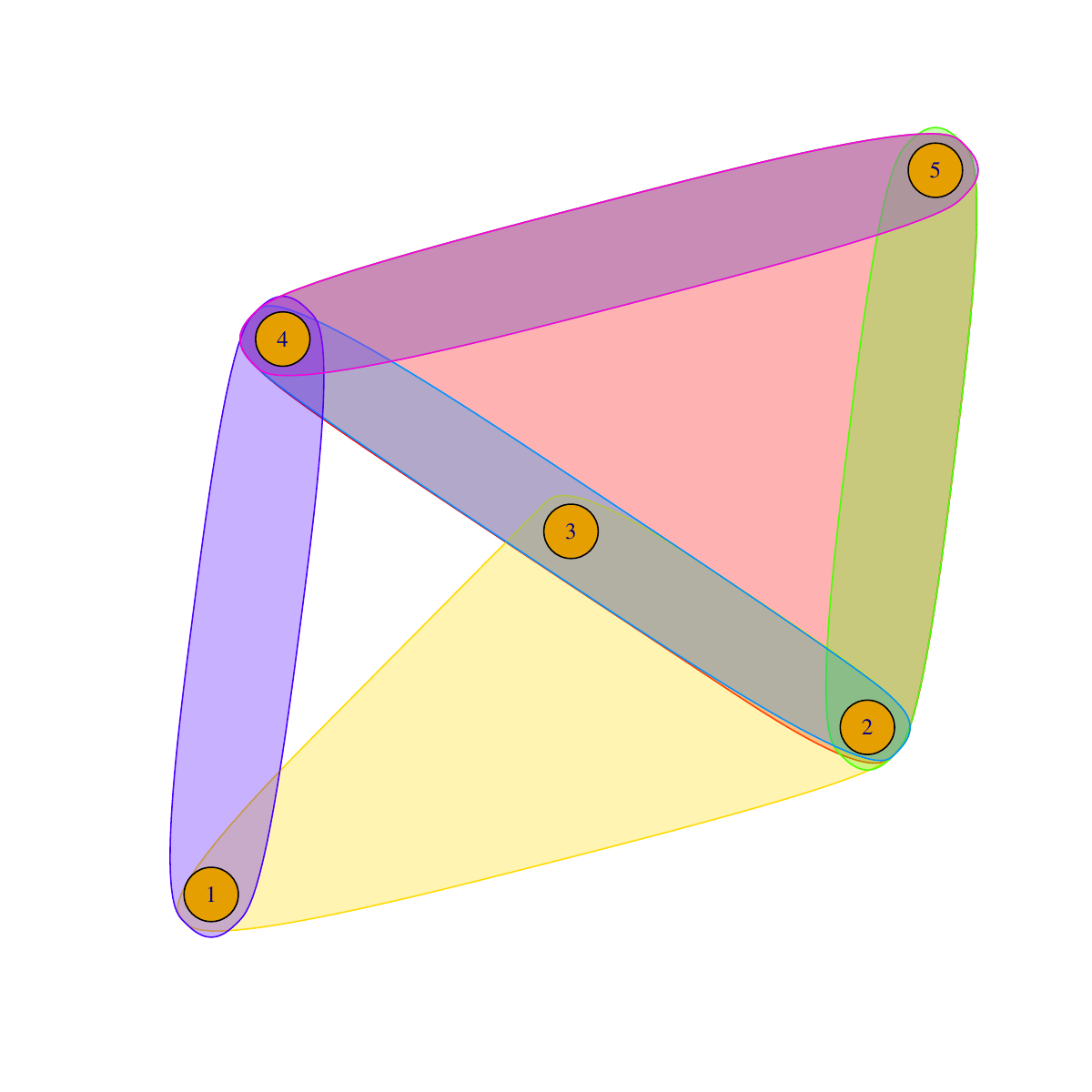}
   &\hskip-10pt \includegraphics[height=3.4in,width=3.5in]{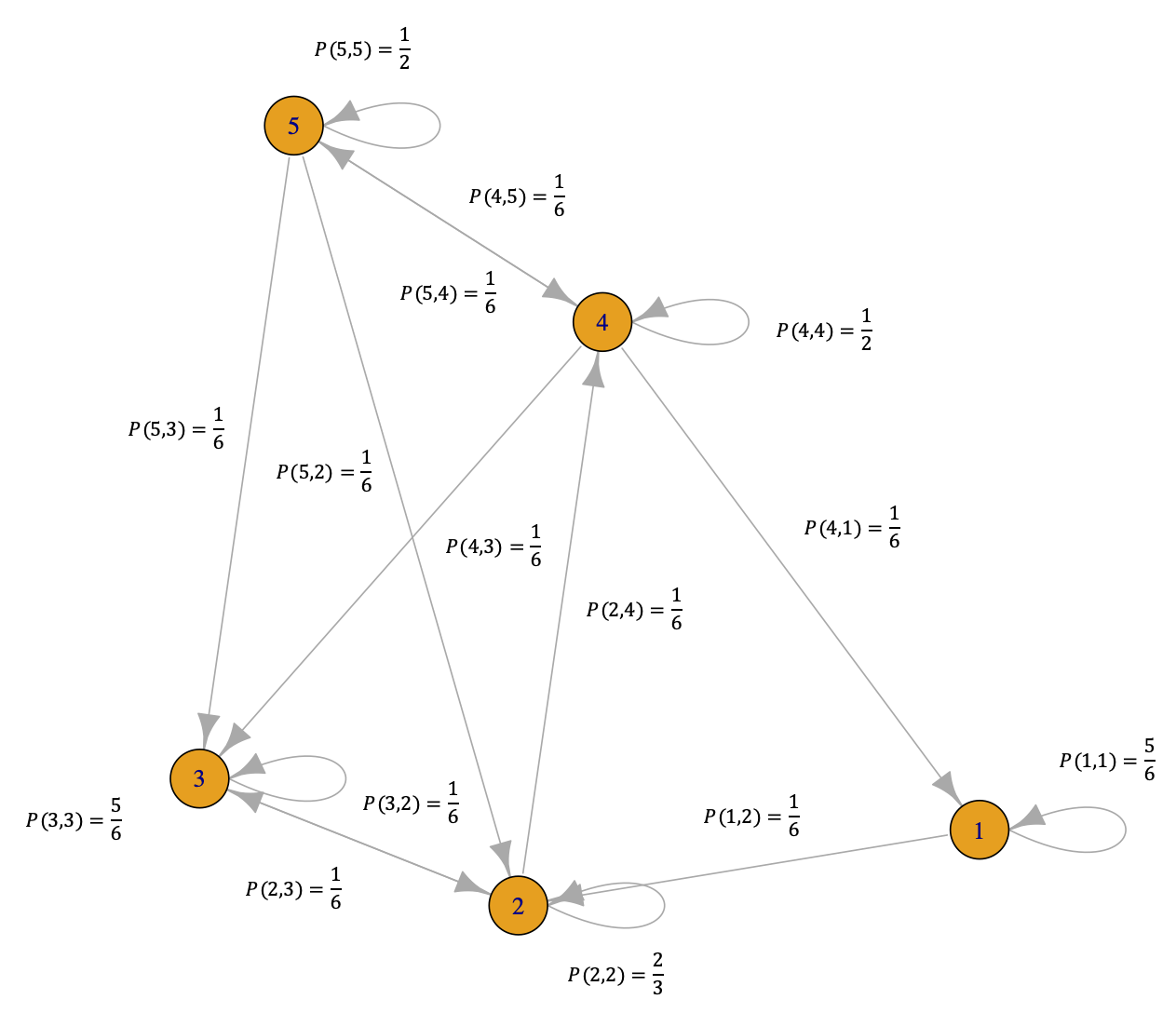}
   \end{tabular}
    \caption{A simple illustration of a collection of $\{c_l, A_l\}_{\ell=1}^{D}$ with $(c_1,A_1)=(3,\{2,3,4,5\})$, $(c_2,A_2)=(2,\{1,2,3\})$, $(c_3,A_3)=(2,\{2,5\})$, $(c_4,A_4)=(4,\{4,5\})$, $(c_5,A_5)=(4,\{2,4\})$, $(c_6,A_6)=(1,\{1,4\})$, $(c_7,A_7)=(5,\{4,5\}).$ Left panel illustrates the choice sets $\{A_l\}_{\ell=1}^7,$ where all nodes inside each $A_l$ are surrounded by an open area with the same color. The right panel presents the comparison-induced Markov transition matrix, whose computation is detailed in Section \ref{sec:mle_con}. A directed edge from $i$ to $j,(j\neq i)$ exists if and only if 
    $i,j,(i\neq j)$ are compared in some $A_l$ and $j$ is the winner ($c_l=j$). 
    %(Without loss of generality, we always assume those five nodes are self-looped).   
    }
    \label{fig:hyper}
\end{figure}

%\textcolor{blue}{We will present a typical example from assortment optimization below as our motivation. To focus on the ranking problem and illustrate the practical heterogeneous comparisons, we do not include the no-purchase choice.} 

{The heterogeneous comparison scheme in Example~\ref{intro_exmp1} above is applicable across a wide range of practical scenarios. For instance, it covers the typical setup of the assortment optimization, wherein the no-purchase alternative is also included in the choice set $A_{l}$.} We give a toy example with $5$ products in Figure~\ref{fig:hyper}, where sizes of $A_l$ are among $\{2,3,4\}$. Due to the heterogeneity,
%across different customers and different products, 
it is unrealistic to assume $A_l$ is of the same size or sampled from an explicit random graph. However, most of the previous works focused on statistical properties under certain ad-hoc random graphs, most commonly the Erdős-Rényi type of uniform graphs, e.g., \cite{chen2015spectral,chen2019spectral,jang2016top,gao2021uncertainty,liu2022lagrangian,fan2022ranking,han2023unified}. 
One interesting piece of research that indeed worked with the fixed comparison graph is \cite{shah2015estimation}, which explored the optimality of MLE in $\ell_2$ estimation with pairwise comparisons. Following this work, \cite{li2022ell} further discussed the optimality of MLE in $\ell_\infty$ error. Still, little has been known about the inference results for general fixed comparison graphs.
%whose underlying randomness can hardly be correctly identified.  

In this paper, we focus on the setting of a general fixed comparison graph, where we circumvent the modeling for the complicated graph-generating process. 
Specifically, we study statistical estimation and uncertainty quantification of the preference scores over a general, heterogeneous choice set via the spectral method. 
In addition, we also study the theoretical performance of the spectral method when we do have a homogeneous random comparison graph, and compare the results. Our results require slightly stronger conditions to handle fixed graph settings since we need to make sure each item has enough information to be ranked. 

For the general setting, we denote the choice preference for the $l$-th comparison as $(c_l, A_l)$, wherein $A_l$ signifies the set of choices with heterogeneous size, which can be either fixed or random, and $c_l \in A_l$ represents the most preferred item in $A_l$. Hence, in the $l$-th comparison, we understand that $c_l$ outranks all other elements within $A_l$. As such, our broadest comparison data is symbolized as $\mathcal D = \{l | (c_l, A_l)\}$. The associated collection of choice sets is denoted as $\mathcal G = \{A_l | l \in \mathcal D\}.$ This framework also contains the Plackett-Luce model as a special case, if we treat the PL model as $M-1$ selections over $M-1$ choice sets. Under mild conditions, we manage to obtain optimal statistical rates for our spectral estimator conditional on the comparison graphs and specify the explicit form of its asymptotic distribution. This gives an efficient solution when one encounters heterogeneous comparison graphs. In addition, since the graph is fixed or conditioned upon, it is not necessary for us to repeat each comparison for $L \geq 1$ times. We can even accommodate situations where a choice set is chosen and compared for just a single time, which is true in many practical applications. %{\red number of comparisons.}

\subsubsection{Connection between the spectral estimator and the MLE}

Our general setting, as introduced in Section \ref{general_graph}, encompasses homogeneous random comparison graphs as a particular instance. The bulk of prior research has centered around the evaluation of the Maximum Likelihood Estimator (MLE) or the spectral method when applied to homogeneous comparisons \citep{chen2015spectral,chen2019spectral,chen2022partial,fan2022uncertainty,fan2022ranking}. Both are effective methods for examining ranking issues within the context of the BTL or PL model. Hence, an interesting question arises: What is the relationship between the MLE and the spectral method?

A handful of studies have offered insights into this question. For instance, \cite{maystre2015fast} identified a link between the MLE and spectral method in multiway comparisons, where the spectral estimator aligns with the MLE through the iterative updating of specific weighting functions in constructing the spectral estimator. This connection was only limited to the first order in the sense that the paper only concerns the convergence property.
Furthermore, \cite{gao2021uncertainty} demonstrated that the asymptotic variance of the spectral method exceeds that of the MLE in pairwise comparisons using the BTL model. However, this discrepancy arises from their choice of a suboptimal weighting function for the spectral estimator.

In our paper, we leverage the homogeneous random comparison graph case (as it is the most popularly studied setting in many previous articles) to illustrate that by employing certain optimally estimated information weighting functions, the asymptotic variance of the spectral estimator matches that of the MLE with multiway comparisons in the PL model. Therefore, the MLE could be considered a ``two-step'' spectral method, where in the first step we consistently estimate the unobserved preference scores, and in the second step we use the proper weighting in the spectral method to achieve an efficient estimator.
It is also noteworthy that we achieve the optimal sample complexity over the sparsest sampling graph up to logarithmic terms.

\iffalse
As for the example with a homogeneous comparison graph, it fits in the following example:

 \begin{example} We are playing a card game with $M$ players. Not all $M$ pairs are compared; each tuple is sampled independently to be compared with probability $p$.
\end{example}

Most of the work studying this multiway (including pairwise) comparison over random graphs is based on either MLE or pairwise comparisons over the spectral method {\red cite}. There is little work in directly studying the multiway encoding with $ M\ge 3$ of the random graph and comparing its asymptotic variance with the MLE estimator. In \cite{gao2021uncertainty}, they prove that the asymptotic variance of the spectral method is larger than the MLE under \emph{pairwise comparison} through the BTL model. However, this lack of efficiency is because they did not choose the optimal encoding function. We show that using the optimal information encoding functions (which can be consistently estimated), the spectral method's asymptotic variances match exactly the MLE's asymptotic variance through multiway comparisons through the more general PL model. Therefore, the MLE can be viewed as a ``two-step" spectral method. It is also worth noting that we also achieve optimal sample complex over the sparsest sampling graph up to logarithmic terms.
\fi

\subsubsection{Ranking inferences: one sample vs two samples}
As another contribution, we also study several rank-related inference problems. We have the following motivating example:
 \begin{example} 
 First, we consider the one-sample inference problem. Consider a group of candidate items $\{1,\cdots,n\}$ and one observed dataset on their comparisons; we are interested in
 \begin{itemize}
 \item Building the confidence intervals for the ranks of certain items $\{r_1,\cdots, r_m\}$ of our interest.
 \item Testing whether a given item $m$ is in the top-\emph{K} set, which includes $K$ best items.
  \end{itemize}
 Second, we consider the two-sample inference problem. For two groups of datasets of the same list of items $\{1,\cdots,n\}$, we are interested in
 \begin{itemize}
     \item Testing whether the rank of a certain item $m$ is preserved across these two samples (e.g. groups or time periods).
     \item Testing whether the top-\emph{K} sets have changed or not.
 \end{itemize}
\end{example}

%There are many real-life ranking examples, such as universities, sports teams, web pages, etc. Most of them only provide first-order information, where only the ranking results are displayed without providing any uncertainty quantification. When the underlying score of two items is the same, under BTL, it has a probability of $50\%$ that one is better than another. Therefore, presenting rankings between these two items would be unreliable since their underlying scores are indistinguishable. This makes building confidence intervals for ranks necessary. %Second, one may also be interested in whether a specific item is indeed among the top-K selection set.

%In terms of these important questions, we provide a general framework that provides an efficient solution to the questions listed in Example 1.2 over the heterogeneous comparison graphs and improves the sample complexity of several previous arts. For example,  for the one-sample inference problem, \cite{liu2022lagrangian} requires $L\ge n^2$ to conduct statistical inference, and \cite{fan2022ranking} further improves this complexity to $L\gtrsim \textrm{poly}(\log n).$ In this paper, our framework allows $L=1$ for the one-sample problem. (i.e., requires optimal sample complexity). Moreover, the two-sample ranking inference has not been studied before. It can be widely used in many real-world scenarios like policy evaluation, comparing treatment effects, detecting change points, etc. This paper offers a general framework to study the two-sample ranking inference problem with optimal sample complexity.

Rankings are ubiquitous in real-world applications, for instance, in the evaluation of universities, sports teams, or web pages. However, most of these rankings provide only first-order information, presenting the results without offering any measure of uncertainty. For example, under the BTL model, when two items have equivalent underlying scores, there's a $50\%$ probability of one being ranked higher than the other. Thus, rankings between these two items could be unreliable due to their indistinguishable underlying scores, emphasizing the necessity for confidence intervals in rankings.

Given these critical considerations, our study offers a comprehensive framework that efficiently solves the problems outlined in Example 1.2 over heterogeneous comparison graphs. Additionally, our approach enhances the sample complexity of several previous works. For instance, when restricting our general framework to homogeneous random comparison graphs, regarding the one-sample inference, \citet{liu2022lagrangian} required $L \gtrsim n^2$ to carry out the statistical inference, while~\citet{fan2022ranking} further improved this requirement to $L \gtrsim \textrm{poly}(\log n)$. In our paper, our framework can allow $L=\cO(1)$ and even $L=1$. Furthermore, two-sample ranking inference, which can be widely applied in real-world scenarios like policy evaluation, treatment effect comparison, change point detection, etc., has not been previously studied. Our paper also introduces a general framework for studying the two-sample ranking inference problems, again offering optimal sample complexity. %{\red fixed graph.}

\subsubsection{{Theoretical contributions}}
{
We build up our theoretical analyses based on some previously developed techniques from \cite{gao2021uncertainty} and \cite{chen2019spectral}. However, our proofs have the following novelty:
In those two papers and other papers with a random comparison graph, graph randomness and ranking outcome randomness are typically intertwined in the analysis. We will separate them and reveal the proper quantities of interest to summarize the information in a fixed comparison graph. 
We study the connection between these quantities and the spectral ranking performance, and provide sufficient conditions under the fixed graph for valid ranking inference. This theoretical attempt has not previously been seen in the literature. In addition, all our analyses allow varying comparison sizes and an arbitrary number of repetitions of each comparison set. This significantly broadens the applicability of our proposed methodology, as in practice, a lot of ranking problems contain non-repeating comparisons of different numbers of items. We also work on the theory when we also have graph randomness. We realize that the homogeneity of sampling each comparison tuple can lead to more relaxed assumptions. With more relaxed conditions, we clearly show where and how we can achieve an improved performance guarantee (see the assumption and proof of Theorem \ref{thm_consist_normality_rand}). This part highlights the difference between fixed and random graphs and provides more theoretical insights into the role of graph randomness in spectral ranking. 
}

\subsection{Roadmap and notations}

In Section \ref{sec:algo}, we set up the model and introduce the spectral ranking method. Section \ref{sec:ranking_estimation} is dedicated to the examination of the asymptotic distribution of the spectral estimator, based on fixed 
comparison graphs and random graphs with the PL model, respectively. Within the same section, we also introduce the framework designed for the construction of rank confidence intervals and rank testing statistics for both one-sample and two-sample analysis. Section \ref{sec:ranking_inference} details the theoretical guarantees for all proposed methodologies. Sections \ref{sec:numerical}  and \ref{sec:realdata} contain comprehensive numerical studies to verify theoretical results and two real data examples to illustrate the usefulness of our ranking inference methods. Finally, we conclude the paper with some discussions in Section \ref{sec:discussion}. All the proofs are deferred to the appendix. 

Throughout this work, we use $[n]$ to denote the index set $\{1,2,\cdots,n\}.$ For any given vector $\xb\in \RR^{n}$ and $q\ge 0$, we use $\|\xb\|_{q}=(\sum_{i=1}^{n}|x_i|^q)^{1/q}$ to represent the vector $\ell_q$ norm.  For any given matrix $\Xb\in \RR^{d_1\times d_2}$, we use $\|\cdot\|$ to denote the spectral norm of $\Xb$ and  write $\Xb\succcurlyeq 0$ or $\Xb\preccurlyeq 0$ if $\Xb$ or $-\Xb$ is positive semidefinite. For event $A$, $1(A)$ denotes an indicator which equals $1$ if $A$ is true and $0$ otherwise. For two positive sequences $\{a_n\}_{n\ge 1}$, $\{b_n\}_{n\ge 1}$, we write $a_n=\cO(b_n)$ or $a_n\lesssim b_n$ if there exists a positive constant $C$ such that $a_n / b_n\le C$ and we write $a_n=o(b_n)$ if $a_n/b_n\rightarrow 0$. In addition, $\cO_p(\cdot)$ and $o_p(\cdot)$ share similar meanings as $\cO(\cdot)$ and $o(\cdot),$ respectively, but these relations hold asymptotically with probability tending to 1. Similarly we have $a_n=\Omega(b_n)$ or $a_n\gtrsim b_n$ if $a_n/b_n\ge c$ with some constant $c>0$. We use $a_n=\Theta(b_n)$ or $a_n\asymp b_n$ if $a_n=\cO(b_n)$ and $a_n=\Omega(b_n)$.  For two random variables $A_n, B_n$, if we write $A_n\approx B_n$, it holds that $A_n-B_n=o(1)$ with probability goes to $1$. Given $n$ items, we use $\theta_i^*$ to indicate the underlying preference score of the $i$-th item. 
Define $r : [n] \to [n]$ as the rank operator on the $n$ items, which maps each item to its population rank based on the preference scores. We write the rank of the $i$-th item as $r_i$ or $r(i)$. By default, we consider ranking from the largest score to the smallest score.

\section{Multiway Comparison Model and Spectral Ranking} \label{sec:algo}
We first introduce a general discrete choice model, which encompasses the classical Plackett-Luce model as well as fixed comparison graph scenario. 

\subsection{Discrete choice model} 
We assume there are $n$ items to be ranked. According to Luce's choice axiom~\citep{Luce1959}, the preference scores of a given group of $n$ items can be parameterized as a vector $(\theta_{1}^{*}, \ldots, \theta_{n}^{*})^{\top}$ 
such that $\PP(i \textrm{ wins among } A)=e^{\theta_i^*}/(\sum_{k\in A}e^{\theta_k^*})$ for any choice set $A$ and item $i\in A$. Since the parameters are only identifiable up to a location shift, without loss of generality, we assume $\sum_{i = 1}^{n} \theta_{i}^{*} = 0$ for identification. 
%Next, we formally introduce the detailed comparison model we consider in this paper.
%Next, we introduce the comparison model we studies in our paper:  we consider the general comparison model, where we are given a collection of comparisons and outcomes $\{(c_l,A_l)\}_{l\in \cD}$. Here $c_l$ denotes the selected item over the choice set $A_l$ with probability $e^{\theta_i^*}/(\sum_{k\in A_l}e^{\theta_k^*})$.
We consider the general comparison model, where we are given a collection of comparisons and outcomes $\{(c_l,A_l)\}_{l\in \cD}$. Here $c_l$ denotes the selected item over the choice set $A_l$ with probability $e^{\theta_{c_l}^*}/(\sum_{k\in A_l}e^{\theta_k^*})$.
 \begin{remark} 
 \label{rem2.1}
 	This general comparison model contains many well-known special cases. 
 	\begin{itemize}
 		\item For the Bradley-Terry-Luce (BTL) model, it is easy to set $A_l$ as the pair being compared every time. If each pair is compared for $L$ times independently, we just need to write the outcomes as $(c_l,A_l)$ and re-index $l$.
 		\item For the Plackett-Luce (PL) model, we have obtained the full ranking of a choice set $B=\{i_1,\cdots,i_B\}$. The probability of observing a certain ranking becomes 
   \iffalse
 		\begin{align*}
 		\PP(i_1 \succ i_2 \succ \cdots \succ i_B)	%&=\PP(i_1 \textrm{ wins among } \{i_1, i_2, \cdots, i_B\})\cdot \PP(i_2 \textrm{ wins among } \{i_2, \cdots, i_B\})\cdots \\
        & =\frac{e^{\theta_{i_1}^*}}{\sum_{j=1}^B e^{\theta_{i_j}^*}}\cdot \frac{e^{\theta_{i_2}^*}}{\sum_{j=2}^B e^{\theta_{i_j}^*}}\cdots \frac{e^{\theta_{i_{B-1}}^*}}{\sum_{j=B-1}^B e^{\theta_{i_j}^*}}.
 		\end{align*}
   \fi
 {  \begin{align*}
   \PP(i_1 >i_2 >\cdots>i_B)&=\PP( i_1 \textrm{ wins among } C_1\given C_1=B)\cdot \PP( i_2 \textrm{ wins among } C_2\given C_2=B\{-i_1\})\cdots\\&\qquad  \cdot\PP(i_{B-1} \textrm{ wins among } C_{B-1}\given C_{B-1}=B\{-i_1,\cdots,-i_{B-2}\})\\&\qquad =\frac{e^{\theta_{i_1}^*}}{\sum_{j=1}^B e^{\theta_{i_j}^*}}\cdot \frac{e^{\theta_{i_2}^*}}{\sum_{j=2}^B e^{\theta_{i_j}^*}}\cdots \frac{e^{\theta_{i_{B-1}}^*}}{\sum_{j=B-1}^B e^{\theta_{i_j}^*}},
   \end{align*}
   where $C_i,i\ge 1$ is the $i$-th comparison set and $B\{-i_1,\cdots,-i_M\}$ denotes the set of remained items after removing $\{i_1,\cdots,i_M\}.$}
 		These comparison results can also be decomposed into the comparisons: $$\{(i_1, B),(i_2,B\{-i_1\}),\cdots,(i_{B-1}, B\{-i_1,\cdots,-i_{B-2}\})\}.$$
        \item With fixed comparison graphs, $\{A_l, l \in \cD\}$ are given and hence have no randomness, so the comparison results in $c_l$ are assumed independent. {In contrast, with a random comparison graph, such as in the PL model, $A_l$ may be dependent. For instance, $(\theta_{i_1}, B)$ and $(\theta_{i_2},B\{-i_1\})$ are dependent as $B\{-i_1\}$ depends on the winner of the first comparison $i_1$.} Therefore, we have to explicitly lay out the random process assumption for comparison generation in order to study the theoretical properties in a case-by-case manner. 
        \end{itemize}
 	\end{remark}

\iffalse

{\red Extend PL model in a few ways: comparison graph can be fixed or random, fixed size comparison (varying $M$), no assumption on $L$ ($L$ can differ and can be as low as $1$), can have partial rankings}
\fi

Recall that the general comparison data is denoted as $\{ (c_l, A_l)\}_{l \in \cD}$. The corresponding collection of choice sets is $\mathcal G = \{A_l | l \in \mathcal D\}$. When we only have pairwise comparisons, $|A_l|=2$ and $\mathcal G$ represents the set of all edges we have compared. But in a general setting, $A_l$ can have different sizes and we denote $M = \max_{l \in \mathcal D} |A_l|<\infty$ as the maximal size of the comparison hyper-graph edge. Also if we have $L$ independent comparisons of the same comparison hyper-edge $A_l$, we use different $l$ to indicate the comparison. So in $\mathcal G$, the hyper-edge $A_l$ may be compared for multiple times with different outcomes $c_l$.

Throughout this paper, we consider using the spectral method on the preference data based on multiway comparisons. We will first focus on the fixed comparison graph and then consider commonly-used random comparison graph structures. Notice that no matter whether the comparison graph $\mathcal G$ is fixed or random, our spectral ranking methodology will be conditional on $\mathcal G$, which is observed in practice. The underlying model for generating $\mathcal G$ can be very general: it can be given, or random based on the Erdős–Rényi random graph with the same probability $p$, or more generally induced from some other comparison rules, which could even cause some $A_l$ dependent. For example, if we view each comparison of the PL model as $M-1$ pairwise comparisons involving top 1 vs top 2, top 2 vs top 3, ..., top $M-1$ vs top $M$. Then the resulting comparison data, denoted as $(c_l=i_k, A_l=\{i_k, i_{k+1}\})$ for $k=1,\dots, M-1$, are dependent (even the definition of $A_l$ depends on the complete comparison result).

\subsection{Spectral ranking}
\label{sec:mle_con} 
In the spectral method, we formally define a Markov chain, denoted as $M=(S,P)$. Here, $S$ signifies the collection of $n$ states corresponding to the $n$ items to be compared, represented as vertices of a directed comparison graph. And $P$ constitutes the transition matrix defined below. This matrix oversees transitions amongst the states by representing whether any two particular states within $S$ are capable of being connected via non-zero transition probabilities.

Define two index sets $\mathcal W_j, \mathcal L_i$ for comparisons, with $j$ as the winner and $i$ as the loser:
\begin{align*}
\mathcal W_j = \{l \in \cD| j \in A_l, c_l = j\}\,, \quad\quad
\mathcal L_i = \{l \in \cD | i \in A_l, c_l \ne i\}\,.
\end{align*}
So their intersection for $i\ne j$ gives all situations where $i,j$ are compared and $j$ wins, i.e., $\mathcal W_j \cap \mathcal L_i = \{l \in \cD | i,j \in A_l, c_l = j\}$. Define the transition matrix $P$ with transition probability 
\[
P_{ij} = \begin{cases}
      \frac{1}{d} \sum_{l \in \mathcal W_j \cap \mathcal L_i} \frac{1}{f(A_l)}\,, & \text{if $i\ne j$,}\\
      1 - \sum_{k: k\ne i} P_{ik}\,, & \text{if $i = j$.}
    \end{cases}  
\]
Here, $d$ is chosen to be large enough so that the diagonal element is non-negative, but not too large to give enough transition probability. When the comparison graph is random, we choose $d$ to make non-negative diagonal elements with probability approaching $1$ by studying the concentration inequality for $\sum_{k: k\ne i} P_{ik} $. Here, $f(A_l)>0$ is a weighting function to encode the total information in the $l$-th comparison. A natural choice is $f(A_l) = |A_l|$ giving more weight to hyper-edges with a smaller number of compared items. We will discuss later the optimal choice of $f(\cdot)$.  

When $i\ne j$, $P_{ij}$ can also be written as 
\begin{align*}
    P_{ij} = \frac{1}{d} \sum_{l \in \mathcal{D}} 1(i, j \in A_{l}) 1(c_{l} = j) \frac{1}{f(A_{l})}. 
\end{align*}
%{\blue As an illustrating example if we let $c_1=2,c_2=2,c_3=4,c_4=5,c_5=2$ in comparisons presented in Figure \ref{fig:hyper}, the transition matrix can be simply calculated as }s
Conditioning on $\mathcal G$, the population transition is
\begin{align*}
    P_{ij}^* = E[P_{ij}|\mathcal G] = \begin{cases}
      \frac1d \sum\limits_{l \in \mathcal D} 1(i, j \in A_{l}) \frac{e^{\theta_j^*}}{\sum_{u\in A_l} e^{\theta_u^*}} \frac{1}{f(A_l)} \,, & \text{if $i\ne j$,}\\
      1 - \sum_{k: k\ne i} P_{ik}^*\,, & \text{if $i = j$.}
    \end{cases}  
\end{align*}
Let 
\[
\pi^* = (e^{\theta_1^*}, \dots, e^{\theta_n^*}) / \sum_{k=1}^n e^{\theta_k^*}\,.
\]
Note that both $\sum_{u\in A_l} e^{\theta_u^*}$ and $f(A_l)$ in the denominator are symmetric with respect to $\theta_i^*,\theta_j^*$ as long as both $i,j$ belong to $A_l$. So we have $P_{ij}^* \pi_i^* = P_{ji}^* \pi_j^*$. This is the so-called detailed balance that leads to $\pi^*$ being the stationary measure of the Markov chain with the above population transition for any $f(\cdot)$. That is ${\pi^*}^{\top} P^* = {\pi^*}^{\top}$, namely, $\pi^*$ is the top-left eigenvector of $P^*.$ %If our comparison graph $\{A_l\}_{l\in \cD}$ is connected, $\pi^*$ is a unique stationary distribution. %If our comparison graph is connected, then the Markov chain generated by $P^*$ is ergodic, which ensures the existence of a unique stationary $\pi^*.$ 

The spectral method estimates $\pi^*$ by using the stationary measure $\hat\pi$ of the empirical transition $P$, namely, 
\[
\hat\pi^{\top} P = \hat\pi^{\top}\,. 
\]
%This can be achieved by computing the top-left eigenvector of $P$.
Note that if we consider the directed graph induced by $P$ to be strongly connected, this implies that the Markov chain it generates will be ergodic, which ensures the existence of a unique stationary distribution $\hat\pi$ as defined above. 
%This stationary distribution complies with the following relation: $\pi^\top P=\pi^\top$, as per the findings of \cite{maystre2015fast}.
Consider the toy example in Figure \ref{fig:hyper}, if we naively choose $f(\cdot)=1$ as a constant weighting function, it is not hard to count the times that $j$ beats $i$ and fill the value into the transition matrix $P_{ij}$ (divided by $d$). In the right panel, the transition probabilities are calculated with $d=6$ to guarantee the self-loop transition happens with a positive probability. For this $P$, the stationary distribution is $\hat\pi = (0.199, 0.531, 0.796, 0.199, 0.066)^\top$, meaning that the estimated ranking of preference scores of the 5 products are $3 \succ 2 \succ 1 = 4 \succ 5$.

Finally given the identifiability condition of $1^\top \theta^* = 0$, we can estimate $\theta_i^*$ by 
\begin{equation}
\tilde\theta_i := \log \hat\pi_i - \frac1n \sum_{k=1}^n \log \hat\pi_k \,. \label{eq:spectrla_estimator}
\end{equation}
%{\red MLE practical computation} : drawbacks: store the data, choice of stepsizes.
It is worth mentioning that the spectral estimator is easier to compute in practice, by only requiring one-step eigen-decomposition.  Indeed, we need only the eigenvector that corresponds to the largest eigenvalue, which can even be computed very fast by the power method.  In comparison, the MLE is typically computationally heavier in terms of data storage and step size determination during the implementation of the gradient descent algorithm.

%We represent a Markov chain by an directed comparison graph as $M=(S,P)$, where $S$ is the set of states (vertices) and $P$ is transition matrix over the states (representing whether any two states of $V$ can be connected with positive rates). When $G$ is strongly connected, its induced Markov chain is ergodic and admit a unique stationary distribution $\pi$ satisfies $\pi^\top P=\pi^\top$ \citep{maystre2015fast}.
% If the undirected graph is connected, then there exist a random over this graph with stationary distribution $
%{\blue General Comparison, PL model.}

\section{Ranking Inference Methodology}
\label{sec:ranking_estimation}
In this section, we study the inference methodology for the spectral estimator for the underlying scores $\{\theta_i^*\}_{i\in [n]}$ of $n$ items. To be specific, we need to establish the statistical convergence rates and asymptotic normality for $\tilde\theta_i$.

\subsection{Uncertainty quantification with fixed comparison graph}
\label{Section_uncertainty_quantification_fixed}
For the estimation of $\pi^*$, we use the following two approximations, which we will justify later to be accurate enough so as not to affect the asymptotic variance. Let us first focus on our intuition. Firstly, we have
\[
\hat\pi_i = \frac{\sum_{j:j\ne i} P_{ji}\hat\pi_j}{\sum_{j:j\ne i} P_{ij}} \approx \frac{\sum_{j:j\ne i} P_{ji}\pi_j^*}{\sum_{j:j\ne i} P_{ij}} =: \bar\pi_i\,.
\]
Equivalently, 
\begin{equation}
\frac{\hat\pi_i - \pi_i^*}{\pi_i^*} \approx  \frac{\bar\pi_i - \pi_i^*}{\pi_i^*} = \frac{\sum_{j:j\ne i} (P_{ji}\pi_j^* - P_{ij}\pi_i^*)}{\pi_i^* \sum_{j:j\ne i} P_{ij}}\,.
\label{approx1}
\end{equation}
Secondly, the denominator above can be approximated by its expected value so that \eqref{approx1} can further be approximated as
\begin{equation}
 %\frac{\sum_{j:j\ne i} P_{ji}\pi_j^* - P_{ij}\pi_i^*}{\pi_i^* \sum_{j:j\ne i} P_{ij}} \approx 
% \frac{\sum_{j:j\ne i} P_{ji}\pi_j^* - P_{ij}\pi_i^*}{\sum_{j:j\ne i} E[P_{ij}|\mathcal G] \pi_i^*} 
J_i^* := \frac{\sum_{j:j\ne i} (P_{ji} e^{\theta_j^*} - P_{ij} e^{\theta_i^*})}{\sum_{j:j\ne i} E[P_{ij}|\mathcal G] e^{\theta_i^*}}  \,,
\label{approx2}
\end{equation}
by using $\pi_i^* \propto e^{\theta_i^*}$.
We will rigorously argue that the asymptotic distributions of $\frac{\hat\pi_i - \pi_i^*}{\pi_i^*}$ and $J_i^*$ are identical. For now, let us look at the asymptotic distribution of $J_i^*$. Obviously, it is mean zero due to the detailed balance: $E[P_{ji}|\mathcal G] \pi_j^* = E[P_{ij}|\mathcal G] \pi_i^*$. The denominator of $J_i^*$ is a constant and can be explicitly written out as follows:
\begin{equation}
\label{definition_tau}
    {\tau_{i}(\theta^{*}) :
    = \sum_{j : j \neq i} E[P_{ij}|\mathcal{G}] e^{\theta_i^*} } 
    = \frac{1}{d} \sum_{l \in \mathcal{D}} 1(i \in A_{l}) \left(1 - \frac{e^{\theta_{i}^{*}}}{\sum_{u \in A_{l}} e^{\theta_{u}^{*}}}\right) \frac{e^{\theta_{i}^{*}}}{f(A_{l})}.  
\end{equation}
Thus, $J_i^*$ can be expressed as 
\begin{equation}\label{eq:J_decomp_fixed}
J_i^* = \frac{1}{d \tau_{i}} \sum_{l\in\mathcal D} \frac{1(i \in A_{l})}{f(A_l)} \bigg( 1(c_l = i) \sum_{u\in A_l, u\ne i} e^{\theta_u^*} - e^{\theta_i^*} \sum_{u\in A_l, u\ne i} 1(c_l = u) \bigg) =: \frac{1}{d} \sum_{l\in\mathcal D} J_{il}(\theta^*)\,,
\end{equation}
where $\tau_{i}$ is short for $\tau_{i}(\theta^{*})$.
Since each $(c_l, A_l)$ is independent in the fixed graph setting (see Remark \ref{rem2.1} for discussions), the variance of $J_i^*$ is
%\begin{align*}
%\mbox{Var}\bigg(\sum_{j:j\ne i} P_{ji}e^{\theta_j^*} - P_{ij}e^{\theta_i^*} | \mathcal G \bigg) & = \frac{1}{d^2} \sum\limits_{l \in \mathcal D} \sum\limits_{h=1}^{m-1} \sum\limits_{k_1 < \dots < k_h: k_s \ne i, \forall s \in [h]} 1(A_l = \{i,k_1,\dots,k_h\}) \cdot \\
%& \quad\quad\quad\quad \mbox{Var}\bigg(1(c_l = i) \sum_{u\in A_l, u\ne i} e^{\theta_u^*} - \sum_{u\in A_l, u\ne i} 1(c_l = u) e^{\theta_i^*}\bigg) \frac{1}{f^2(A_l)} & \\
%& = \frac{1}{d^2} \sum\limits_{l \in \mathcal D} \sum\limits_{h=1}^{m-1} \sum\limits_{k_1 < \dots < k_h: k_s \ne i, \forall s \in [h]} 1(A_l = \{i,k_1,\dots,k_h\}) \bigg(\frac{\sum_{u\in A_l} e^{\theta_u^*} - e^{\theta_i^*}}{f(A_l)}\bigg) \frac{e^{\theta_i^*}}{f(A_l)}\,.
%\end{align*}
\begin{align}
\mbox{Var}&(J_i^* | \mathcal G) 
 = \frac{1}{d^2 \tau_{i}^2} \sum\limits_{l \in \mathcal D} \frac{1(i \in A_{l})}{f^2(A_l)} \cdot \mbox{Var}\bigg(1(c_l = i) \sum_{u\in A_l, u\ne i} e^{\theta_u^*} - e^{\theta_i^*} \sum_{u\in A_l, u\ne i} 1(c_l = u) \bigg) & \nonumber\\
%& = \frac{1}{d^2 \tau_{i}^2} \sum\limits_{l \in \mathcal D} 1(i \in A_{l}) \bigg(\frac{\sum_{u\in A_l} e^{\theta_u^*} - e^{\theta_i^*}}{f(A_l)}\bigg) \frac{e^{\theta_i^*}}{f(A_l)}\\
& = \Bigg({\sum\limits_{l \in \mathcal D} 1(i \in A_{l}) \frac{(\sum_{u\in A_l} e^{\theta_u^*} - e^{\theta_i^*})e^{\theta_i^*}}{f(A_l)^2}}\bigg) \bigg/ {\bigg[\sum\limits_{l \in \mathcal D} 1(i \in A_{l}) \bigg(\frac{\sum_{u\in A_l} e^{\theta_u^*} - e^{\theta_i^*}}{\sum_{u\in A_l} e^{\theta_u^*}}\bigg) \frac{e^{\theta_i^*}}{f(A_l)}\bigg]^2}\, , \label{eq:fan1}
\end{align}

\iffalse
{\red 
in which the variance, by using the fact that all cross terms are zero, is calculated as
\begin{align*}
& \EE\bigg[ 1(c_l = i) (\sum_{u\in A_l, u\ne i} e^{\theta_u^*})^2 + (e^{\theta_i^*})^2 \sum_{u\in A_l, u\ne i} 1(c_l = u)\bigg] \\
&\quad =
\frac{e^{\theta_i^*}}{\sum_{u\in A_l}e^{\theta_u^*}}\cdot (\sum_{u\in A_l, u\ne i} e^{\theta_u^*})^2+(e^{\theta_i^*})^2 \frac{\sum_{u\in A_l, u\ne i} e^{\theta_u^*}}{\sum_{u\in A_l}e^{\theta_u^*}}\\&\quad = e^{\theta_i^*}\cdot (\sum_{u\in A_l} e^{\theta_u^*}-e^{\theta_i^*}).
\end{align*}}
\fi

A few important comments are in order. Firstly, 
%the choice of $f(\cdot)>0$ does not affect the rate of convergence, but it  affects the estimation efficiency. 
the function $f$ achieves the minimal variance in \eqref{eq:fan1} when $f(A_l) \propto \sum_{u\in A_l} e^{\theta_u^*}$ due to 
simply applying the Cauchy-Schwarz inequality.
%a simple quadratic optimization or applying the Cauchy-Schwarz inequality so that the denominator in \eqref{eq:fan1} is upper bounded by
%\begin{equation}\label{eq:fan2}
%\sum\limits_{l \in \mathcal D} 1(i \in A_{l}) \frac{(\sum_{u\in A_l} e^{\theta_u^*} - e^{\theta_i^*}) e^{\theta_i^*}}{(\sum_{u\in A_l} e^{\theta_u^*})^2} \; \sum\limits_{l \in \mathcal D} 1(i \in A_{l}) \frac{(\sum_{u\in A_l} e^{\theta_u^*} - e^{\theta_i^*})e^{\theta_i^*}}{f(A_l)^2} 
%\end{equation}
%and hence the variance is lower bounded by the first factor in \eqref{eq:fan2} 
%with equality holds if $f(A_l) \propto \sum_{u\in A_l} e^{\theta_u^*}$. 
%In practice, however, we do not know $\theta_u^*$ beforehand. Therefore, we could consider a two-step approach to estimate $f(A_l) = \sum_{u\in A_l} e^{\theta_u^*}$ by plugging in our first-step estimate $\hat\theta_u^{(\text{initial})}$ and run the spectral method again with this optimal weighting to get the updated $\hat\theta_u^{(\text{final})}$, which enjoys the improved estimation accuracy. 
Actually, \cite{maystre2015fast} showed that when $f(A_l) = \sum_{u\in A_l} e^{\theta_u^*}$, spectral method estimator converges to the MLE up to the first order.
Secondly, under the situation of pairwise comparison in the BTL model, each $(c_l, A_l)$ is independent and we assume in $\mathcal D$ each pair $(i,j)$ is either compared for $L$ times (denoted as $\tilde A_{ij} = 1$) or never compared (denoted as $\tilde A_{ij} = 0$). Further assuming $f(A_l) = |A_l| = 2$, we have %$A_l=\{(i,j),\tilde{A}_{ij}=1\}$ and
\[
\mbox{Var}(J_i^* | \mathcal G) = \frac{1}{L} \bigg({\sum_{j:j\ne i} \tilde A_{ij} e^{\theta_i^*} e^{\theta_j^*}}\bigg) \bigg/ {\bigg[ \sum_{j:j\ne i} \tilde A_{ij} \frac{e^{\theta_i^*} e^{\theta_j^*}}{e^{\theta_i^*} +e^{\theta_j^*}}\bigg]^2} \,.
\]
This exactly matches with Proposition 4.2 of \cite{gao2021uncertainty}. In addition, if we choose $f(A_l) = \sum_{u\in A_l} e^{\theta_u^*}$, we get the most efficient variance just like the MLE variance given in Proposition 4.1 of \cite{gao2021uncertainty}, which is
\[
\mbox{Var}(J_i^* | \mathcal G) = \bigg(L \cdot \sum_{j:j\ne i} \tilde A_{ij} \frac{e^{\theta_i^*} e^{\theta_j^*}}{(e^{\theta_i^*} +e^{\theta_j^*})^2}\bigg)^{-1} \,.
\]

With the above discussion and computation, after some additional derivations, we come to the conclusion that $\tilde\theta_i - \theta_i^*$ has the same asymptotic distribution as $\frac{\hat\pi_i - \pi_i^*}{\pi_i^*}$ and $J_i^*$. Therefore,
\begin{align*}
   \mbox{Var}(J_i^* | \mathcal G)^{-1/2} (\tilde\theta_i-\theta_i^*)\Rightarrow N(0,1),
\end{align*}
for all $i\in[n]$. Based on this result, we can make inference for $\tilde\theta_i$ and additionally the rank of item $i$ (see Section \ref{sec:onesam_twosided_inf}). The rigorous derivations for this conclusion will be provided in Section \ref{sec:ranking_inference}.

\subsection{Uncertainty quantification for the PL model} \label{Section_uncertainty_quantification_random}

%{\red Q: by MY: do we need to specify somewhere in the paper that if we treat the graphs realized from random graph as fixed, we reduce to the fixed graph setting? But if we have more information on the randomness, we can have more precise results?}

In this section, we consider the case of a random comparison graph, which could potentially lead to dependent $(c_l, A_l)$, unlike the case of the fixed graph in the previous section. Note that since the random comparison graph generation can be arbitrary, we cannot work with each case. As an illustrating example, we consider the classical PL model from the Erdős–Rényi graph here for two reasons. Firstly, this is the most popularly studied random comparison model in the ranking literature \citep{chen2015spectral,chen2022partial,han2020asymptotic,liu2022lagrangian,gao2021uncertainty,fan2022uncertainty,fan2022ranking}. Secondly, we can use the model to verify the uncertainty quantification of the spectral method and compare it with that of the MLE.
%, which has been studied in~\citet{fan2022ranking}. 
It turns out the model is good enough to give us new insights. 
To further simplify the discussion and presentation, we will only focus on $M=3$ in the PL model. Results for general $M$ can be similarly derived with the same conclusion. %{\blue Though, we observe that, if we treat the realized random graph as fixed, the framework reduces to the fixed graph setting. However, since extra information on the randomness of the graph is available, it is expected to have a more precise characterization of the results.}

%Also in this subsection, we focus on how to encode each complete comparison in the PL model. Different ways of encoding lead to different definitions of $A_l$ and will achieve different levels of efficiency.

With the PL model, we can write down the specific variance of $J_i^*$. Consider the most natural way to encode a $3$-way comparison. Say one ranks $(i,j,k)$ as $i \succ j \succ k$ where $a \succ b$ means $a$ is better than $b$. Motivated by the likelihood function, which multiplies the probability of selecting $i$ as the best one from $\{i,j,k\}$ and the probability of selecting $j$ next from $\{j,k\}$, we break this complete $3$-way comparison into two dependent comparison data: $(i, \{i,j,k\})$ and $(j,\{j,k\})$. We call this {\it multi-level breaking}, where in the first level comparison of all three items, $i$ is preferred, and in the second level comparison of the remaining items, $j$ is preferred. By doing this, we can naturally link and compare results with the MLE estimator. \cite{azari2013generalized} also proposed other ways of breaking an $M$-way comparison into pairwise comparisons, but different breaking methods will lead to different dependent structures, which we do not intend to analyze one by one in this work. So in the sequel, we only consider multi-way breaking, motivated by the likelihood function, and leave the study of other possible breaking methods to the future. We use $\tilde A_{ijk} = 1$ or $0$ to denote whether $(i,j,k)$ has been compared for $L$ times or is never compared.

Let us work on the multi-level breaking. Now the key difference is that the induced comparison graph $\mathcal G$ cannot be treated as fixed. Instead, we condition on $\tilde{\mathcal G} = \{\tilde A_{ijk}\}$. Similar to \eqref{approx2}, we have 
\begin{equation}
\frac{\bar\pi_i - \pi_i^*}{\pi_i^*} = \frac{\sum_{j:j\ne i} P_{ji}\pi_j^* - P_{ij}\pi_i^*}{\pi_i^* \sum_{j:j\ne i} P_{ij}} \approx %\frac{\sum_{j:j\ne i} P_{ji}\pi_j^* - P_{ij}\pi_i^*}{\sum_{j:j\ne i} E[P_{ij}|\tilde{\mathcal G}] \pi_i^*}  = 
\frac{\sum_{j:j\ne i} P_{ji} e^{\theta_j^*} - P_{ij} e^{\theta_i^*}}{\sum_{j:j\ne i} {E[P_{ij}|\tilde{\mathcal G}]} e^{\theta_i^*}} =: J_i^* \,.
\label{approx3}
\end{equation}
In the case of the random comparison graph, i.e. conditioning on $\tilde{\mathcal G}$, we obtain
\begin{align}
P_{ij} = \frac{1}{d} \sum_{\ell=1}^L \sum\limits_{k: k \ne i,j} \tilde A_{ijk} Z_{{ijk}}^\ell \,, \label{p_ij}
\end{align}
where $Z_{ijk}^{\ell} = 1(y^{(\ell)}_{k\succ j\succ i}=1) / f(\{i,j\}) + 1(y^{(\ell)}_{j\succ i\succ k}=1) / f(\{i,j,k\})+1(y^{(\ell)}_{j\succ k\succ i}=1) / f(\{i,j,k\})$. Here $y^{(\ell)}_{i_1\succ i_2\succ i_3}$ is a binary variable which equals to $1$ when the event $i_1\succ i_2\succ i_3$ holds under the $\ell$-th comparison among items $\{i_1,i_2,i_3\}$. Essentially, we need to collect all terms induced from the same comparison into one term $Z_{ijk}^\ell$ so that the summation is always over independent terms. 

We lightly abuse the notation of $J_i^*$ although here the expectation is conditioning on $\tilde{\mathcal G}$ instead of $\mathcal G$ used in the fixed graph case.
Note that 
$$
E[Z_{ijk}^{\ell} |\tilde{\mathcal G}] = 
 \frac{ e^{\theta_k^*} e^{\theta_j^*} }{(e^{\theta_i^*} + e^{\theta_j^*}+e^{\theta_k^*})(e^{\theta_i^*}+ e^{\theta_j^*})f(\{i,j\})} + \frac{ e^{\theta_j^*} }{(e^{\theta_i^*} + e^{\theta_j^*}+e^{\theta_k^*})f(\{i,j,k\})}.
$$
Using $\sum_{j\not = k} a_{ijk} = \sum_{j < k} (a_{ijk} + a_{ikj})$, 
the denominator of $J_i^*$ can be  expressed as
\begin{align*}
\tau_{i}^{\diamond}(\theta^*) := & \sum_{j:j\ne i} E[P_{ij}|\tilde{\mathcal G}] e^{\theta_i^*} = \frac{L}{d}  \sum\limits_{j < k: j,k \ne i} \tilde A_{ijk}  e^{\theta_i^*} \bigg( 
\frac{e^{\theta_j^*} e ^{\theta_k^*} }{(e^{\theta_i^*} + e^{\theta_j^*}+e^{\theta_k^*})(e^{\theta_i^*}+ e^{\theta_j^*})f(\{i,j\})}\\
& + \frac{e^{\theta_j^*} e^{\theta_k^*} }{(e^{\theta_i^*} + e^{\theta_j^*}+e^{\theta_k^*})(e^{\theta_i^*} + e^{\theta_k^*})f(\{i, k\})} + \frac{e^{\theta_j^*}+e^{\theta_k^*}}{(e^{\theta_i^*} + e^{\theta_j^*} + e^{\theta_k^*}) f(\{i,j,k\})} 
  \bigg) \,.
\end{align*}
Hence, the expression of $J_i^*$ is given as follows:
\begin{equation} \label{eq:J_decomp_random}
\begin{aligned}
J_i^* & = \frac{1}{\tau_{i}^{\diamond}} \bigg(\sum_{j:j\ne i} P_{ji}e^{\theta_j^*} - P_{ij}e^{\theta_i^*}\bigg) =\frac{1}{d \tau_{i}^{\diamond}} \bigg(\sum_{\ell=1}^{L}\sum_{j:j\neq i}\sum_{k:k\neq i,j}\tilde A_{ijk}(Z_{jik}^{\ell}e^{\theta_j^*}-Z_{ijk}^{\ell}e^{\theta_i^*})\bigg) \\
&=\frac{1}{d \tau_{i}^{\diamond}} \sum_{\ell=1}^{L}\sum_{j<k: j,k\neq i}\tilde A_{ijk}(Z_{jik}^{\ell}e^{\theta_j^*}+Z_{kij}^{\ell}e^{\theta_k^*}-Z_{ijk}^{\ell}e^{\theta_i^*}-Z_{ikj}^{\ell}e^{\theta_i^*})
=: \frac{1}{d} \sum_{\ell=1}^{L}\sum_{j<k: j,k\neq i} J_{ijk\ell}(\theta^*)\,,
\end{aligned}
\end{equation}
where $\tau_{i}^{\diamond}$ is short for $\tau_{i}^{\diamond}(\theta^*)$.
Since each 3-way comparison is independent, it can be shown that the variance of $J_i^*$ is 
\begin{align*}
\mbox{Var}\big(J_i^* | \tilde {\mathcal G} \big) = \frac{L}{d^2 (\tau_{i}^{\diamond})^2} \sum\limits_{j < k: j,k \ne i} \tilde A_{ijk} e^{\theta_i^*}\bigg(\frac{(e^{\theta_j^*} + e^{\theta_k^*}) }{f^2(\{i,j,k\})} + \frac{e^{\theta_j^*} e^{\theta_k^*} }{e^{\theta_i^*}+e^{\theta_j^*}+e^{\theta_k^*}} \Big(\frac{1}{f^2(\{i,k\})} + \frac{1}{f^2(\{i,j\})}\Big) \bigg )\,.
\end{align*}
The essential component is to compute 
$E J_{ijk\ell}(\theta^*)^2$ due to independence and zero-mean.  For a given triplet $(i, j, k)$, there are 6 possible preference outcomes with probabilities governed by the PL model.  Averaging the squared random outcomes over 6 probabilities gives $E J_{ijk\ell}(\theta^*)^2$, which results in the expression above.  We omit the details of these calculations.

%Finally, we obtain $\mbox{Var}(J_i^*| \tilde {\mathcal G})$ by plugging in $\tau_{i}^{\diamond}(\theta^{*})$. 
Let us consider the simple situation that all $\theta_i^*$ are equal. In this case, if we apply the most efficient weighting function $f(A_l) \propto \sum_{u\in A_l} e^{\theta_u^*}$, that is, $f(\{i,j,k\}) = 3, f(\{i,j\}) = f(\{i,k\}) = 2$, we have $\mbox{Var}(J_i^*| \tilde {\mathcal G}) = 18/(7L)$. However, if we naively choose $f$ as a constant function, we get $\mbox{Var}(J_i^*| \tilde {\mathcal G}) = 8/(3L)$, which is indeed larger. It is also worth noting that when we choose $f(A_l) \propto \sum_{u\in A_l} e^{\theta_u^*}$, the aforementioned variance matches with the variance of MLE in \cite{fan2022ranking}.

With the above formula of $\mbox{Var}(J_i^*| \tilde {\mathcal G})$ for the PL model, we can also conclude that
\begin{align*}
   \mbox{Var}(J_i^*| \tilde {\mathcal G})^{-1/2} (\tilde\theta_i-\theta_i^*)\Rightarrow N(0,1),
\end{align*}
for all $i\in[n]$. 
The rigorous arguments will be introduced in Section \ref{sec:ranking_inference}.

\subsection{Ranking inference: one-sample confidence intervals} 
\label{sec:onesam_twosided_inf}
In numerous practical applications, individuals frequently interact with data and challenges related to rankings. The prevalent approach to utilizing rankings typically revolves around computing preference scores and then showcasing these scores in ranked order. These only provide first-order information on ranks and can not answer many questions, such as: 
\begin{itemize}
\item How do we ascertain with high confidence that an item's true rank is among the top-$3$ (or general $K,K\ge 1$) choices? And how can we establish a set of candidates with high confidence, guaranteeing that the true top-$3$ candidates are not overlooked?
\item How do we analyze if the ranking preferences for a given array of products are consistent in two distinct communities (such as male and female) or the same community but at two different time periods?
\end{itemize}

In sum, there is a need for tools and methodologies that address these and other insightful queries in real-world applications involving rankings, especially when the comparisons are drawn from a general comparison graph.

Within this section, we first present a comprehensive framework designed for the construction of two-sided confidence intervals for ranks. In endeavoring to establish simultaneous confidence intervals for the ranks, an intuitive methodology entails deducing the asymptotic distribution of the empirical ranks, denoted as ${\tilde{r}_{m}},{m \in \mathcal{M}}$, and subsequently determining the critical value. However, it is well known that this task poses substantive challenges, given that $\tilde{r}_{m}$ is an integer and is intrinsically dependent on all estimated scores, making its asymptotic behavior daunting to analyze. %$\tilde{\theta}_{1}, \ldots, \tilde{\theta}_{n}$ for any given $m \in \mathcal{M}$.
%In this section, we first introduce a general framework for constructing two-sided confidence intervals for ranks based on the spectral estimator $\tilde{\theta}$ given in~\eqref{eq:spectrla_estimator}. 
%To construct the simultaneous confidence intervals for the ranks, a direct approach is to derive the asymptotic distribution of the corresponding empirical ranks $\{\tilde{r}_{m}\}_{m \in \cM}$ and figure out the critical value. However, as $\tilde{r}_{m}$ is an integer which depends on all $\tilde{\theta}_{1}, \ldots, \tilde{\theta}_{n}$ for any $m \in \cM$, this is nontrivial. 

By capitalizing on the inherent interdependence between the scores and their corresponding ranks, we discern that the task of formulating confidence intervals for the ranks can be effectively converted to the construction of simultaneous confidence intervals for the pairwise differences among the population scores. It is notable that the distribution of these empirical score differences is more amenable to characterization. Consequently, we focus on the statistical properties of the estimated scores ${\tilde{\theta}_{m}},{m \in [n]}$ and present our methodology for constructing two-sided (simultaneous) confidence intervals for ranks through estimated score differences.
%By exploiting the mutual relationship between the scores and the ranks, we observe that constructing confidence intervals for the ranks can be reduced to constructing simultaneous confidence intervals for the pairwise differences between the population scores, whose empirical counterpart's distribution is easier to depict. Therefore, to circumvent the difficulty of deriving the distribution of ranks directly, we work on the statistics of estimated scores $\{\tilde{\theta}_{m}\}_{m \in [n]}$ instead. Next, we will illustrate the key intuition of our approach via the following example.

\begin{example}
\label{Example_confidence_interval}
\iffalse
Following Example $1$ in~\citet{fan2022ranking}, let $\cM = \{m\}$ for some $1\leq m\leq n$ be the item of interest. We consider constructing the $(1 - \alpha) \times 100\%$ confidence interval for the population rank $r_{m}$, where $\alpha \in (0, 1)$ is a prescribed significance level. Let $\{[\cC_{L}(k, m), \cC_{U}(k, m)]\}_{k \neq m}$ denote the simultaneous confidence intervals of $\{\theta_{k}^{*} - \theta_{m}^{*}\}_{k \neq m}$ such that
\begin{align}
\label{eq_SCIs_theta}
    \PP\{\cC_{L}(k, m) \leq \theta_{k}^{*} - \theta_{m}^{*} \leq \cC_{U}(k, m) \mbox{ for all } k\neq m\} \geq 1 - \alpha.
\end{align}
Observe that $\cC_{U}(k, m) < 0$ (resp. $\cC_{L}(k, m) > 0$) implies $\theta_{k}^{*} < \theta_{m}^{*}$ (resp. $\theta_{k}^{*} > \theta_{m}^{*}$). Counting the number of items ranked lower than item $m$ by using confidence upper bounds and the number of items ranked above item $m$ by using the confidence lower bounds, we can get a confidence interval for $r_m$. \fi 
We let $\mathcal{M} = \{m\}$, where $1\leq m\leq n$, to represent the item under consideration. We are interested in the construction of the $(1 - \alpha) \times 100\%$ confidence interval for the true population rank $r_{m}$, where $\alpha \in (0, 1)$ denotes a pre-specified significance level. Suppose that we are able to construct the simultaneous confidence intervals  ${[\mathcal{C}_{L}(k, m), \mathcal{C}_{U}(k, m)]},{k \neq m},(k\in [n])$ for the pairwise differences ${\theta_{k}^{*} - \theta_{m}^{*}},{k \neq m} (k\in [n])$, with the following property:
\begin{align}
\label{eq_SCIs_theta}
\PP\Big({\mathcal{C}_{L}(k, m) \leq \theta_{k}^{*} - \theta_{m}^{*} \leq \mathcal{C}_{U}(k, m) \mbox{ for all } k\neq m}\Big) \geq 1 - \alpha.
\end{align}
One observes that if $\mathcal{C}_{U}(k, m) < 0$ (respectively, $\mathcal{C}_{L}(k, m) > 0$), it implies that $\theta_{k}^{*} < \theta_{m}^{*}$ (respectively, $\theta_{k}^{*} > \theta_{m}^{*}$). Enumerating the number of items whose scores are higher than item $m$ gives a lower bound for rank $r_m$, and vice versa. 
In other words, we deduce from \eqref{eq_SCIs_theta} that
\begin{align}
\label{eq_confidence_interval_m_probability}
    \PP\left(1 + \sum_{k \neq m} 1\{\cC_{L}(k, m) > 0\} \leq r_{m} \leq n - \sum_{k \neq m} 1\{\cC_{U}(k, m) < 0\}\right) \geq 1 - \alpha. 
\end{align}
This yields a $(1 - \alpha) \times 100\%$ confidence interval for $r_{m}$, and our task reduces to construct simultaneous confidence intervals for the pairwise differences \eqref{eq_SCIs_theta}.
\qed
\end{example}

We now formally introduce the procedure to construct the confidence intervals for multiple ranks $\{r_{m}\}_{m \in \cM}$ simultaneously. Motivated by Example~\ref{Example_confidence_interval}, the key step is to construct the simultaneous confidence intervals for the pairwise score differences $\{\theta_{k}^{*} - \theta_{m}^{*}\}_{m \in \cM, k \neq m}$ such that~\eqref{eq_SCIs_theta} holds. To this end, we let
\begin{align}
\label{Statistics_max}
    T_{\mathcal{M}} = \max_{m \in \mathcal{M}} \max_{k \neq m} \left|\frac{\tilde{\theta}_{k} - \tilde{\theta}_{m} - (\theta_{k}^{*} - \theta_{m}^{*})}{\tilde{\sigma}_{km}}\right|, 
\end{align}
where $\{\tilde{\sigma}_{km}\}_{k \neq m}$ is a sequence of positive normalization given by \eqref{eq_sigma_hat_km} below. For any $\alpha \in (0, 1)$, let $Q_{1 - \alpha}$ be critical value such that $\PP(T_{\mathcal{M}} \leq Q_{1 - \alpha}) \geq 1 - \alpha$. Then, as in Example~\ref{Example_confidence_interval}, our $(1 - \alpha)\times 100\%$ simultaneous confidence intervals for $\{r_{m}\}_{m \in \mathcal{M}}$ are given by $\{[R_{mL}, R_{mU}]\}_{m \in \mathcal{M}}$, where 
\begin{align*}
    R_{mL} = 1 + \sum_{k \neq m} 1\left(\tilde{\theta}_{k} - \tilde{\theta}_{m} > \tilde{\sigma}_{km} \times Q_{1 - \alpha}\right),\quad
    R_{mU} = n - \sum_{k \neq m} 1\left(\tilde{\theta}_{k} - \tilde{\theta}_{m} < -\tilde{\sigma}_{km} \times Q_{1 - \alpha}\right).  
\end{align*}

\subsection{Multiplier bootstrap procedure}
\label{sec:bootstrap}
The key step for constructing the confidence interval of ranks of interest is to pick the critical value $Q_{1 - \alpha}$. To calculate the critical value above, we propose to use the wild bootstrap procedure. The uncertainty quantification for the spectral estimator in~\eqref{approx2} reveals that $\tilde{\theta}_{i} - \theta_{i}^{*} \approx J_{i}(\theta^{*})$ uniformly over $i \in [n]$ (see details in Section \ref{sec:ranking_inference}), which further implies that asymptotically 
\begin{align}
\label{eq_T_M_approximation}
    T_{\mathcal{M}} \approx \max_{m \in \mathcal{M}} \max_{k \neq m} \left|\frac{J_{k}(\theta^{*}) - J_{m}(\theta^{*})}{\tilde{\sigma}_{km}}\right|.
\end{align}
We focus on the fixed graph setting and leave the random graph setting in Remark \ref{Remark_ranking_inference_encoding} below. Practically, the empirical version of $J_{i}(\theta^{*})$ can be obtained via plugging in the spectral estimator $\tilde{\theta}$, namely from \eqref{eq:J_decomp_fixed},
\begin{align*}
    J_{i}(\tilde{\theta}) = \frac{1}{d} \sum_{l \in \mathcal{D}} J_{il}(\tilde{\theta}), \enspace i \in [n]. 
\end{align*}
Let $\sigma_{km}^{2} = \Var\{J_{k}(\theta^{*}) - J_{m}(\theta^{*})|\mathcal{G}\}$ for each $k \neq m$. Then our estimator for $\sigma_{km}^{2}$ is defined by
\begin{align}
\label{eq_sigma_hat_km}
    \tilde{\sigma}_{km}^{2} = \frac{e^{\tilde{\theta}_{k}} }{d^2 \tau_{k}^2(\tilde\theta)} \sum_{l \in \mathcal{D}} \frac{1(k \in A_{l})}{f^{2}(A_{l})}\left(\sum_{j \in A_{l}} e^{\tilde{\theta}_{j}} - e^{\tilde{\theta}_{k}}\right) + \frac{e^{\tilde{\theta}_{m}} }{d^2 \tau_{m}^2(\tilde\theta)} \sum_{l \in \mathcal{D}} \frac{1(m \in A_{l})}{f^{2}(A_{l})}\left(\sum_{j \in A_{l}} e^{\tilde{\theta}_{j}} - e^{\tilde{\theta}_{m}}\right)\,,
\end{align}
where $\tau_{k}(\tilde\theta)$ and $\tau_{m}(\tilde\theta)$ also plug in $\tilde\theta$; see \eqref{eq:fan1}. Let $\omega_{1}, \ldots, \omega_{|\mathcal{D}|} \in \mathbb{R}$ be i.i.d.~$N(0, 1)$ random variables. The Gaussian multiplier bootstrap statistic is then defined by 
\begin{align}
    \label{bootstrap}
    G_{\mathcal{M}} = \max_{m \in \mathcal{M}} \max_{k \neq m} \left|\frac{1}{d \tilde{\sigma}_{km}} \sum_{l \in \mathcal{D}}\{J_{kl}(\tilde{\theta}) - J_{ml}(\tilde{\theta})\} \omega_{l}\right|. 
\end{align}
Let $\PP^{*}(\cdot) = \PP(\cdot | \{(c_{l}, A_{l})\}_{l \in \mathcal{D}})$ denote the conditional probability. Then, for $\alpha \in (0, 1)$, our estimator for $Q_{1 - \alpha}$ is defined by the $(1 - \alpha)$th conditional quantile of $G_{\mathcal{M}}$, namely
\begin{align*}
    \mathcal{Q}_{1 - \alpha} = \inf\{z : \PP^{*}(G_{\mathcal{M}} \leq z) \geq 1 - \alpha\},
\end{align*}
which can be computed by the Monte Carlo simulation.
Then, our simultaneous confidence intervals $\{[\mathcal{R}_{mL}, \mathcal{R}_{mU}]\}_{m \in \mathcal{M}}$ are given by 
\begin{align}
\label{eq_rank_SCIs_feasible}
    \mathcal{R}_{mL} = 1 + \sum_{k \neq m} 1\left(\tilde{\theta}_{k} - \tilde{\theta}_{m} > \tilde{\sigma}_{km} \times \mathcal{Q}_{1 - \alpha}\right),\qquad
    \mathcal{R}_{mU} = n - \sum_{k \neq m} 1\left(\tilde{\theta}_{k} - \tilde{\theta}_{m} < -\tilde{\sigma}_{km} \times \mathcal{Q}_{1 - \alpha}\right).  
\end{align}

\begin{remark} [One-sample one-sided confidence intervals]
Now we provide details on constructing simultaneous one-sided intervals for population ranks. 
%and utilizing the one-sided intervals to resolve two more important questions, namely top-$K$ placement testing and sure screening of top-$K$ candidates. 
For one-sided intervals, the overall procedure is similar to constructing two-sided confidence intervals. Specifically, let
\begin{align}
\label{one_side_statistic}
    G_{\mathcal{M}}^{\circ} = \max_{m \in \mathcal{M}} \max_{k \neq m} \frac{1}{d \tilde{\sigma}_{km}} \sum_{l \in \mathcal{D}} \{J_{kl}(\tilde{\theta}) - J_{ml}(\tilde{\theta})\} \omega_{l},
\end{align}
where  $\omega_{1}, \ldots, \omega_{|\mathcal{D}|}$ are as before i.i.d.~$N(0, 1)$ random variables. Correspondingly, let $\mathcal{Q}_{1 - \alpha}^{\circ}$ be its $(1 - \alpha)$th quantile. Then the $(1 - \alpha)\times 100\%$ simultaneous lower confidence bounds for $\{r_{m}\}_{m \in \cM}$ are given by $\{[\mathcal{R}_{mL}^{\circ}, n]\}_{m \in \mathcal{M}}$, where 
\begin{align}
\label{eq_one_sided_confidence_interval}
    \mathcal{R}_{mL}^{\circ} = 1 + \sum_{k \neq m} 1\left(\tilde{\theta}_{k} - \tilde{\theta}_{m} > \tilde{\sigma}_{km} \times \mathcal{Q}_{1 - \alpha}^{\circ}\right). 
\end{align}
\end{remark}

\begin{remark}[Ranking inference for the PL model with random comparison graph] 
\label{Remark_ranking_inference_encoding}
Section \ref{Section_uncertainty_quantification_random} reveals that $\tilde{\theta}_{i} - \theta_{i}^{*} \approx J_{i}(\theta^{*})$ uniformly over $i \in [n]$, where following \eqref{eq:J_decomp_random},
\begin{align*}
    J_{i}(\theta^{*}) = \frac{1}{d}\sum_{\ell = 1}^{L} \sum_{j < s: j,s\ne i} J_{ijs\ell} (\theta^{*}). 
\end{align*}
In order to carry out ranking inference for the PL model, we need to rewrite this equation in a slightly different format. 
Let $\mathcal{N} = \sum_{i < j < k} \tilde{A}_{ijk}$ denote the total number of connected components on the random graph $\tilde{\mathcal{G}}$ and write $\{(i, j, k): i < j < k \mbox{ and } \tilde{A}_{ijk} = 1\} =: \{\tilde{A}_{q}\}_{q = 1, \ldots, \mathcal{N}}$. Let $y_{q}^{(\ell)}$ denote the $\ell$-th full-ranking comparison result for $\tilde{A}_{q}$. Then we can rewrite $P_{ij}$ as
    \begin{align*}
        P_{ij} = \frac{1}{d} \sum_{\ell = 1}^{L} \sum_{q = 1}^{\mathcal{N}} \sum_{k : k \neq i, j} 1\{(i, j, k) = \tilde{A}_{q}\} Z_{ijkq}^{(\ell)}, \enspace i \neq j, 
    \end{align*}
where for $i\neq j\neq k$ and $q \in [\mathcal{N}]$, and $Z_{ijkq}^{(\ell)} = {1\{y_{q}^{(\ell)} = (k \succ j \succ i)\}} / {f(\{i, j\})} + (1\{y_{q}^{(\ell)} = (j \succ i \succ k)\} + 1\{y_{q}^{(\ell)} = (j \succ k \succ i)\}) / {f(\{i, j, k\})}$.
It is straightforward to verify that this $P_{ij}$ is exactly the same with~\eqref{p_ij}. Therefore, we rewrite $J_{i}(\theta^{*}) = d^{-1} \sum_{\ell = 1}^{L} \sum_{q = 1}^{\mathcal{N}} J_{iq\ell}^{\diamond} (\theta^{*})$, where 
\begin{align}
\label{eq_J_diamond}
    J_{iq\ell}^{\diamond} (\theta^{*}) = \sum_{j < s: j,s \neq i} 1\{(i, j, s) = \tilde{A}_{q}\} 
    J_{ijs\ell} (\theta^{*})\,.
    %\left(Z_{jikq}^{(\ell)} e^{\theta_{j}^{*}} - Z_{ijkq}^{(\ell)} e^{\theta_{i}^{*}}\right). 
\end{align}
As is assumed, $\{J_{iq\ell}^{\diamond}(\theta^{*})\}_{\ell \in [L], q \in [\mathcal{N}]}$ are independent for each $i \in [n]$ conditioning on the comparison graph $\tilde{\mathcal{G}}$. Let $\{\omega_{q \ell}\}_{q, \ell \in \mathbb{N}}$ be i.i.d.~$N(0, 1)$ random variables. Then, following~\eqref{bootstrap}, the corresponding bootstrap test statistic is given by 
    \begin{align*}
        G_{\mathcal{M}}^{\diamond} = \max_{m \in \mathcal{M}} \max_{k \neq m} \left|\frac{1}{d \tilde{\sigma}_{km}^{\diamond}} \sum_{\ell = 1}^{L} \sum_{q = 1}^{\mathcal{N}} \{J_{kq \ell}^{\diamond}(\tilde{\theta}) - J_{mq \ell}^{\diamond}(\tilde{\theta})\} \omega_{q \ell}\right|,   
    \end{align*}
    where $\{\tilde{\sigma}_{km}^{\diamond}\}_{k\neq m}$ are as before the sequence of positive normalization, calculated as the sum of the variance of $J_k(\theta^*)$ and $J_m(\theta^*)$ similar to \eqref{eq_sigma_hat_km}. Consequently, the simultaneous confidence intervals for the ranks can be similarly constructed.

\end{remark}

\subsection{Ranking inference: two-sample and one-sample testing applications}
\label{Section_Application_Rank_inference}
In this section, we further illustrate how we may apply our inference methodology to a few salient testing applications, in both one-sample and two-sample testing. 
%show the usefulness of the one-sided confidence intervals with two examples. The first one is on testing whether an item of interest lies in the top-$K$ placement. The mathematical framework is given in Example~\ref{example1}. 

\begin{example}[Testing top-$K$ placement]
\label{example1}
Let $\cM = \{m\}$ for some $m \in [n]$ and let $K \geq 1$ be a prescribed positive integer. Our objective is to ascertain if the item $m$ is a member of the top-$K$ ranked items. Consequently, we shall examine the following hypotheses:
\begin{align}
\label{eq_top_K_test}
    H_{0} : r_{m} \leq K \enspace \mathrm{versus} \enspace H_{1} : r_{m} > K. 
\end{align}
Based on the one-sided confidence interval $[\cR_{mL}^{\circ}, n]$ in~\eqref{eq_one_sided_confidence_interval}, for any $\alpha \in (0, 1)$, a level~$\alpha$ test for~\eqref{eq_top_K_test} is simply given by $\phi_{m, K} = 1\{\cR_{mL}^{\circ} > K\}$. {Under the conditions of Theorem~\ref{Theorem_Gaussian_Approximation_T}, we have $\PP(\phi_{m, K} = 1|H_{0}) \leq \alpha$ + o(1), that is, the effective control of the Type-I error can be achieved below the significant level $\alpha$ when the null hypothesis is true.}
\end{example}

\iffalse

\begin{proposition}
\label{Proposition_empirical_size}
Under the conditions of Theorem~\ref{Theorem_Gaussian_Approximation_T}, we have $\PP(\phi_{m, K} = 1|H_{0}) \leq \alpha$ + o(1). In addition, $\PP(\phi_{m, K} = 1|H_{1}) \to 1$ holds when $\theta_{(K)}^{*} - \theta_{m}^{*} \gtrsim e^{\bar\kappa}\sqrt{\log n /n_{\min}}$, where $\theta_{(K)}^{*}$ denotes the underlying score of the item with true rank $K$ and $n_{\min} = \min_{j \in [n]} \sum_{l \in \mathcal{D}} 1(j \in A_{l})$ is the minimum number of cases in which each item is compared, and $\bar\kappa$ is the conditional number defined in the following Assumption \ref{Assumption_dynamic_range_bound}.% \r{Q: Is this minimum?} {\blue A: Yes, this should be minimum, here we use maximum because we assume they are in the same order.}
\end{proposition} 

Proposition~\ref{Proposition_empirical_size} presents the characteristics of the test statistic $\phi_{m, K}$ with respect to its size and power. It is demonstrable that effective control over the Type-I error rate can be achieved below the threshold $\alpha$ when the null hypothesis is true. Furthermore, in the scenario where the alternative hypothesis is valid, the test's power exhibits a rapid growth to $1$ when the score difference surpasses a threshold with order $e^{\bar\kappa}\sqrt{\log n /n^{\dagger}}$.

\fi

\begin{example}[Top-$K$ sure screening set]
\label{exp:candidate_ad}
Another example is on constructing a screened candidate set that contains the top-$K$ items with high probability. This is particularly useful in college candidate admission or company hiring decisions. Oftentimes, a university or a company would like to design certain admission or hiring policy with the high-probability guarantee of the sure screening of true top-$K$ candidates. 

Let $\cK = \{r^{-1}(1), \ldots, r^{-1}(K)\}$ denote the top-$K$ ranked items of the rank operator $r : [n] \to [n]$. We aim at selecting a set of candidates $\hat{\cI}_{K}$ which contains the top-$K$ candidates with a prescribed probability. Mathematically, this requirement can be expressed as $\PP(\mathcal{K} \subseteq \hat{\mathcal{I}}_{K}) \geq 1 - \alpha$, where $\alpha \in (0, 1)$. Herein, we define $\mathcal{M} = [n]$, and let $\{[\mathcal{R}_{mL}^{\circ}, n],{m \in [n]}\}$ represent the set of $(1 - \alpha) \times 100\%$ simultaneous left-sided confidence intervals, as given in \eqref{eq_one_sided_confidence_interval}. It is easy to observe that the inequality $\mathcal{R}_{mL}^{\circ} > K$ infers that $r_{m} > K$. Consequently, a selection for $\hat{\mathcal{I}}_{K}$, that satisfies the probability constraint $\PP(\mathcal{K} \subseteq \hat{\mathcal{I}}_{K}) \geq 1 - \alpha$, is given by
\begin{align*}
\hat{\mathcal{I}}_{K} = \{m \in [n] : \mathcal{R}_{mL}^{\circ} \leq K\}.
\end{align*}
\end{example}

\begin{example}[Testing ranks of two samples]
\label{Example_two_sample_item}
In many applications, we are concerned with the question of whether the ranks of certain items using two samples have been changed or preserved. For example, we may care about whether
\begin{itemize}
    \item Ranking allocation differs before and after a treatment or policy change. 
    \item Different communities such as males versus females can have different ranking preferences over the same set of products. 
    \item People's perceived preferences over the same things have changed in two time periods.
\end{itemize}

Suppose we observe two independent datasets $\mathcal{D}_{1}$ and $\mathcal{D}_{2}$ with preference scores $\theta_{[1]}^{*} = (\theta_{11}^{*}, \ldots, \theta_{1n}^{*})^{\top}$ and $\theta_{[2]}^{*} = (\theta_{21}^{*}, \ldots, \theta_{2n}^{*})^{\top}$. The associated true rankings are respectively denoted by 
\begin{align*}
    r_{[1]} = (r_{11}, \ldots, r_{1n})^{\top} \enspace \mathrm{and} \enspace r_{[2]} = (r_{21}, \ldots, r_{2n})^{\top}. 
\end{align*}
    Given any $m \in [n]$, we are interested in testing whether the same rank is preserved for item $m$ across these two samples, that is, testing the hypotheses 
    \begin{align}
    \label{eq_two_sample_rank_test}
        H_{0} : r_{1m} = r_{2m} \enspace \mathrm{versus} \enspace H_{1} : r_{1m} \neq r_{2m}
    \end{align}
    To this end, firstly we construct simultaneous confidence intervals $[R_{1mL}, R_{1mU}]$ and $[R_{2mL}, R_{2mU}]$ such that 
    \begin{align}
    \label{eq_two_sample_rank_SCI}
        \PP(r_{1m} \in [R_{1mL}, R_{1mU}] \mbox{ and } r_{2m} \in [R_{2mL}, R_{2mU}]) \geq 1 - \alpha. 
    \end{align}
    Then our $\alpha$-level test for~\eqref{eq_two_sample_rank_test} is defined by 
    \begin{align*}
        \phi_{m} = 1\{|[R_{1mL}, R_{1mU}] \cap [R_{2mL}, R_{2mU}]|= 0\}. 
    \end{align*}
    It is straightforward to verify that $\PP(\phi_{m} = 1|H_{0}) \geq 1 - \alpha$. %To implement our test $\phi_{m}$ in practice, the key step is to construct the confidence intervals such that~\eqref{eq_two_sample_rank_SCI} holds. Towards this end, one simple approach is to construct the $(1 - \alpha/2)\times 100\%$ confidence intervals for $r_{1m}$ and $r_{2m}$, respectively.
\end{example}

\begin{example}[Testing top-$K$ sets of two samples] 
\label{Example_two_sample_set}
Besides testing for a single or a few ranks, one may want to evaluate whether two top-$K$ sets are identical or not, between two groups of people, two periods of time, or before and after a significant event or change. Let $\mathcal{S}_{1K} = \{r_{[1]}^{-1}(1), \ldots, r_{[1]}^{-1}(K)\}$ and $\mathcal{S}_{2K} = \{r_{[2]}^{-1}(1), \ldots, r_{[2]}^{-1}(K)\}$ denote the sets of top-\emph{K} ranked items, respectively. We consider testing the hypotheses
    \begin{align}
    \label{eq_two_sample_top_rank_test}
        H_{0} : \mathcal{S}_{1K} = \mathcal{S}_{2K} \enspace \mathrm{versus} \enspace H_{1} : \mathcal{S}_{1K} \neq \mathcal{S}_{2K}. 
    \end{align}
    For $\alpha \in (0, 1)$, we begin with constructing $(1 - \alpha)\times 100\%$ simultaneous confidence sets $\hat{\mathcal{I}}_{1K}$ and $\hat{\mathcal{I}}_{2K}$ for $\mathcal{S}_{1K}$ and $\mathcal{S}_{2K}$ such that 
    \begin{align}\label{simultaneous_set}
        \mathbb{P}\left(\mathcal{S}_{1K} \subset \hat{\mathcal{I}}_{1K} \mbox{ and } \mathcal{S}_{2K} \subset \hat{\mathcal{I}}_{2K}\right) \geq 1 - \alpha. 
    \end{align}
    Then our $\alpha$-level test for~\eqref{eq_two_sample_top_rank_test} is defined by 
    \begin{align*}
        \tilde{\phi}_{K} = 1\{|\hat{\mathcal{I}}_{1K} \cap \hat{\mathcal{I}}_{2K}| < K\}. 
    \end{align*}
\end{example}
\begin{remark}
Several methodologies, including the Bonferroni adjustment (which constructs a $(1-\alpha/2)\times 100\%$ confidence interval for each source), and Gaussian approximation (achieved by taking the maximum of the test statistics of each source), enable us to establish simultaneous confidence intervals as illustrated in equations \eqref{eq_two_sample_rank_SCI} and \eqref{simultaneous_set}. To maintain clarity and simplicity in the subsequent context, we simply employ the Bonferroni adjustment for two samples. Moreover, the framework outlined in Examples \ref{Example_two_sample_item} and \ref{Example_two_sample_set} can be extended in a straightforward way to evaluate whether the ranks of items or sets are identical across three or more sources. 
\end{remark}

%{\blue Comment MY: Probably introduce more applications to these two examples.}

\section{Theoretical Justifications}
\label{sec:ranking_inference}

In this section, we rigorously justify the conclusions in Section \ref{sec:ranking_estimation} and explicitly lay out the necessary assumptions to arrive at those conclusions. 
The first assumption is to make sure we are comparing $\theta_i^{*}$'s in the same order in a meaningful way. Otherwise, we can always group items into categories with similar qualities and then work on each sub-group or screen some extreme items. In addition, as we have discussed, we need an identifiability condition for $\theta^*$. 

\begin{assumption}
\label{Assumption_dynamic_range_bound}
There exists some positive constant $\bar{\kappa} < \infty$ such that 
\begin{align*}
    \max_{i \in [n]} \theta_{i}^{*} - \min_{i \in [n]} \theta_{i}^{*} \leq \bar{\kappa}. 
\end{align*}
In addition, for identifiability, assume $1^\top \theta^* = 0$.
\end{assumption}

{In Assumption \ref{Assumption_dynamic_range_bound}, we assume $\bar{\kappa}$ is finite, indicating we only rank items with preference scores on the same scale. If $\bar{\kappa}$ is diverging, some items will be trivially more or less favorable than others. In this case, it is typically easy in practice to separate the items into subgroups with similar preference scores, and then we can conduct ranking inference within each group. Although we assume bounded $\bar{\kappa}$, it serves as the role of a condition number whose effect has been made explicit in all our results for interested readers. However, we do not claim this dependency is optimal as our nontrivial analysis can easily encounter powers of $e^{\bar\kappa}$, say in bounding the ratio of $\pi_i^* / \pi_j^*$.}

\subsection{Estimation accuracy and asymptotic normality with fixed comparisons} 
\label{sec4.1}
To derive the asymptotic distribution of the spectral estimator, we need to rigorously justify the approximations \eqref{approx1} and \eqref{approx2}. We first take care of {approximating \eqref{approx1} using} \eqref{approx2}, where all comparisons in $\mathcal G$ are assumed to be fixed.  Note that
\begin{align*}
    P_{ij} - E[P_{ij}|\mathcal{G}] = \frac{1}{d} \sum_{l \in \mathcal{D}} 1(i, j \in A_{l}) \left[1(c_{l} = j) - \frac{\pi_{j}^{*}}{\sum_{u \in A_{l}} \pi_{u}^{*}}\right] \frac{1}{f(A_{l})}.  
\end{align*}
Let $Z_{A_l}^j = 1(c_l=j) / f(A_l)$, which is bounded from above and below as long as $f$ is bounded from above and below. Furthermore, each $Z_{A_l}^j$ is independent. Therefore,
$ P_{ij} - E[P_{ij}|\mathcal{G}] = d^{-1} \sum_{l \in \mathcal{D}} 1(i, j \in A_{l}) [Z_{A_l}^j - E(Z_{A_l}^j)]$. 
By Hoeffding's inequality, conditioning on $\mathcal G$, we have with a large probability $1 - o(1)$,
\begin{align*}
    \max_{i\neq j} \bigg|P_{ij} - E[P_{ij} | \mathcal G] \bigg| \lesssim \frac1d \sqrt{(\log n) n^\ddagger}.%+\frac{\log n}{d}\,.
\end{align*}
where
    $n^\ddagger = \max_{i\neq j} \sum_{l \in \mathcal D} 1(i, j \in A_{l})$ 
is the maximum number of cases that each pair is compared.  
Similarly, we can get the concentration bound for $\sum_{j:j\ne i} P_{ij}$. Since $Z_{A_l}^j$'s are independent, another level of summation over $j$ will lead to the following. Again by Hoeffding's inequality, with a large probability tending to $1$, we obtain
\begin{align*}
    \max_{i} \bigg|\sum_{j:j\ne i} P_{ij} - \sum_{j:j\ne i} E[P_{ij} | \mathcal G] \bigg| \lesssim \frac1d \sqrt{(\log n) n^\dagger }, %+\frac{\log n}{d}\,.
\end{align*}
where $n^\dagger = \max_{i} \sum_{l \in \mathcal D} 1(i \in A_{l})$ is the maximum number of cases that each item is compared.
In addition, we assume 
\[
\sum_{j:j\ne i} E[P_{ij} | \mathcal G] = \tau_i e^{-\theta_i^*} \asymp \frac{1}{d} n^\dagger\,,
\]
%where we assume for any $i$, $\sum\limits_{l \in \mathcal D} \sum\limits_{h=1}^{m-1} \sum\limits_{k_1 < \dots < k_h: k_s \ne i, \forall s \in [h]} 1(A_l = \{i,k_1,\dots,k_h\})$ is in the same order as $n^\dagger$. 
{where $\tau_i$ defined in \eqref{definition_tau} is the denominator of $J_i^*$. This assumption makes sense as $\tau_i e^{-\theta_i^*} \lesssim \sum_{l \in \mathcal D} 1(i \in A_{l}) / d$, and it states for each $i$ the comparison graph cannot be too asymmetric.}
%where for any $i$, $\sum_{l \in \mathcal D} 1(i \in A_{l})$ is assumed in the same order as $n^\dagger$, to guarantee the graph is not too asymmetric. 
Note that $\sum_{l \in \mathcal D} 1(i, j \in A_{l})$ can still be widely different from $n^\ddagger$ for different pair $(i,j)$. %{\blue it seems we only need the expectation of $\sum\limits_{l \in \mathcal D} 1(i \in A_{l})$ is of order $n^\dagger$. also, change the definition of $n^\ddagger$ to its expectation?}
Since the expectation term dominates the deviation if $n^\dagger \gtrsim \log n$, it is not hard to show that in \eqref{approx2}, changing the denominator by its expectation will only cause a small order difference, which does not affect the asymptotic distribution. 

Based on the above discussion, we impose the following assumption. 

\begin{assumption} \label{ass4.1}
In the case of a fixed comparison graph, we assume the graph is connected,
{$\tau_i e^{-\theta_i^*} \asymp n^\dagger / d$ for all $i \in [n]$, $e^{2\bar\kappa} \log n = o(n)$}
and ${e^{3\bar\kappa}} n^\ddagger n^{1/2} (\log n)^{1/2} = o(n^\dagger)$. 
%\r{Q: $n^\ddagger \geq 1$ by definition.  Therefore, I simplified $\max(n^\ddagger, 1)$ to avoid confusion but did not do so on other occurrences.} {\blue A: Thank you for the comment! You are correct, we can remove the $\max$ over 1. }
%In addition, the graph is balanced in the sense that taking maximum in $n^\dagger$ does not change its order in $n$. 
\end{assumption} 

%Note that the assumption implies $n^\dagger \gtrsim \log n$. 
The assumption is reasonable for a fixed comparison graph. If each pair $(i,j)$ must be compared at least once, then every $\sum_{l \in \mathcal D} 1(i, j \in A_{l}) \ge 1$. If they are all in the same order, then $\sum_{l \in \mathcal D} 1(i \in A_{l}) = \sum_{j: j\ne i} \sum_{l \in \mathcal D} 1(i, j \in A_{l})$ should be indeed in the order of $n^\ddagger n$. 
%Given that the graph is balanced, $n^\dagger = \max_i \sum_{l \in \mathcal D} 1(i \in A_{l})$ is in the same order as $\sum_{l \in \mathcal D} 1(i \in A_{l})$. 
Assumption \ref{ass4.1} allows some pair $(i,j)$ to be never compared directly, so we need to leverage the information from comparing $i$ and $j$ to other items separately. Moreover, we also do not require $\sum_{l \in \mathcal D} 1(i, j \in A_{l})$ to be  in the same order for any $i,j:i\neq j$ since we only require the maximum pairwise degree $n^\ddagger$ to satisfy  Assumption \ref{ass4.1}. However, in the case of a fixed graph, we do not have the randomness from the graph, and the graph must be relatively dense to make sure we have enough information to rank every item. This condition will be relaxed to $n^\dagger \gtrsim n^{\ddagger}\log n$ when we have a homogeneous random comparison graph in Section \ref{sec:rand_graph}. 

We need another technical condition on the structure of the comparison graph. Define $\Omega = \{\Omega_{ij}\}_{i\le n, j\le n}$ where $\Omega_{ij} = - P_{ji}\pi_j^*$ for $i\ne j$ and $\Omega_{ii} = \sum_{j:j\ne i} P_{ij}\pi_i^*$. Note that as we derived above, $E[\Omega_{ii} | \mathcal G]$ is in the order of $n^\dagger / (dn)$. We hope to understand the order of its eigenvalues. Since $\Omega$ has the minimal eigenvalue equal to zero, with the corresponding eigenvector ${\bf 1}$, we only focus on the space orthogonal to ${\bf 1}$. Following the notation of \cite{gao2021uncertainty}, 
\[
\lambda_{\min, \bot}(A) = \min_{\|v\|=1,v^\top {\bf 1}=0} v^\top A v\,.
\]

\begin{assumption} \label{ass4.2} There exist $C_1, C_2 > 0$ such that
\begin{equation}
\label{bound_b_1}
 C_1 {e^{-\bar\kappa}} \frac{n^\dagger}{dn} \le \lambda_{\min, \bot}(E[\Omega|\mathcal G]) \le \lambda_{\max}(E[\Omega|\mathcal G]) \le C_2  {e^{\bar\kappa}} \frac{n^\dagger}{dn}\,,
\end{equation}
\begin{equation}
\label{bound_b_2}
\|\Omega - E[\Omega|\mathcal G]\| = o_P\bigg(\frac{n^\dagger}{dn}\bigg)\,.
\end{equation}
\end{assumption}

{When $\bar\kappa = O(1)$,} Assumption \ref{ass4.2} requires that all eigenvalues (except the minimal one) of $E[\Omega|\mathcal G]$ are in the order of ${n^\dagger}/(dn)$ and $\Omega$ also shares this same eigenvalue scale as $E[\Omega|\mathcal G]$. This assumption is intuitively correct, as we have seen that $E[\Omega_{ij}] \lesssim n^\ddagger / (dn)$ for $i\ne j$ and $E[\Omega_{ii}] \asymp n^\dagger / (dn)$. We will also rigorously show that this condition can be satisfied if we consider the PL model (Theorem \ref{thm_b_rand}). %In addition, since the comparison graph is connected  by Assumption \ref{ass4.1}, this ensures the uniqueness of the stationary distribution $\pi^*$ induced from $P^*$. Moreover, due to the concentration between $P$ and $P^*$ according to Assumption \ref{ass4.2}, we obtain the uniqueness of $\hat\pi.$ 

\begin{theorem}
\label{thm_approx}
Under Assumptions \ref{Assumption_dynamic_range_bound}-\ref{ass4.2}, the spectral estimator $\tilde\theta_i$ has the following uniform approximation:
$
    \tilde\theta_i-\theta_i^*=J_i^*+\delta_i,%\big({1}/{\sqrt{n^\dagger}}\big),
$
uniformly for all $i\in[n], $where $\|\delta:=(\delta_1,\cdots,\delta_n)\|_{\infty}=o({1}/{\sqrt{n^\dagger}})$ with probability $1-o(1).$

\end{theorem}
To prove Theorem \ref{thm_approx}, we need to verify \eqref{approx1}. We leave the detailed proof in the appendix. Given Theorem \ref{thm_approx}, we can easily conclude the next theorem following the properties of $J_i^*$, which lead to the rate of convergence for $\tilde\theta$ as well as its asymptotic normality. 

\begin{remark}
{The results of Theorem \ref{thm_approx} and the following Theorems are proved via Bernstein and Hoeffding type inequalities with union bound over $n$ items. Therefore, all of the high-probability terms hold with probability $(1-o(1))$ (similarly for $o_p(\cdot)$ and $O_p(\cdot)$) mentioned in the main text equivalently hold with probability in form of $1-\cO(n^{-\zeta})$ where $\zeta\ge 2$ is a positive integer (different choice of $\zeta$ will only affect constant terms in the involved concentration inequalities). }
\end{remark}

\begin{theorem}
\label{thm_consist_normality}
Under Assumptions \ref{Assumption_dynamic_range_bound}-\ref{ass4.2}, the spectral estimator \eqref{eq:spectrla_estimator} satisfies that
\begin{align}
    \|\tilde\theta-\theta^{*}\|_{\infty} & \asymp \|J^*\|_\infty \lesssim {e^{\bar\kappa}}  \sqrt{\frac{\log n}{n^\dagger}},\label{l_infty_consist}
\end{align}
with probability $1-o(1)$, where $J^*=(J_1^*,\cdots,J_n^*)$ with $J_i^*,i\in[n]$ being defined in \eqref{eq:J_decomp_fixed}.
In addition, 
\begin{align*}
   \rho_i(\theta) (\tilde\theta_i-\theta_i^*) \Rightarrow N(0,1),
\end{align*}
for all $i\in[n]$ with
\begin{align}
\label{eq_rho_theta}
 \rho_i(\theta)&=\bigg[\sum\limits_{l \in \mathcal D} 1(i \in A_{l}) \bigg(\frac{\sum_{u\in A_l} e^{\theta_u} - e^{\theta_i}}{\sum_{u\in A_l} e^{\theta_u}}\bigg) \frac{e^{\theta_i}}{f(A_l)}\bigg] / \bigg[\sum\limits_{l \in \mathcal D} 1(i \in A_{l}) \bigg(\frac{\sum_{u\in A_l} e^{\theta_u} - e^{\theta_i}}{f(A_l)}\bigg) \frac{e^{\theta_i}}{f(A_l)}\bigg]^{1/2},
\end{align}
for both $\theta = \theta^*$ and $\theta = $ any consistent estimator of $\theta^*$.
\end{theorem}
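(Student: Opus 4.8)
Given Theorem~\ref{thm_approx}, which provides the uniform linearization $\tilde\theta_i-\theta_i^*=J_i^*+o_P(1/\sqrt{n^\dagger})$, the plan is to deduce Theorem~\ref{thm_consist_normality} essentially as a corollary by analyzing the remainder term $J_i^*$ directly. The two claims — the $\ell_\infty$ rate and the pointwise asymptotic normality — are handled separately, but both reduce to controlling the sum of independent bounded random variables $J_i^* = d^{-1}\sum_{l\in\mathcal D} J_{il}(\theta^*)$ from~\eqref{eq:J_decomp_fixed}.

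For the $\ell_\infty$ bound~\eqref{l_infty_consist}: first I would note that each summand $J_{il}(\theta^*)$ is mean-zero (by detailed balance) and bounded in absolute value by a constant times $\tau_i(\theta^*)/d$, since all the $e^{\theta_u^*}$ factors are bounded above and below by Assumption~\ref{Assumption_dynamic_range_bound} and $f$ is bounded away from $0$ and $\infty$. The variance $\mathrm{Var}(J_i^*|\mathcal G)$ was already computed in~\eqref{eq:fan1}; under the graph regularity built into Assumptions~\ref{ass4.1}–\ref{ass4.2} (in particular $\sum_{l}1(i\in A_l)\asymp n^\dagger$ and $\tau_i^{-1}(\theta^*)\asymp n^\dagger/d$), one checks $\mathrm{Var}(J_i^*|\mathcal G)\asymp 1/n^\dagger$. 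Then a Bernstein/Hoeffding inequality conditional on $\mathcal G$ gives $|J_i^*|\lesssim\sqrt{\log n/n^\dagger}$ with probability $1-O(n^{-c})$, and a union bound over $i\in[n]$ yields $\|J^*\|_\infty\lesssim\sqrt{\log n/n^\dagger}$ with probability $1-o(1)$. Combining with Theorem~\ref{thm_approx} and the fact that $o_P(1/\sqrt{n^\dagger})$ is dominated by $\sqrt{\log n/n^\dagger}$ gives~\eqref{l_infty_consist}. (A minor point: $\tilde\theta$ uses $\tilde\theta_i=\log\hat\pi_i-\frac1n\sum_k\log\hat\pi_k$, so one also passes from $\hat\pi$ to $\theta$ via a $\log$ linearization, absorbing the centering term, which is routine given the approximations~\eqref{approx1}–\eqref{approx2}.)

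For the asymptotic normality: fix $i$; write $\rho_i(\theta^*)(\tilde\theta_i-\theta_i^*) = \rho_i(\theta^*)J_i^* + o_P(\rho_i(\theta^*)/\sqrt{n^\dagger})$. Since $\rho_i(\theta^*) = \mathrm{Var}(J_i^*|\mathcal G)^{-1/2}\asymp\sqrt{n^\dagger}$, the remainder is $o_P(1)$, so it suffices to show $\rho_i(\theta^*)J_i^* = \mathrm{Var}(J_i^*|\mathcal G)^{-1/2}\sum_l J_{il}(\theta^*)/d \Rightarrow N(0,1)$. This follows from the Lindeberg–Feller CLT for the triangular array of independent mean-zero summands $\{J_{il}(\theta^*)/d\}_{l\in\mathcal D}$: the Lindeberg condition holds because each summand is bounded by $O(\tau_i/d)$ while the variance is of order $\tau_i^2 n^\dagger/d^2$, so the ratio (max summand)$^2$/variance is $O(1/n^\dagger)\to0$. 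Finally, to replace $\rho_i(\theta^*)$ by $\rho_i(\theta)$ for a consistent estimator $\theta$, one invokes Slutsky together with the continuity of $\rho_i(\cdot)$ in the relevant coordinates and the $\ell_\infty$-consistency just established, so $\rho_i(\theta)/\rho_i(\theta^*)\xrightarrow{P}1$.

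The main obstacle is not any single step but rather the bookkeeping in verifying that the graph-regularity assumptions indeed force $\mathrm{Var}(J_i^*|\mathcal G)\asymp 1/n^\dagger$ uniformly in $i$, and that the same holds when $\theta^*$ is replaced by a consistent estimator inside $\tau_i$ and the variance formula — this requires showing the plug-in variance estimator $\tilde\sigma^2$ in~\eqref{eq_sigma_hat_km} is ratio-consistent, which in turn leans on~\eqref{l_infty_consist}. A second, more delicate point is that the $o_P(1/\sqrt{n^\dagger})$ error in Theorem~\ref{thm_approx} is only uniform, so for the simultaneous/$\ell_\infty$ statement one must be careful that the union bound over the $J_i^*$ terms and the uniform remainder are compatible; this is fine because both are controlled at the $\sqrt{\log n/n^\dagger}$ scale, but it needs to be stated cleanly.
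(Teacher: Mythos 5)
Your proposal is correct and follows essentially the same route as the paper's proof: it invokes the uniform linearization $\tilde\theta_i-\theta_i^*=J_i^*+o_P(1/\sqrt{n^\dagger})$ from Theorem~\ref{thm_approx}, controls $\|J^*\|_\infty$ via Bernstein-type concentration for the independent sum $d^{-1}\sum_{l\in\mathcal D}J_{il}(\theta^*)$ together with $\nu_i^*\asymp n^\dagger/d$, obtains normality from the CLT for that same sum, and handles the plug-in of a consistent estimator into $\rho_i(\cdot)$ by a continuity/Taylor argument. The paper's write-up is terser but contains no additional ideas beyond what you describe.
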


Note that Theorem \ref{thm_consist_normality} indicates that 
{the choice of $f(\cdot)>0$ does not affect the rate of convergence, but it affects the estimation efficiency. As we argued in Section \ref{Section_uncertainty_quantification_fixed}, the optimal weighting to minimize the asymptotic variance is $f(A_l) \propto \sum_{u\in A_l} e^{\theta_u^*}$ in the class of spectral estimators. In practice, however, we do not know $\theta_u^*$ beforehand. Therefore, we could implement a two-step procedure to improve the efficiency of the spectral estimator: in the first step, we obtain our initial consistent estimator $\hat\theta_u^{(\text{initial})}$ with weighting say $f(A_l) = |A_l|$, and in the second step, we estimate $f(A_l) = \sum_{u\in A_l} e^{\theta_u^*}$ by plugging $\hat\theta_u^{(\text{initial})}$ and run the spectral method again with this optimal weighting to get the final asymptotically efficient estimator $\hat\theta_u^{(\text{final})}$. Note that we do not intend to prove the theoretical properties of this two-step estimator, as the data dependency in the optional weighting of the second step makes the uniform approximation analysis highly nontrivial due to non-i.i.d. ranking outcomes. 
%This difficulty exists in many two-step procedures, for instance, the adaptive Lasso or two-step weighted least squares (???add reference???), where the authors only study the second step assuming we have a good enough initial consistent estimator from another independent dataset. 
Nonetheless, we could circumvent this theoretical difficulty by splitting data into a very small part ($o(|\cD|)$ samples) for step 1, to achieve consistency with a worse convergence rate, and using the remaining majority ($|\cD|-o(|\cD|)$ samples) for step 2, to maintain the same asymptotic behavior. In addition, empirically, we found that directly using the same whole data in both steps achieves decent performance given a large sample size. We refer interested readers to our numerical studies.
}

\subsection{Estimation accuracy and asymptotic normality for the PL model} \label{sec:rand_graph}

In the random graph case, we have to specify the graph generation process in order to study the theoretical properties. We consider the commonly used PL model, where we sample each $M$-way comparison with probability $p$ and compare this set for $L$ times. Furthermore, we will only work with $M=3$ since we plan to focus on a transparent and intuitive discussion. We can easily generalize all the discussions to general $M$, but derivations and formulas can be more tedious.

The PL model with $3$-way comparisons has been studied in~\citet{fan2022ranking} by using MLE, where they explicitly write down the likelihood function. The proposed spectral method can work for any fixed graph, including the one generated from the PL model. In this section, we would like to compare the performance of the spectral method with that of the MLE. To make sure the spectral method works for the PL model, we need to prove the approximations \eqref{approx1} and \eqref{approx3}. 

We first take care of \eqref{approx3}.
Consider conditioning on $\tilde {\mathcal G}$, where all comparisons in $\tilde {\mathcal G}$ are independent; each $\tilde A_{ijk}$ is compared for $L$ times if $\tilde A_{ijk} = 1$. %Pairwise encoding or other encoding methods could be shown similarly with different detailed format of $Z_{\tilde A_{ijk}}$ defined below. 
Now $c_l$ and $A_l$ are induced from $\tilde{\mathcal G}$, and can be dependent. In this case, we can write 
\[
P_{ij} - E[P_{ij} | \tilde{\mathcal G}] = \frac{1}{d} \sum_{\ell=1}^L \sum\limits_{k: k \ne j,i} \tilde A_{ijk} [Z_{{ijk}}^l - EZ_{{ijk}}^l]\,,
\]
where $Z_{{ijk}}^l = 1(A_l=\{i,j\}, c_l=j) / f(\{i,j\}) + 1(A_l=\{i,j,k\}, c_l=j) / f(\{i,j,k\})$, which is again bounded from above and below and independent for any given $\tilde A_{ijk}$. In this case, with a little abuse of notations, we redefine 
\[
n^\ddagger = L \max_{i\neq j} \sum\limits_{k: k \ne j,i} \tilde A_{ijk} \,,\quad n^\dagger = L \max_{i} \sum\limits_{j < k: j,k \ne i} \tilde A_{ijk}\,.
\]
Similar to Section \ref{sec4.1}, conditional on $\tilde{\cG},$ we have
\begin{equation*}
\begin{aligned}
& \max_{i} \bigg|\sum_{j:j\ne i} P_{ij} - \sum_{j:j\ne i} E[P_{ij} |\tilde{ \mathcal G}] \bigg| = \cO_P (d^{-1} \sqrt{n^\dagger \log n})\,, \\
& \max_{i \neq j} \bigg|P_{ij} - E[P_{ij} | \tilde{ \mathcal G}] \bigg| = \cO_P(d^{-1}\sqrt{n^\ddagger \log n})\,, \\
& \sum_{j:j\ne i} E[P_{ij} |\tilde{ \mathcal G}] = \tau_i^{\diamond} e^{-\theta_i^*} \asymp \frac{1}{d} n^\dagger \;\; {\text{(assumption)} }\,.
\end{aligned}
\end{equation*}
\iffalse Therefore, \eqref{approx3} works well if $n^\dagger \gtrsim \log n$. By Bernstein's inequality, it be can easily seen that
\begin{equation*}
\begin{aligned}
n^\dagger & \asymp \bigg[\binom{n-1}{M-1} p  + \log n\bigg] L,\, \quad\quad
\max\bigg\{ \binom{n-2}{M-2} p -\log n, 0\bigg\}L\lesssim n^\ddagger \lesssim \bigg[ \binom{n-2}{M-2} p  + \log n \bigg] L\,.
\end{aligned}
\end{equation*}
\fi

We adapt Assumption \ref{ass4.1} to the following assumption. Note that we have no assumption on $L$, so $L$ can be as low as $1$.

\begin{assumption} \label{ass4.3}
In the PL model with $M$-way complete comparisons, choose $d \asymp n^\dagger$ in the spectral ranking, and assume 
{$\tau_i^{\diamond} e^{-\theta_i^*} \asymp n^\dagger / d$ for all $i \in [n]$, $e^{4\bar\kappa} = o(n)$} and $p \gtrsim { e^{6\bar\kappa}} \textrm{poly}(\log n) / \binom{n-1}{M-1}$.  
\end{assumption}

Under Assumption \ref{ass4.3}, we can prove
\begin{equation*}
n^\dagger \asymp \binom{n-1}{M-1} p L\,, \quad\quad
\max\bigg\{ \binom{n-2}{M-2} p -\log n, 0\bigg\}L\lesssim n^\ddagger \lesssim \bigg[ \binom{n-2}{M-2} p  + \log n \bigg] L\,.
\end{equation*}
with probability $1 - o(1)$.
Note that in $n^\dagger$, by Assumption \ref{ass4.3}, we know the dominating term is  $\binom{n-1}{M-1} p L$. However, in $n^\ddagger$, we have the additional term $\log n$, which comes from the sub-exponential tail decay in Bernstein inequality, and if $p$ is really small, it could happen that $\log n$ dominates $n^\ddagger$. When $p$ is large, that is $\binom{n-2}{M-2} p \gtrsim \log n$, then $n^\dagger \asymp n n^\ddagger$ and Assumption \ref{ass4.1} holds. Therefore, we have a dense comparison graph, and the proof for this part follows in a similar vein as Theorem \ref{thm_consist_normality}. When $p$ is small, that is, $\binom{n-2}{M-2} p \lesssim \log n$, $n^\ddagger \lesssim \log n$ if $L$ is bounded. In this case, we will modify the proof of Theorem \ref{thm_consist_normality} to the random graph case in order to show Theorem \ref{thm_consist_normality_rand} below.
In addition, since $\sum_{j:j\ne i} P_{ij} = \cO_P(n^\dagger / d)$, it makes sense to choose $d \asymp n^\dagger$ in Assumption \ref{ass4.3} to make the diagonal elements of the transition matrix a constant order. Note that in the fixed graph case, we do not need to impose rate assumptions on $d$ as the comparison graph has no randomness.

Next, we verify that under the PL model, Assumption \ref{ass4.2} holds with high probability.

\begin{theorem} \label{thm_b_rand}
Under the PL model and Assumption \ref{ass4.3}, with probability $1-o(1)$, Assumption \ref{ass4.2} holds when we condition on $\tilde{\mathcal G}$ instead of $\mathcal G$.
\end{theorem}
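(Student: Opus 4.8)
\textbf{Proof proposal for Theorem \ref{thm_b_rand}.}

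The plan is to decompose $\Omega$ into its conditional expectation $\bar\Omega := E[\Omega \mid \tilde{\mathcal G}]$ plus a mean-zero fluctuation, and then establish the two displays \eqref{bound_b_1} and \eqref{bound_b_2} separately. The key observation is that, conditioning on $\tilde{\mathcal G}$, both $\Omega$ and $\bar\Omega$ are (up to the scaling $1/(dn)$ absorbed in $\pi^*$) graph Laplacian–type matrices: $\bar\Omega$ has off-diagonal entries $-E[P_{ji}\mid\tilde{\mathcal G}]\pi_j^* = -E[P_{ij}\mid\tilde{\mathcal G}]\pi_i^*$ by detailed balance, so $\bar\Omega = \bar\Omega^\top$ is a weighted Laplacian with row sums zero and nonnegative off-diagonal weights (after sign flip), hence $\bar\Omega\succcurlyeq 0$ with null vector $\mathbf 1$. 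Its nonzero spectrum is governed by the algebraic connectivity of the weighted comparison graph.

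For \eqref{bound_b_1}, the upper bound $\lambda_{\max}(\bar\Omega)\le C_2 n^\dagger/(dn)$ follows from Gershgorin/row-sum bounds together with the entrywise estimates already recorded in the excerpt: $E[\bar\Omega_{ii}\mid\tilde{\mathcal G}] = \sum_{j\ne i}E[P_{ij}\mid\tilde{\mathcal G}]\pi_i^* \asymp n^\dagger/(dn)$ under Assumption \ref{ass4.3} and the dynamic range bound of Assumption \ref{Assumption_dynamic_range_bound}. The lower bound $\lambda_{\min,\bot}(\bar\Omega)\gtrsim n^\dagger/(dn)$ is the substantive part: I would show that the expected weighted adjacency restricted to $\mathbf 1^\perp$ behaves like a scaled complete graph. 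Concretely, each entry $E[P_{ij}\mid\tilde{\mathcal G}]\pi_i^*$ is, conditionally on $\tilde{\mathcal G}$, a sum over triangles $k$ through edge $(i,j)$ of terms of order $\tilde A_{ijk}\cdot L\cdot 1/(dn)$, so $\bar\Omega$ off-diagonals have magnitude $\asymp (\text{number of compared triples containing }i,j)\cdot L/(dn)$. Since $p\gtrsim \mathrm{poly}(\log n)/\binom{n-1}{M-1}$, standard binomial concentration gives that this count concentrates around $\binom{n-2}{M-2}p$ uniformly over all pairs, so $\bar\Omega$ is, up to constants and the null space, $\asymp \frac{n^\dagger}{dn}$ times the normalized Laplacian of the complete graph $K_n$, whose nonzero eigenvalues are all $1$. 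Writing $\bar\Omega = \frac{n^\dagger}{dn}(c\,(I - \tfrac1n\mathbf 1\mathbf 1^\top) + E)$ with $\|E\|$ small by the concentration above, Weyl's inequality delivers the two-sided bound.

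For \eqref{bound_b_2}, I would bound $\|\Omega - \bar\Omega\|$ via a matrix concentration argument. Conditioning on $\tilde{\mathcal G}$, the comparison outcomes $\{y_q^{(\ell)}\}$ are independent across distinct triples $q$ and replications $\ell$, and $\Omega - \bar\Omega$ is a sum of $L\mathcal N$ independent mean-zero random matrices, one per $(q,\ell)$, each supported on the (at most $3\times 3$) principal submatrix indexed by the triple and each of operator norm $O(1/(dn))$ (using boundedness of $f$ and the dynamic range). The matrix Bernstein inequality then yields $\|\Omega - \bar\Omega\| \lesssim \sqrt{v\log n} + \frac{\log n}{dn}$ where $v$ is the variance proxy $\|\sum_{q,\ell}E[(\cdot)^2\mid\tilde{\mathcal G}]\|$. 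A direct computation shows $v\asymp (n^\dagger/n)\cdot \frac{1}{(dn)^2}\cdot n = n^\dagger/(d^2n^2)$ — i.e., each row accumulates $\asymp n^\dagger$ triangle contributions of squared size $1/(dn)^2$ — so $\sqrt{v\log n}\asymp \frac{\sqrt{n^\dagger\log n}}{dn}$, which is $o(n^\dagger/(dn))$ precisely because $n^\dagger\gtrsim \mathrm{poly}(\log n)$ under Assumption \ref{ass4.3}; the $\frac{\log n}{dn}$ term is likewise lower order. This establishes \eqref{bound_b_2} with probability $1-o(1)$.

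The main obstacle is the lower eigenvalue bound $\lambda_{\min,\bot}(\bar\Omega)\gtrsim n^\dagger/(dn)$: it is not enough that the graph be connected, one genuinely needs the expected weighted graph to be a good spectral expander, which is where the condition $p\gtrsim\mathrm{poly}(\log n)/\binom{n-1}{M-1}$ and the uniform-over-pairs binomial concentration of triangle counts are essential. The remaining care is bookkeeping: one must track the asymmetry allowance (each $\sum_{l}1(i\in A_l)$ of the same order $n^\dagger$), ensure all the hidden constants depend only on $\bar\kappa$, and confirm that the various $o_P$ and high-probability statements can be taken uniformly so that the intersection still has probability $1-o(1)$.
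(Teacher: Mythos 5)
Your treatment of \eqref{bound_b_2} is essentially the paper's argument: decompose $\Omega-E[\Omega\mid\tilde{\mathcal G}]$ into independent mean-zero matrices indexed by $(q,\ell)$, each supported on a $3\times3$ block of operator norm $O(1/(dn))$, compute the variance proxy $n^{\dagger}/(dn)^{2}$, and apply matrix Bernstein to get $O_P\bigl((dn)^{-1}(\sqrt{n^{\dagger}\log n}+\log n)\bigr)=o_P(n^{\dagger}/(dn))$ since $n^{\dagger}\gtrsim\mathrm{poly}(\log n)$. The upper bound in \eqref{bound_b_1} via row sums is also fine.

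The gap is in your lower bound $\lambda_{\min,\bot}(E[\Omega\mid\tilde{\mathcal G}])\gtrsim n^{\dagger}/(dn)$. You claim that the number of compared triples containing a given pair $(i,j)$ ``concentrates around $\binom{n-2}{M-2}p$ uniformly over all pairs,'' so that $E[\Omega\mid\tilde{\mathcal G}]$ is entrywise close to a scaled complete-graph Laplacian. This fails in exactly the regime Assumption \ref{ass4.3} is designed to allow: $p\gtrsim\mathrm{poly}(\log n)/\binom{n-1}{M-1}$ only guarantees $\binom{n-2}{M-2}p\gtrsim\mathrm{poly}(\log n)/n$, which can be $o(1)$. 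Then the per-pair count is a sum of $\asymp n$ Bernoullis with mean $o(1)$; it equals zero for a constant fraction of pairs and reaches order $\log n/\log\log n$ for others, so there is no uniform multiplicative concentration, and an entrywise bound on your perturbation $E$ (times $n$, via a naive norm bound) is far too large. The fix --- and what the paper does --- is to perform the concentration at the level of the quadratic form: write $v^{\top}E[\Omega\mid\tilde{\mathcal G}]v\asymp\frac{L}{dn}v^{\top}L_{\tilde{\mathcal G}}v$ where $L_{\tilde{\mathcal G}}=\sum_{i<j<k}\Delta_{ijk}$ is the unweighted Laplacian built from the independent indicators $\tilde A_{ijk}$, note that $E[L_{\tilde{\mathcal G}}]$ (expectation over the graph) is exactly proportional to the complete-graph Laplacian with nonzero eigenvalues $\asymp n^{2}p$, and bound $\|L_{\tilde{\mathcal G}}-E[L_{\tilde{\mathcal G}}]\|=O_P(\sqrt{n^{2}p\log n}+\log n)$ by matrix Bernstein applied to the $\Delta_{ijk}$'s, which is $o_P(n^{2}p)$ as soon as $n^{2}p\gg\log n$ --- a condition implied by Assumption \ref{ass4.3} even when per-pair counts do not concentrate. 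Weyl's inequality then gives the two-sided spectral bound. Replacing your entrywise step with this operator-norm concentration closes the gap; the rest of your outline stands.
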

\iffalse
As we have seen in \eqref{approx3}, 
\[
\frac{\bar\pi_i - \pi_i^*}{\pi_i^*} = \cO_P(J_i^*) = \cO_P(1/\sqrt{n^\dagger})\,,
\] 
conditioning on $\tilde{\mathcal G}$, where $n^\dagger$ is correspondingly defined above. 
 \fi 
  We next hope to show that under Assumptions \ref{ass4.3}, the spectral estimator $\tilde\theta_i$ has the uniform approximation: the differences between $\tilde\theta_i - \theta_i^*$ and $J_i^*$ for all $i \in [n]$ are $o_P(1/\sqrt{n^\dagger})$. The key step is still the verification of \eqref{approx1} under this weaker Assumptions \ref{ass4.3} for a random comparison graph.

\begin{theorem}
\label{thm_consist_normality_rand}
Under the PL model and Assumptions \ref{Assumption_dynamic_range_bound} and  \ref{ass4.3}, the spectral estimator $\tilde\theta_i$ has the uniform approximation: $\tilde\theta_i-\theta_i^*=J_i^*+o_P\big({1}/{\sqrt{n^\dagger}}\big)$, uniformly for all $i\in[n]$. Therefore, the spectral estimator~\eqref{eq:spectrla_estimator} satisfies 
\begin{align}
    \|\tilde\theta-\theta^{*}\|_{\infty} \lesssim {e^{\bar\kappa}}  \sqrt{\frac{\log n}{\binom{n-1}{M-1}pL}},\label{l_infty_consist_rand}
\end{align}
with probability $1-o(1)$.
In addition, 
\begin{align*}
   \rho_i(\theta) (\tilde\theta_i-\theta_i^*)\Rightarrow N(0,1),
\end{align*}
for all $i\in[n]$ with 
$\rho_i(\theta) = \mbox{Var}(J_i^*\given \tilde{\mathcal{G}})^{-1/2}$, where in the formula of $\mbox{Var}(J_i^*\given \tilde{\mathcal{G}})$ we can choose both $\theta = \theta^*$ and $\theta = $ any consistent estimator of $\theta^*$.
\end{theorem}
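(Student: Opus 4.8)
\textbf{Proof proposal for Theorem~\ref{thm_consist_normality_rand}.}

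The plan is to mirror the proof strategy of Theorem~\ref{thm_approx} and Theorem~\ref{thm_consist_normality}, but carefully tracking where the random-graph sparsity (Assumption~\ref{ass4.3}) requires a genuinely different argument from the fixed-graph case. The backbone consists of three pieces: (i) justify the two approximations \eqref{approx1} and \eqref{approx3}, so that $\tilde\theta_i-\theta_i^* = J_i^* + o_P(1/\sqrt{n^\dagger})$ uniformly in $i$; (ii) deduce the $\ell_\infty$-rate \eqref{l_infty_consist_rand} from the explicit concentration of $\|J^*\|_\infty$; and (iii) establish asymptotic normality of $J_i^*$ (hence of $\tilde\theta_i-\theta_i^*$) after normalizing by its conditional variance, and argue that plugging in a consistent $\theta$ into $\rho_i(\cdot)$ does not change the limit.

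For step~(i), I would first recall that Theorem~\ref{thm_b_rand} already gives Assumption~\ref{ass4.2} (the eigenvalue bounds \eqref{bound_b_1}--\eqref{bound_b_2} for $\Omega$ and $E[\Omega\mid\tilde{\mathcal G}]$) with probability $1-o(1)$ under the PL model, so the spectral perturbation machinery behind Theorem~\ref{thm_approx} can be invoked verbatim once \eqref{approx1} is verified. The verification of \eqref{approx3} follows from the concentration bounds displayed just before Assumption~\ref{ass4.3}: conditional on $\tilde{\mathcal G}$, the quantities $\sum_{j\ne i}P_{ij}$ and each $P_{ij}$ concentrate around their conditional expectations at rates $d^{-1}\sqrt{n^\dagger\log n}$ and $d^{-1}\sqrt{n^\ddagger\log n}$ respectively, while $\sum_{j\ne i}E[P_{ij}\mid\tilde{\mathcal G}]\asymp n^\dagger/d$; since Assumption~\ref{ass4.3} forces $n^\dagger\gtrsim\mathrm{poly}(\log n)$, the denominator dominates its fluctuation, so replacing the denominator by its expectation is lower order. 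The delicate point is \eqref{approx1}, i.e.\ controlling $\hat\pi-\bar\pi$ (equivalently the higher-order term in the fixed-point expansion $\hat\pi_i = \sum_{j\ne i}P_{ji}\hat\pi_j / \sum_{j\ne i}P_{ij}$). I would write $\hat\pi-\pi^*$ via the identity $(\hat\pi-\pi^*)^\top\Omega = (\hat\pi-\pi^*)^\top(\Omega - \widehat\Omega)$ or the analogous one-step linearization used in the fixed-graph proof, bound the leading term using $\lambda_{\min,\perp}(\Omega)\asymp n^\dagger/(dn)$ from Theorem~\ref{thm_b_rand}, and then bound the residual using an $\ell_\infty$/leave-one-out type argument. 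Here the sparse regime $\binom{n-2}{M-2}p\lesssim\log n$ (so $n^\ddagger\lesssim\log n$) is where the fixed-graph bound $\max\{n^\ddagger,1\}\sqrt{n\log n}=o(n^\dagger)$ may fail, and I expect the main obstacle to be re-deriving the $\ell_\infty$ control of the residual using the independence structure across the $\mathcal N$ hyper-edges (via the reorganized decomposition $J_i^* = d^{-1}\sum_{\ell,q}J_{iq\ell}^\diamond(\theta^*)$ in Remark~\ref{Remark_ranking_inference_encoding}) rather than the crude pairwise-degree bound — this is exactly the ``modify the proof to the random graph case'' remark made after Assumption~\ref{ass4.3}, and likely uses a Bernstein/matrix-Bernstein inequality plus a union bound exploiting that each item participates in $\Theta(n^\dagger/L)$ independent triplets.

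Given step~(i), step~(ii) is routine: $\|\tilde\theta-\theta^*\|_\infty \approx \|J^*\|_\infty$, and since $J_i^* = d^{-1}\sum_{\ell,q}J_{iq\ell}^\diamond(\theta^*)$ is a sum of independent, bounded, mean-zero terms (conditional on $\tilde{\mathcal G}$) with conditional variance of order $1/n^\dagger$, Bernstein's inequality plus a union bound over $i\in[n]$ yields $\|J^*\|_\infty\lesssim\sqrt{\log n/n^\dagger}$, and $n^\dagger\asymp\binom{n-1}{M-1}pL$ with probability $1-o(1)$ gives \eqref{l_infty_consist_rand}. For step~(iii), fixing $i$, I would apply the Lindeberg/Lyapunov CLT to the triangular array $\{J_{iq\ell}^\diamond(\theta^*)/(d\,\mathrm{Var}(J_i^*\mid\tilde{\mathcal G})^{1/2})\}$, checking the Lyapunov ratio vanishes because each summand is $O(1/n^\dagger)$ in magnitude while the variance is $\Theta(1/n^\dagger)$, so the ratio is $O(1/\sqrt{\#\text{terms}}) = O(1/\sqrt{n^\dagger/L}) = o(1)$. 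Combined with $\tilde\theta_i-\theta_i^* = J_i^*+o_P(1/\sqrt{n^\dagger})$ and $\mathrm{Var}(J_i^*\mid\tilde{\mathcal G})\asymp 1/n^\dagger$, Slutsky gives $\rho_i(\theta^*)(\tilde\theta_i-\theta_i^*)\Rightarrow N(0,1)$. Finally, to replace $\theta^*$ by a consistent estimator $\hat\theta$ inside $\rho_i(\cdot)$, I would note that $\rho_i(\theta)$ is a smooth (Lipschitz on the compact range guaranteed by Assumption~\ref{Assumption_dynamic_range_bound}) function of the coordinates $\{e^{\theta_u}\}_{u\in A_l}$ appearing in each summand, so $\|\hat\theta-\theta^*\|_\infty = o_P(1)$ (which holds by \eqref{l_infty_consist_rand} applied to the first-step estimator) implies $\rho_i(\hat\theta)/\rho_i(\theta^*) = 1+o_P(1)$ uniformly in $i$, and another Slutsky argument closes the proof.
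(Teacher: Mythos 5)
Your overall architecture matches the paper's proof: the dense regime ($\binom{n-2}{M-2}p\gtrsim\log n$) reduces to the fixed-graph Theorem~\ref{thm_consist_normality} because Assumption~\ref{ass4.1} then holds, the sparse regime is where new work is needed, and you correctly diagnose that the failure point is the crude Cauchy--Schwarz bound on $\sum_{j\ne i}P_{ji}\pi_j^*\delta_j$ (which costs a factor of the pairwise degree) and that the remedy is a leave-one-out decoupling followed by Bernstein's inequality. Steps (ii) and (iii) — Bernstein plus union bound for the rate, Lyapunov CLT for the marginal, and a Lipschitz/Slutsky argument for plugging in a consistent $\theta$ — are also exactly what the paper does.

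The one substantive piece your sketch leaves unaccounted for is the price of the decoupling itself. Writing $\delta_j=\delta_j^{(-i)}+(\delta_j-\delta_j^{(-i)})$ so that $P_{ji}$ and $\delta_j^{(-i)}$ become independent introduces the new error term $\sum_{j\ne i}P_{ji}(\hat\pi_j-\hat\pi_j^{(-i)})/(\pi_i^*\sum_{j\ne i}P_{ij})$, and showing this is $o_P(1/\sqrt{n^\dagger})$ is not a formality: the paper devotes a standalone result (Lemma~\ref{lem:pi_perturb_leave_one_out}) to proving $\|\hat\pi-\hat\pi^{(-i)}\|_2\lesssim\sqrt{\mathrm{poly}(\log n)}/(n\sqrt{n^\dagger})$, and its proof invokes a Markov-chain stationary-distribution perturbation theorem (Theorem~8 of \cite{chen2019spectral}), which in turn requires verifying both $\|\hat\pi^\top(P-\tilde P)\|_2$ is small and $\|P-\tilde P\|_2=o_P(1)$, the latter exploiting that in the sparse regime each pair shares only $O_P(\log n)$ common triplets. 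Your proposal never mentions needing to compare the stationary distributions of the original and leave-one-out Markov chains, so as written the argument would not close; everything else is essentially the paper's route.
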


\begin{remark}
    {The two-step estimator under optimal weight $f(A_l) = \sum_{u\in A_l} e^{\theta_u^*}$ (can be consistently estimated with a small proportion of a separate dataset) achieves the same variance as the MLE estimator, which matches the Cr{a}\'mer Rao lower bound among all estimators \citep{fan2022uncertainty,fan2022ranking}.}
\end{remark}

{\begin{corollary}\label{top-k-ranking} Under the conditions of Theorem \ref{thm_consist_normality_rand}, if we have $\theta^*_{(K)}-\theta^*_{(K+1)}\ge \Delta,$ with $\theta_{(i)}^*$ denoting the underlying score of the item with true rank $i$ for $i\in[n]$, and when the sample complexity satisfies 
		\begin{align*}
	 e^{2\bar\kappa}\Delta^{-2}\cdot \log n={\cO}\bigg(\binom{n-1}{M-1}pL \bigg),
		\end{align*}
		 we have $\{i\in [n],\hat r_i\le K\}=\{i\in [n],r_i^*\le K\}$ (the selected top-K set is identical to the true top-K set), where $\hat r_i,r_i^*$ denote the empirical rank of $\hat\theta_i$ among $\{\hat\theta_i,i\in[n]\}$ and true rank of the i-th item, respectively.
		%Here $\theta_{(i)}^*$ denotes the underlying score of the item with true rank $i$ for $i\in[n]$.
\end{corollary}

We remark that when $M=2,$ and $\bar\kappa=\cO(1),$ our conclusion from Corollary \ref{top-k-ranking} reduces to the conclusion of Theorem 1 in \cite{chen2019spectral}.}

\subsection{Validity Justification for Bootstrap Procedure} \label{sec4.3}
The primary goal of this section is to justify the validity of the proposed bootstrap procedure in Section~\ref{sec:bootstrap}. Recall that the targeted quantity $T_{\mathcal{M}}$ is the maximum modulus of the random vector 
\begin{align*}
    \Delta_{\mathcal{M}} := \left\{\frac{\tilde{\theta}_{k} - \tilde{\theta}_{m} - (\theta_{k}^{*} - \theta_{m}^{*})}{\tilde{\sigma}_{km}}\right\}_{m \in \mathcal{M}, k\neq m}. 
\end{align*}
For each marginal of $\Delta_{\mathcal{M}}$, the asymptotic normality can be similarly established following Theorem~\ref{thm_consist_normality}. However, studying the asymptotic distribution of $\|\Delta_{\mathcal{M}}\|_{\infty}$ becomes quite challenging as its dimension $(n - 1)|\mathcal{M}|$ can increase with the number of items. In particular, the traditional multivariate central limit theorem for $\Delta_{\mathcal{M}}$ may no longer be valid asymptotically~\citep{Portnoy1986}. To handle the high dimensionality, we shall invoke the modern Gaussian approximation theory~\citep{CCK2017, Chernozhukov2019} in order to derive the asymptotic distribution of $\|\Delta_{\mathcal{M}}\|_{\infty}$, shown in Theorem~\ref{Theorem_Gaussian_Approximation_T}. Moreover, the validity of our multiplier bootstrap procedure is justified in the following theorem.  

%\textcolor{red}{High-dimensional Gaussian approximation result and bootstrap procedure}
%\begin{align*}
%    T_{\mathcal{M}} \approx \max_{m \in \mathcal{M}} \max_{k \neq m} \left|\frac{1}{d \tilde{\sigma}_{km}} \sum_{l \in \mathcal{D}} \{J_{kl}(\theta^{*}) - J_{ml}(\theta^{*})\} \right|. 
%\end{align*}

\begin{theorem}
\label{Theorem_Bootstrap_Validity}
Assume {$e^{3\bar{\kappa}} (\log n)^{2} = o(n)$ and $e^{5\bar{\kappa}}n^{\ddagger} n^{1/2} (\log n)^{3} = o(n^{\dagger})$}. Then, under the conditions of Theorem~\ref{thm_approx}, we have 
\begin{align*}
    |\PP(T_{\mathcal{M}} > \mathcal{Q}_{1 - \alpha}) - \alpha| \to 0. 
\end{align*}
\end{theorem}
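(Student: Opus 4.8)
\emph{Proof plan.} The strategy is to reduce $T_{\mathcal{M}}$ to the maximum modulus of a normalized sum of independent mean-zero summands, apply the high-dimensional Gaussian approximation of \citet{CCK2017, Chernozhukov2019}, show that the conditional law of the multiplier statistic $G_{\mathcal{M}}$ is governed by the same Gaussian law, and then convert distributional closeness into a statement about quantiles via Gaussian anti-concentration. First, by Theorem~\ref{thm_approx} we have $\tilde\theta_i - \theta_i^* = J_i^* + o_P(1/\sqrt{n^\dagger})$ uniformly in $i$, and from Assumption~\ref{ass4.2} together with the consistency in Theorem~\ref{thm_consist_normality} one checks $\tilde\sigma_{km}^2 \asymp 1/n^\dagger$ uniformly, with $\sigma_{km}^2 = \Var\{J_k^*-J_m^*\mid\mathcal{G}\}$ of the same order and bounded below. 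Writing $J_k^*-J_m^* = d^{-1}\sum_{l\in\mathcal{D}}\{J_{kl}(\theta^*)-J_{ml}(\theta^*)\}$ as a sum of independent bounded zero-mean terms (each coordinate involving $\asymp n^\dagger$ nonzero summands of size $\asymp d/n^\dagger$), one obtains
\[
T_{\mathcal{M}} = \max_{m\in\mathcal{M}}\max_{k\ne m}\Big|\frac{1}{d\tilde\sigma_{km}}\sum_{l\in\mathcal{D}}\{J_{kl}(\theta^*)-J_{ml}(\theta^*)\}\Big| + o_P\Big(\frac{1}{\sqrt{\log n}}\Big),
\]
where the $o_P(1/\sqrt{\log n})$ remainder is exactly where the strengthened condition $(n^\ddagger\vee 1)n^{1/2}\log n = o(n^\dagger)$ is used: dividing the $o_P(1/\sqrt{n^\dagger})$ coupling error of Theorem~\ref{thm_approx} by $\tilde\sigma_{km}\asymp 1/\sqrt{n^\dagger}$ and demanding the result be $o_P(1/\sqrt{\log n})$ (negligible after a union bound over the $\le n|\mathcal{M}|$ coordinates) requires one extra $\sqrt{\log n}$ factor beyond Assumption~\ref{ass4.1}; replacing $\tilde\sigma_{km}$ by $\sigma_{km}$ costs another $o_P(1/\sqrt{\log n})$ using $\|\tilde\theta-\theta^*\|_\infty\lesssim\sqrt{\log n/n^\dagger}$ and the Lipschitz continuity of $\theta\mapsto J_{il}(\theta)$ and $\theta\mapsto\rho_i(\theta)$. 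The moment/envelope hypotheses of the Gaussian approximation theorem reduce here to $n^\dagger\gg\mathrm{poly}(\log n)$, which holds since Assumption~\ref{ass4.1} forces $n^\dagger$ to grow polynomially in $n$; this yields $\sup_t|\PP(T_{\mathcal{M}}\le t)-\PP(\|Z\|_\infty\le t)|\to 0$ for a centered Gaussian vector $Z=(Z_{km})$ with covariance equal to the correlation matrix $\Sigma$ of $\{(J_k^*-J_m^*)/\sigma_{km}\}$ — this is the content of Theorem~\ref{Theorem_Gaussian_Approximation_T}, on which we may rely. The two-sided structure is handled by running the argument on the symmetrized coordinate set $\{\pm(J_k^*-J_m^*)/\sigma_{km}\}$.

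Second, conditionally on the data, $G_{\mathcal{M}}$ in \eqref{bootstrap} is the maximum modulus of a centered Gaussian vector with covariance $\hat\Sigma$ whose $(km,k'm')$ entry is the empirical inner product $(d^2\tilde\sigma_{km}\tilde\sigma_{k'm'})^{-1}\sum_{l}\{J_{kl}(\tilde\theta)-J_{ml}(\tilde\theta)\}\{J_{k'l}(\tilde\theta)-J_{m'l}(\tilde\theta)\}$. The key estimate is the uniform covariance comparison $\|\hat\Sigma-\Sigma\|_{\max} = o_P((\log n)^{-2})$, obtained by (i) a Hoeffding/Bernstein concentration bound, with a union bound over the $O(n^4)$ pairs of coordinates, for the centered empirical second moments evaluated at $\theta^*$, and (ii) controlling the plug-in error from replacing $\theta^*$ by $\tilde\theta$ through the $\ell_\infty$ rate of Theorem~\ref{thm_consist_normality}, the Lipschitz property of the summands, and the analogous stability of $\tilde\sigma_{km}$ built from \eqref{eq_sigma_hat_km}. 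The Gaussian comparison inequality of \citet{CCK2017} then gives $\sup_t|\PP^*(G_{\mathcal{M}}\le t)-\PP(\|Z\|_\infty\le t)| = o_P(1)$, hence, combined with the first step, $\sup_t|\PP^*(G_{\mathcal{M}}\le t)-\PP(T_{\mathcal{M}}\le t)| = o_P(1)$. Finally, invoking the anti-concentration bound for maxima of Gaussian vectors (which bounds the density of $\|Z\|_\infty$ uniformly near its quantiles), small Lévy--Prokhorov distance transfers to closeness of quantiles, so $\mathcal{Q}_{1-\alpha} = \inf\{z:\PP^*(G_{\mathcal{M}}\le z)\ge 1-\alpha\}$ approximates the $(1-\alpha)$-quantile of $T_{\mathcal{M}}$ well enough that $|\PP(T_{\mathcal{M}}>\mathcal{Q}_{1-\alpha})-\alpha|\to 0$. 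The random-graph (PL) version follows the same route with $J_{iq\ell}^{\diamond}(\theta^*)$ in place of $J_{il}(\theta^*)$, as outlined in Remark~\ref{Remark_ranking_inference_encoding}.

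\textbf{Main obstacle.} The hardest part is the uniform covariance comparison $\|\hat\Sigma-\Sigma\|_{\max} = o_P((\log n)^{-2})$: one must propagate the $\ell_\infty$ estimation error of $\tilde\theta$ through the nonlinear variance expression \eqref{eq_sigma_hat_km} while simultaneously controlling the stochastic fluctuation of $O(n^4)$ empirical cross-moments, each a sum of only $O(n^\dagger)$ bounded but highly heterogeneous terms, at a polylogarithmic rate fast enough for the Gaussian comparison step. It is precisely in this step, and in pushing the Theorem~\ref{thm_approx} coupling error through the maximum over $\asymp n|\mathcal{M}|$ coordinates, that the strengthened sample-size condition $(n^\ddagger\vee 1)n^{1/2}\log n = o(n^\dagger)$ becomes indispensable.
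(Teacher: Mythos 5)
Your proposal is correct and follows essentially the same route as the paper: linearize $T_{\mathcal{M}}$ via Theorem~\ref{thm_approx} (with the strengthened condition used exactly as you say, to make the coupling error $o_P(1/\sqrt{n^{\dagger}\log n})$ so it survives division by $\sigma_{km}\asymp 1/\sqrt{n^{\dagger}}$), invoke the Chernozhukov et al.\ Gaussian approximation and multiplier-bootstrap theorems, establish the ratio consistency of $\tilde\sigma_{km}$, and control the plug-in error in the bootstrap statistic. The only organizational difference is that you carry out the covariance comparison $\|\hat\Sigma-\Sigma\|_{\max}=o_P((\log n)^{-2})$ by hand, whereas the paper bounds $\sum_{l}|\hat X_{mkl}-X_{mkl}|^{2}$ in $\ell_2$ and delegates the sampling fluctuation of the empirical covariance to Theorem~2.2 of Chernozhukov et al.\ (2019) applied to the oracle multiplier statistic $\bar G_{\mathcal{M}}$ evaluated at $\theta^{*}$ and $\sigma_{km}$.
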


\begin{remark}
Theorem~\ref{Theorem_Bootstrap_Validity} indicates that the estimated critical value $\mathcal{Q}_{1 - \alpha}$ from the Gaussian multiplier bootstrap indeed controls the significance level of the simultaneous confidence intervals~\eqref{eq_rank_SCIs_feasible} for $\{r_{m}\}_{m \in \mathcal{M}}$ to the prespecified level $\alpha$, that is, 
\begin{align*}
    \PP\Big(r_{m} \in [\mathcal{R}_{mU}, \mathcal{R}_{mR}] \mbox{ for all } m \in \mathcal{M} \Big) \geq 1 - \alpha + o(1). 
\end{align*}
Recently~\citet{fan2022ranking} proposed a similar approach to construct simultaneous confidence intervals for ranks in the context of the PL model with only the top choice observed for each comparison, which, however, requires the number of comparisons $L$ for each connected item to be sufficiently large such that $L \gtrsim \mathrm{poly}(\log n)$. In contrast, our procedure in Section~\ref{sec:bootstrap} works without any constraints on the number of comparisons for each $A_{l}$ (i.e., we even allow $L=1$ for all comparisons) and is thus much more widely applicable, since in many real problems, sets of size $M$ can be compared at different times and sometimes only once.  
\end{remark}

\section{Numerical Studies} 
\label{sec:numerical}

In this section, we validate the methodology and examine the theoretical results introduced in Sections \ref{sec:ranking_estimation} and \ref{sec:ranking_inference}. We conducted comprehensive simulation studies, but due to the page limit, we relegate the results to the Appendix \ref{sec:simu} and only briefly summarize our key findings here. We first validate consistency and asymptotic distribution of the spectral estimator, check the efficacy of the Gaussian bootstrap. All simulations match with our theoretical results perfectly. We also provide examples for constructing one-sample and two-sample confidence intervals for ranks and carrying out hypothesis testing Examples \ref{example1}-\ref{Example_two_sample_set}. Finally, we empirically investigate the connections between the spectral method and MLE. In particular, we found the two-step or oracle weight spectral method behaves almost identically to MLE.

\section{Real Data Analysis}\label{sec:realdata}
We present the spectral ranking inferences for two real datasets in this section. The first one is about the ranking of statistics journals, which is based on pairwise comparisons (Journal B citing Journal A means A is preferred over B, which is consistent with the fact that good papers usually have higher citations). In particular, we can test for two periods of time, whether the journal ranking has changed significantly. The second application is about movie ranking, which is based on multiway comparisons of different sizes (3 or 4 movies are given to people to rank). The movie comparison graph shows strong heterogeneity in node degrees. Therefore, this comparison graph should be better modeled as a fixed graph without homogeneous sampling. In both cases, we will report the results from the Two-Step spectral estimator and the vanilla spectral estimator in Appendix \ref{sec:add_real_data}.

\subsection{Ranking of Statistics Journals}
In this section, we study the Multi-Attribute Dataset on Statisticians (MADStat) which contains citation information from 83,331 papers published in 36 journals between 1975-2015. The data set was collected and studied by~\citet{ji2022co, ji2023meta}.  

We follow \cite{ji2023meta}'s convention to establish our pairwise comparison data. We will use journals' abbreviations given in the data. We refer interested readers to the complete journal names on the data website 
\url{https://dataverse.harvard.edu/dataset.xhtml?persistentId=doi:10.7910/DVN/V7VUFO}. Firstly, we excluded three probability journals—AOP, PTRF, and AIHPP, due to their fewer citation exchanges with other statistics publications. Hence, our study comprises of a total of 33 journals.  Secondly, we only examine papers published between 2006-2015. For instance, if we treat 2010 as our reference year, we only count comparison results indicating `Journal A is superior to Journal B' if, and only if, a paper published in Journal B in 2010 has cited another paper that was published in Journal A between the years 2001 and 2010. This approach favors more recent citations, thus allowing the journal rankings to better reflect the most current trends and information. Finally, we chose to divide our study into two periods, 2006-2010 and 2011-2015, to detect the possible rank changes of journals. 

Utilizing the data, we showcase the ranking inference results, summarized in Table \ref{tab_journal1}. These results include two-sided, one-sided, and uniform one-sided confidence intervals for ranks within each of the two time periods (2006–2010 and 2011–2015). We calculate these intervals using the Two-Step Spectral method over the fixed comparison graph (the results based on the one-step Vanilla Spectral method are presented in Appendix \ref{vanilla_spectral_real_data}) and using the bootstrap method detailed in Sections \ref{sec:onesam_twosided_inf} - \ref{Section_Application_Rank_inference}.

\begin{table}
    \centering
     	\def\arraystretch{0.95}
      \setlength\tabcolsep{3.0pt}
    \begin{tabular}{c|r r r r r r|r r r r r r} 
    \toprule
               \multicolumn{1}{c|}{} & \multicolumn{6}{c|}{$2006-2010$} & \multicolumn{6}{c}{$2011-2015$}\\  
       Journal & \multicolumn{1}{c}{$\tilde{\theta}$} & $\tilde{r}$ & TCI & OCI & UOCI & Count & \multicolumn{1}{c}{$\tilde{\theta}$} & $\tilde{r}$ & TCI & OCI & UOCI & Count\\ \hline
JRSSB &  $1.654$   &   $1$   &   $[1, 1]$  &  $[1, n]$   & $[1, n]$       &  $5282$   &    $1.553$  &  $1$  &  $[1, 2]$  &  $[1, n]$  &  $[1, n]$ & $5513$\\
AoS   &  $1.206$   &    $3$  &  $[2, 4]$  &  $[2, n]$ &   $[2, n]$       &   $7674$   &   $1.522$  &  $2$  &  $[1, 2]$  &  $[1, n]$  &  $[1, n]$& $11316$\\
Bka   &  $1.316$   &   $2$   &   $[2, 3]$     &   $[2, n]$  &   $[2, n]$  & $5579$  &   $1.202$  &  $3$   &  $[3, 3]$  &  $[3, n]$  &   $[3, n]$ & $6399$\\
JASA  &  $1.165$   &    $4$  &    $[3, 4]$  &  $[3, n]$  &  $[3, n]$     &   $9652$   &     $1.064$  &  $4$  &  $[4, 4]$  &  $[4, n]$  &  $[4, n]$& $10862$\\
JMLR  & $-0.053$   & $20$     &   $[14, 25]$  & $[15, n]$ &  $[13, n]$      & $1100$   &    $0.721$  &  $5$  &   $[5, 7]$  &  $[5, n]$   & $[5, n]$& $2551$\\
Biost &  $0.288$   &   $13$  &  $[10, 18]$  &   $[10, n]$  &   $[9, n]$  & $2175$  &     $0.591$ &   $6$  &   $[5, 9]$  &  $[5, n]$  &   $[5, n]$  & $2727$\\  
Bcs   &  $0.820$   &   $5$   &  $[5, 7]$    &   $[5, n]$   &   $[5, n]$  & $6614$  &       $0.571$ &  $7$  &   $[5, 9]$  &  $[6, n]$  &    $[5, n]$  & $6450$\\  
StSci &  $0.668$   &    $7$  & $[5, 9]$  &   $[5, n]$  &  $[5, n]$     &  $1796$   &   $0.437$  &  $8$  &  $[6, 13]$  &  $[6, n]$   & $[6, n]$& $2461$\\
Sini  &  $0.416$   &   $10$ &   $[9, 14]$   &    $[9, n]$  &   $[8, n]$     &  $3701$    &   $0.374$  &  $9$  &  $[8, 13]$  &  $[8, n]$  &  $[8, n]$& $4915$\\
JRSSA &  $0.239$   &    $14$ &  $[10, 20]$  &    $[10, n]$  &    $[9, n]$  &  $893$    &   $0.370$  & $10$ &   $[6, 13]$  &  $[8, n]$  &  $[6, n]$& $865$\\
JCGS  &  $0.605$   &    $8$ &   $[6, 9]$    &   $[6, n]$  &  $[6, n]$     &  $2493$    &     $0.338$  & $11$  &  $[8, 13]$  &  $[8, n]$  &  $[8, n]$& $3105$\\
Bern  &  $0.793$   &    $6$  &    $[5, 8]$   & $[5, n]$  &  $[5, n]$     &    $1575$   &     $0.336$ &   $12$ &   $[8, 13]$ &  $[8, n]$  &  $[8, n]$& $2613$\\
ScaJS &  $0.528$   &    $9$  &   $[7, 12]$  &    $[7, n]$   &   $[6, n]$     & $2442$ &   $0.258$  & $13$  &  $[8, 13]$  &   $[9, n]$   & $[8, n]$& $2573$\\
JRSSC &  $0.113$   &   $15$  & $[11, 22]$  &  $[11, n]$   &     $[11, n]$    &  $1401$   &     $0.020$ &  $14$  &   $[14, 19]$ &  $[14, n]$ &   $[12, n]$& $1492$\\
AoAS  &  $-1.463$  &   $30$  &   $[30, 33]$  &  $[30, n]$ &   $[30, n]$     &  $1258$   &     $-0.017$ &  $15$ &  $[14, 20]$ &  $[14, n]$ &  $[14, n]$& $3768$\\
CanJS &  $0.101$   &  $17$   &  $[11, 22]$ &  $[11, n]$  & $[11, n]$       &   $1694$   &   $-0.033$ &  $16$  & $[14, 20]$ &  $[14, n]$  & $[14, n]$& $1702$\\
JSPI  &  $-0.327$  & $26$    &     $[24, 26]$  & $[24, n]$  & $[22, n]$    &  $6505$    &    $-0.046$ &  $17$ &  $[14, 20]$ &  $[14, n]$  & $[14, n]$& $6732$\\
JTSA  &  $0.289$   &   $12$  &  $[9, 18]$   &  $[10, n]$  &  $[8, n]$     &   $751$   &     $-0.101$ &  $18$  & $[14, 22]$ &  $[14, n]$ &  $[14, n]$& $1026$\\
JMVA  &  $-0.126$  &  $22$   &   $[17, 25]$ &  $[17, n]$ &  $[15, n]$    &  $3833$    &   $-0.148$  & $19$ &  $[14, 22]$ &  $[15, n]$ &  $[14, n]$& $6454$\\
SMed  &  $-0.131$  &   $23$  &  $[17, 25]$  &   $[18, n]$  &   $[17, n]$ & $6626$ & $-0.242$  & $20$  &  $[18, 25]$  & $[18, n]$ &  $[17, n]$  & $6857$\\    
Extrem & $-2.099$  &   $33$  &  $[30, 33]$ &  $[31, n]$ &  $[30, n]$     & $173$    &    $-0.312$ &  $21$  &   $[16, 30]$ & $[18, n]$  & $[14, n]$& $487$\\
AISM  &  $0.317$   &  $11$   &     $[9, 18]$ &  $[10, n]$  &  $[9, n]$   &  $1313$     &   $-0.359$  & $22$ &  $[19, 30]$ &  $[20, n]$ &  $[18, n]$& $1605$\\
EJS   &  $-1.717$  &  $32$   &     $[30, 33]$ &  $[30, n]$ &  $[30, n]$    &  $1366$    &    $-0.367$ &  $23$ &  $[20, 29]$ &  $[20, n]$ &  $[19, n]$& $4112$\\
SPLet & $-0.033$   &   $19$  &  $[15, 25]$  &  $[15, n]$ &  $[13, n]$  &  $3651$     &   $-0.384$ &  $24$ &  $[21, 29]$ &  $[21, n]$ &  $[19, n]$& $4439$\\
CSDA  &  $-0.975$  &   $29$  &   $[27, 30]$ &  $[27, n]$ &  $[27, n]$     & $6732$    &    $-0.467$ &  $25$ &  $[21, 30]$ &  $[21, n]$ &  $[21, n]$& $8717$\\
JNS   &  $-0.255$  &  $25$   &   $[19, 26]$  & $[20, n]$ &  $[17, n]$     &  $1286$    &    $-0.484$ &  $26$ &  $[21, 30]$ & $[21, n]$  & $[21, n]$& $1895$\\
ISRe  &  $0.082$   &   $18$   &   $[11, 25]$ &  $[11, n]$ &  $[10, n]$    &   $511$   &     $-0.491$ &  $27$ &  $[21, 30]$ &  $[21, n]$ &  $[20, n]$& $905$\\
AuNZ  &  $0.108$   &   $16$  &   $[11, 23]$  & $[11, n]$  & $[10, n]$     &  $862$     &     $-0.504$ &  $28$ &  $[21, 30]$ &  $[21, n]$ &  $[20, n]$& $816$\\
JClas & $-0.185$   &   $24$  &    $[15, 26]$ &  $[15, n]$ &  $[11, n]$    &  $260$    &     $-0.535$ &  $29$ &  $[18, 30]$  & $[20, n]$  & $[14, n]$& $224$\\
SCmp  &  $-0.096$  &    $21$ &  $[15, 25]$   &  $[15, n]$ &  $[14, n]$    &  $1309$    &     $-0.561$ &   $30$   &   $[23, 30]$ &  $[24, n]$ &  $[21, n]$& $2650$\\
Bay   &  $-1.494$  &   $31$  &  $[30, 33]$ &    $[30, n]$  &  $[27, n]$   &  $279$      &    $-1.102$  & $31$ &  $[31, 32]$  & $[31, n]$  & $[30, n]$& $842$\\
CSTM  &  $-0.843$  &   $27$  &  $[27, 29]$  &   $[27, n]$  &   $[27, n]$  & $2975$ &  $-1.296$ &  $32$ &  $[31, 32]$ &  $[31, n]$  &  $[31, n]$   & $4057$\\      
JoAS  &  $-0.912$  &  $28$   &   $[27, 30]$    &  $[27, n]$ &   $[27, n]$ &  $1055$ &   $-1.904$ &  $33$  & $[33, 33]$  &  $[33, n]$  & $[33, n]$ & $2780$\\ 
    \bottomrule
    \end{tabular}
    \caption{Ranking inference results for 33 journals in 2006-2010 and 2011-2015 based on the Two-Step Spectral estimator. For each time period, there are 6 columns. The first column $\tilde{\theta}$ denotes the estimated underlying scores. The second through the fifth columns denote their relative ranks, two-sided, one-sided, and uniform one-sided confidence intervals for ranks with coverage level $95\%$, respectively. The sixth column denotes the number of comparisons in which each journal is involved.}
    \label{tab_journal1}
\end{table}

From Table \ref{tab_journal1}, we can easily get answers to the following questions on ranking inference. 
For example, is each journal's rank maintained unchanged across the two time periods? At a significance level of $\alpha = 10\%$, we find that the ranks of the following journals in alphabetical order demonstrate significant differences between the two time frames (Example \ref{Example_two_sample_item}):
\begin{align*}
\emph{AISM, AoAS, Biost, CSTM, EJS,  JMLR, JoAS, JSPI.}
\end{align*}
This aligns with real-world observations. For instance, JMLR, EJS, and AOAS are newer journals that emerged after 2000. As a result, these journals received fewer citations in the earlier period and got recognized better in the more recent period.  
%leading to a relatively high ranking during the period 2006-2010.

We then turn our attention to the stability of the highest-ranked journals. Referring to Table \ref{tab_journal1}, we observe that the top four journals (AoS, Bka, JASA, and JRSSB, known as the Big-Four in statistics) maintain their positions strongly across different time periods. Furthermore, with a significance level of $\alpha = 10\%$, we reject the hypothesis that the top seven ranked items remain constant across the two time periods (Example \ref{Example_two_sample_set}). Specifically, for 2006-2010, the $95\%$ confidence set for the top-$7$ items includes:  
\begin{align*}
\emph{AoS, Bern, Bcs, Bka, JASA, JCGS JRSSB, ScaJS, StSci.}
\end{align*}
%\begin{align*}
%    \emph{Bcs, Bka, ScaJS, AoS, StSci, JRSSB, JASA, Bern, JCGS, Sini, JTSA}
%\end{align*}
And for 2011-2015, the $95\%$ confidence set for the top-$7$ items includes:
\begin{align*}
\emph{AoS, Bcs, Biost, Bka, JASA, JMLR, JRSSA, JRSSB, StSci.}
\end{align*}
Clearly, these sets intersect only at 6 items, smaller than 7, reflecting a shift in the rankings over the two periods.

\subsection{Ranking of Movies}\label{rank_movie}
In this section, we construct confidence intervals for the ranks of movies or television series featured within the \emph{The Netflix Prize} competition \citep{bennett2007netflix}, which aims to enhance the precision of the Netflix recommendation algorithm. The dataset we examine corresponds to  
100 random 3 and 4 candidate elections drawn from Data Set 1 of \cite{mattei2012empirical}, which was extracted from the original \emph{The Netflix Prize} dataset, devoid of any ties. The dataset contains $196$ movies in total and $163759$ 3-way or 4-way comparisons. For simplicity, we only use the top ranked movie, although it is straightforward to apply the multi-level breaking to use the complete ranking data. This dataset can be accessed at the website: \url{https://www.preflib.org/dataset/00004}.

We compute two-sided, one-sided, and uniform one-sided confidence intervals employing the bootstrap method as described in Sections \ref{sec:onesam_twosided_inf} - \ref{Section_Application_Rank_inference}, based on the Two-Step Spectral method. The results are shown in Table \ref{real_data_movie1}. Additionally, the results from the one-step Vanilla Spectral method are detailed in Table \ref{real_data_movie2} in Appendix \ref{vanilla_spectral_real_data2}. 

\begin{table}[h]
    \centering
         	\def\arraystretch{0.95}
      \setlength\tabcolsep{3pt}
    \begin{tabular}{r|r r r r r r r}
    \toprule
    Movie & $\tilde{\theta}$ & $\tilde{r}$ & TCI & OCI & UOCI & Count  \\ \hline
    The Silence of the Lambs                      & $3.002$  & $1$  & $[1,  1]$  & $[1, n]$  & $[1, n]$  & $19589$ \\
The Green Mile                                & $2.649$  & $2$  & $[2,  4]$ & $[2, n]$  & $[2, n]$   & $5391$ \\
Shrek (Full-screen)                           & $2.626$  & $3$  & $[2,  4]$  & $[2, n]$  & $[2, n]$  & $19447$ \\
The X-Files: Season 2                         & $2.524$  & $4$  & $[2,  7]$  & $[2, n]$  & $[2, n]$   & $1114$ \\
Ray                                           & $2.426$  & $5$  & $[4,  7]$  & $[4, n]$  & $[4, n]$   & $7905$ \\
The X-Files: Season 3                         & $2.357$  & $6$  & $[4, 10]$  & $[4, n]$  & $[2, n]$   & $1442$ \\
The West Wing: Season 1                       & $2.278$  & $7$  & $[4, 10]$  & $[4, n]$  & $[4, n]$   & $3263$ \\
National Lampoon's Animal House               & $2.196$  & $8$  & $[6, 10]$  & $[6, n]$  & $[5, n]$  & $10074$ \\
Aladdin: Platinum Edition                     & $2.154$  & $9$  & $[6, 13]$  & $[6, n]$  & $[5, n]$   & $3355$ \\
Seven                                         & $2.143$ & $10$  & $[6, 11]$  & $[7, n]$  & $[6, n]$  & $16305$ \\
Back to the Future                            & $2.030$ & $11$  & $[9, 15]$  & $[9, n]$  & $[8, n]$   & $6428$ \\
Blade Runner                                  & $1.968$ & $12$ & $[10, 16]$ & $[10, n]$  & $[9, n]$   & $5597$ \\
Harry Potter and the Sorcerer's Stone         & $1.842$ & $13$ & $[12, 22]$ & $[12, n]$  & $[11, n]$  & $7976$ \\
High Noon                                     & $1.821$ & $14$ & $[11, 25]$ & $[11, n]$  & $[10, n]$   & $1902$ \\
Sex and the City: Season 6: Part 2            & $1.770$ & $15$ & $[11, 30]$ & $[11, n]$  & $[8, n]$    & $532$ \\
Jaws                                          & $1.749$ & $16$ & $[13, 25]$ & $[13, n]$ & $[13, n]$   & $8383$ \\
The Ten Commandments                          & $1.735$ & $17$ & $[13, 28]$ & $[13, n]$  & $[12, n]$  & $2186$ \\
Willy Wonka \& the Chocolate Factory          & $1.714$ & $18$ & $[13, 26]$ & $[13, n]$  & $[13, n]$   & $9188$ \\
Stalag 17                                     & $1.697$ & $19$ & $[12, 34]$ & $[12, n]$ & $[11, n]$    & $806$ \\
Unforgiven                                    & $1.633$ & $20$ & $[14, 29]$ & $[14, n]$ & $[14, n]$   & $9422$ \\
%Blazing Saddles                               & $1.622$ & $21$ & $[14, 31]$ & $[14, n]$ & $[14, n]$   & $7096$ \\
%Waiting for Guffman                           & $1.602$ & $22$ & $[13, 36]$ & $[13, n]$ & $[13, n]$   & $1586$ \\
%A Bug's Life                                  & $1.565$ & $23$ & $[14, 36]$ & $[14, n]$ & $[14, n]$   & $2757$ \\
%Da Ali G Show: Season 1                       & $1.558$ & $24$ & $[13, 39]$ & $[13, n]$ & $[12, n]$    & $647$ \\
%Caddyshack                                    & $1.537$ & $25$ & $[17, 36]$ & $[17, n]$ & $[15, n]$   & $6393$ \\
%To Sir                                        & $1.496$ & $26$ & $[13, 41]$ & $[13, n]$ & $[12, n]$    & $737$ \\
%Dog Day Afternoon                             & $1.482$ & $27$ & $[17, 39]$ & $[18, n]$ & $[14, n]$   & $3010$ \\
%Miracle on 34th Street                        & $1.404$ & $28$ & $[19, 41]$ & $[20, n]$ & $[17, n]$   & $2721$ \\
%Lilies of the Field                           & $1.383$ & $29$ & $[16, 47]$ & $[18, n]$ & $[14, n]$    & $844$ \\
%Rebecca                                       & $1.373$ & $30$ & $[21, 42]$ & $[21, n]$ & $[19, n]$   & $2208$ \\ 
    \bottomrule
    \end{tabular}
    \caption{Ranking inference results for top-$20$ Netflix movies or tv series based on the Two-Step Spectral estimator. The first column $\tilde{\theta}$ denotes the estimated underlying scores. The second through the fifth columns denote their relative ranks, two-sided, one-sided, and uniform one-sided confidence intervals for ranks with coverage level $95\%$, respectively. The sixth column denotes the number of comparisons in which each movie is involved.}
    \label{real_data_movie1}
\end{table}

First note that the heterogeneity of the number of comparisons (``Count'' column in Table \ref{real_data_movie1}) is more severe relative to the journal ranking data, which leads to the adaptive length of rank confidence intervals. From ``OCI'' column of Table \ref{real_data_movie1}, we can test whether each individual movie belong to the top-$10$ rated movies ($K=10$ in Example \ref{example1}). We end up failing to reject the hypothesis for the first 12 films listed in Table \ref{real_data_movie1}. 
Furthermore, we can use the uniform one-sided confidence interval (``UOCI'') column to build candidate confidence set for the true top-$10$ movies ($K=10$ in Example \ref{exp:candidate_ad}).
The result suggests that except for ``\emph{Harry Potter and the Sorcerer’s Stone}", we should include the other top 15 ranked films in our top-$10$ confidence set. The reason we cannot exclude the films ``\emph{High Noon}" and ``\emph{Sex and the City: Season 6: Part 2}", despite these movies rank lower than ``\emph{Harry Potter and the Sorcerer’s Stone}", is due to their fewer number of comparisons and thus wider confidence intervals.  Similarly, the top-$5$ confidence set is the first 9 movies in Table~\ref{real_data_movie1}.

\section{Conclusion and Discussion}
\label{sec:discussion}

In this work, we studied the performance of the spectral method in preference score estimation, quantified the asymptotic distribution of the estimated scores, and explored one-sample and two-sample inference on ranks. In particular, we worked with general multiway comparisons with fixed comparison graphs, where the size of each comparison can vary and can be as low as only one. This is much closer to real applications than the homogeneous random sampling assumption imposed in the BTL or PL models. The applications of journal ranking and movie ranking have demonstrated the clear usefulness of our proposed methodologies.
Furthermore, we studied the relationship between the spectral method and the MLE in terms of estimation efficiency and revealed that with a carefully chosen weighting scheme, the spectral method can approximately achieve the same efficiency as the MLE, which is also verified using numerical simulations. Finally, to the best of our knowledge, it is the first time that effective two-sample rank testing methods have been proposed in the literature. 

Although we have made significant improvements in relaxing conditions, the role of general comparison graphs is still not fully understood, especially in the setting of multiway comparisons. Questions like how to design a better sampling regime, either online or offline, remain open. In addition, the spectral method essentially encodes multiway comparisons into pairwise comparisons, where the encoding will break data independence. The best encoding or breaking method should be further investigated. Finally, a set of recent works on ranking inferences opens the door to many possibilities of theoretical studies on ranking inferences {and related problems such as assortment optimization, under the setting of, say, rank time series, rank change point detection, rank panel data, recommendation based on rank inferences, uncertainty quantification and inference for properties of the optimal assortment.} These may find potential application in numerous management settings.

%under the setting of say rank time series, rank change point detection, rank panel data, and recommendation based on rank inferences. These may find potential application in numerous management settings. 

%\input{zerosum}

%\input{thmproof}

\newpage
\bibliographystyle{ims}
\bibliography{dynamic}

\begin{thebibliography}{55}
\expandafter\ifx\csname natexlab\endcsname\relax\def\natexlab#1{#1}\fi
\expandafter\ifx\csname url\endcsname\relax
  \def\url#1{\texttt{#1}}\fi
\expandafter\ifx\csname urlprefix\endcsname\relax\def\urlprefix{}\fi

\bibitem[{Aouad et~al.(2018)Aouad, Farias, Levi and
  Segev}]{aouad2018approximability}
\text{Aouad, A.}, \text{Farias, V.}, \text{Levi, R.} and \text{Segev, D.}
  (2018).
\newblock The approximability of assortment optimization under ranking
  preferences.
\newblock \textit{Operations Research}, \textbf{66} 1661--1669.

\bibitem[{Avery et~al.(2013)Avery, Glickman, Hoxby and
  Metrick}]{avery2013revealed}
\text{Avery, C.~N.}, \text{Glickman, M.~E.}, \text{Hoxby, C.~M.} and
  \text{Metrick, A.} (2013).
\newblock A revealed preference ranking of us colleges and universities.
\newblock \textit{The Quarterly Journal of Economics}, \textbf{128} 425--467.

\bibitem[{Azari~Soufiani et~al.(2013)Azari~Soufiani, Chen, Parkes and
  Xia}]{azari2013generalized}
\text{Azari~Soufiani, H.}, \text{Chen, W.}, \text{Parkes, D.~C.} and \text{Xia,
  L.} (2013).
\newblock Generalized method-of-moments for rank aggregation.
\newblock \textit{Advances in {N}eural {I}nformation {P}rocessing {S}ystems},
  \textbf{26}.

\bibitem[{Baltrunas et~al.(2010)Baltrunas, Makcinskas and
  Ricci}]{baltrunas2010group}
\text{Baltrunas, L.}, \text{Makcinskas, T.} and \text{Ricci, F.} (2010).
\newblock Group recommendations with rank aggregation and collaborative
  filtering.
\newblock In \textit{Proceedings of the fourth ACM conference on Recommender
  systems}.

\bibitem[{Bennett et~al.(2007)Bennett, Lanning et~al.}]{bennett2007netflix}
\text{Bennett, J.}, \text{Lanning, S.} \text{et~al.} (2007).
\newblock The {N}etflix {P}rize.
\newblock In \textit{Proceedings of KDD cup and workshop}, vol. 2007. New York.

\bibitem[{Caron et~al.(2014)Caron, Teh and Murphy}]{caron2014bayesian}
\text{Caron, F.}, \text{Teh, Y.~W.} and \text{Murphy, T.~B.} (2014).
\newblock Bayesian nonparametric {P}lackett--{L}uce models for the analysis of
  preferences for college degree programmes.
\newblock \textit{The Annals of Applied Statistics}, \textbf{8} 1145--1181.

\bibitem[{Chen et~al.(2022)Chen, Gao and Zhang}]{chen2022partial}
\text{Chen, P.}, \text{Gao, C.} and \text{Zhang, A.~Y.} (2022).
\newblock Partial recovery for top-{$K$} ranking: Optimality of mle and
  suboptimality of the spectral method.
\newblock \textit{The Annals of Statistics}, \textbf{50} 1618--1652.

\bibitem[{Chen et~al.(2023)Chen, Krishnamurthy and Wang}]{Chen2023robust}
\text{Chen, X.}, \text{Krishnamurthy, A.} and \text{Wang, Y.} (2023).
\newblock Robust dynamic assortment optimization in the presence of outlier
  customers.
\newblock \textit{Operations Research}.

\bibitem[{Chen et~al.(2020)Chen, Wang and Zhou}]{chen2020dynamic}
\text{Chen, X.}, \text{Wang, Y.} and \text{Zhou, Y.} (2020).
\newblock Dynamic assortment optimization with changing contextual information.
\newblock \textit{The Journal of Machine Learning Research}, \textbf{21}
  8918--8961.

\bibitem[{Chen et~al.(2019)Chen, Fan, Ma and Wang}]{chen2019spectral}
\text{Chen, Y.}, \text{Fan, J.}, \text{Ma, C.} and \text{Wang, K.} (2019).
\newblock Spectral method and regularized mle are both optimal for top-{$K$}
  ranking.
\newblock \textit{Annals of statistics}, \textbf{47} 2204.

\bibitem[{Chen and Suh(2015)}]{chen2015spectral}
\text{Chen, Y.} and \text{Suh, C.} (2015).
\newblock Spectral mle: Top-{$K$} rank aggregation from pairwise comparisons.
\newblock In \textit{International Conference on Machine Learning}. PMLR.

\bibitem[{Cheng et~al.(2010)Cheng, Dembczynski and
  H{\"u}llermeier}]{cheng2010label}
\text{Cheng, W.}, \text{Dembczynski, K.} and \text{H{\"u}llermeier, E.} (2010).
\newblock Label ranking methods based on the {P}lackett-{L}uce model.
\newblock In \textit{ICML}.

\bibitem[{Chernozhukov et~al.(2017)Chernozhukov, Chetverikov and
  Kato}]{CCK2017}
\text{Chernozhukov, V.}, \text{Chetverikov, D.} and \text{Kato, K.} (2017).
\newblock Central limit theorems and bootstrap in high dimensions.
\newblock \textit{The Annals of Probability}, \textbf{45} 2309--2352.

\bibitem[{Chernozhukov et~al.(2019)Chernozhukov, Chetverikov, Kato and
  Koike}]{Chernozhukov2019}
\text{Chernozhukov, V.}, \text{Chetverikov, D.}, \text{Kato, K.} and
  \text{Koike, Y.} (2019).
\newblock Improved central limit theorem and bootstrap approximations in high
  dimensions.
\newblock \textit{arXiv preprint arXiv:1912.10529}.

\bibitem[{Davis et~al.(2014)Davis, Gallego and Topaloglu}]{Davis2014assortment}
\text{Davis, J.~M.}, \text{Gallego, G.} and \text{Topaloglu, H.} (2014).
\newblock Assortment optimization under variants of the nested logit model.
\newblock \textit{Operations Research}, \textbf{62} 250--273.

\bibitem[{Dwork et~al.(2001)Dwork, Kumar, Naor and Sivakumar}]{dwork2001rank}
\text{Dwork, C.}, \text{Kumar, R.}, \text{Naor, M.} and \text{Sivakumar, D.}
  (2001).
\newblock Rank aggregation methods for the web.
\newblock In \textit{Proceedings of the 10th international conference on World
  Wide Web}.

\bibitem[{Fan et~al.(2022{\natexlab{a}})Fan, Hou and Yu}]{fan2022uncertainty}
\text{Fan, J.}, \text{Hou, J.} and \text{Yu, M.} (2022{\natexlab{a}}).
\newblock Uncertainty quantification of mle for entity ranking with covariates.
\newblock \textit{arXiv preprint arXiv:2212.09961}.

\bibitem[{Fan et~al.(2022{\natexlab{b}})Fan, Lou, Wang and Yu}]{fan2022ranking}
\text{Fan, J.}, \text{Lou, Z.}, \text{Wang, W.} and \text{Yu, M.}
  (2022{\natexlab{b}}).
\newblock Ranking inferences based on the top choice of multiway comparisons.
\newblock \textit{arXiv preprint arXiv:2211.11957}.

\bibitem[{Gallego and Topaloglu(2014)}]{Gallego2014constrained}
\text{Gallego, G.} and \text{Topaloglu, H.} (2014).
\newblock Constrained assortment optimization for the nested logit model.
\newblock \textit{Management Science}, \textbf{60} 2583--2601.

\bibitem[{Gao et~al.(2021)Gao, Shen and Zhang}]{gao2021uncertainty}
\text{Gao, C.}, \text{Shen, Y.} and \text{Zhang, A.~Y.} (2021).
\newblock Uncertainty quantification in the {B}radley-{T}erry-{L}uce model.
\newblock \textit{arXiv preprint arXiv:2110.03874}.

\bibitem[{Guiver and Snelson(2009)}]{guiver2009bayesian}
\text{Guiver, J.} and \text{Snelson, E.} (2009).
\newblock Bayesian inference for {P}lackett-{L}uce ranking models.
\newblock In \textit{proceedings of the 26th annual international conference on
  machine learning}.

\bibitem[{Hajek et~al.(2014)Hajek, Oh and Xu}]{hajek2014minimax}
\text{Hajek, B.}, \text{Oh, S.} and \text{Xu, J.} (2014).
\newblock Minimax-optimal inference from partial rankings.
\newblock \textit{Advances in {N}eural {I}nformation {P}rocessing {S}ystems},
  \textbf{27}.

\bibitem[{Han and Xu(2023)}]{han2023unified}
\text{Han, R.} and \text{Xu, Y.} (2023).
\newblock A unified analysis of likelihood-based estimators in the
  {P}lackett--{L}uce model.
\newblock \textit{arXiv preprint arXiv:2306.02821}.

\bibitem[{Han et~al.(2020)Han, Ye, Tan and Chen}]{han2020asymptotic}
\text{Han, R.}, \text{Ye, R.}, \text{Tan, C.} and \text{Chen, K.} (2020).
\newblock Asymptotic theory of sparse {B}radley--{T}erry model.
\newblock \textit{The Annals of Applied Probability}, \textbf{30} 2491--2515.

\bibitem[{Hunter(2004)}]{hunter2004mm}
\text{Hunter, D.~R.} (2004).
\newblock {MM} algorithms for generalized {B}radley-{T}erry models.
\newblock \textit{The annals of statistics}, \textbf{32} 384--406.

\bibitem[{Jang et~al.(2018)Jang, Kim and Suh}]{jang2018top}
\text{Jang, M.}, \text{Kim, S.} and \text{Suh, C.} (2018).
\newblock Top-{$K$} rank aggregation from $ m $-wise comparisons.
\newblock \textit{IEEE Journal of Selected Topics in Signal Processing},
  \textbf{12} 989--1004.

\bibitem[{Jang et~al.(2016)Jang, Kim, Suh and Oh}]{jang2016top}
\text{Jang, M.}, \text{Kim, S.}, \text{Suh, C.} and \text{Oh, S.} (2016).
\newblock Top-{$K$} ranking from pairwise comparisons: When spectral ranking is
  optimal.
\newblock \textit{arXiv preprint arXiv:1603.04153}.

\bibitem[{Ji et~al.(2022)Ji, Jin, Ke and Li}]{ji2022co}
\text{Ji, P.}, \text{Jin, J.}, \text{Ke, Z.~T.} and \text{Li, W.} (2022).
\newblock Co-citation and co-authorship networks of statisticians.
\newblock \textit{Journal of Business \& Economic Statistics}, \textbf{40}
  469--485.

\bibitem[{Ji et~al.(2023+)Ji, Jin, Ke and Li}]{ji2023meta}
\text{Ji, P.}, \text{Jin, J.}, \text{Ke, Z.~T.} and \text{Li, W.} (2023+).
\newblock Meta-analysis on citations for statisticians.
\newblock \textit{To Appear}.

\bibitem[{Johnson et~al.(2002)Johnson, Deaner and
  Van~Schaik}]{johnson2002bayesian}
\text{Johnson, V.~E.}, \text{Deaner, R.~O.} and \text{Van~Schaik, C.~P.}
  (2002).
\newblock Bayesian analysis of rank data with application to primate
  intelligence experiments.
\newblock \textit{Journal of the American Statistical Association}, \textbf{97}
  8--17.

\bibitem[{Li et~al.(2019)Li, Simchi-Levi, Wu and Zhu}]{li2019estimating}
\text{Li, H.}, \text{Simchi-Levi, D.}, \text{Wu, M.~X.} and \text{Zhu, W.}
  (2019).
\newblock Estimating and exploiting the impact of photo layout: A structural
  approach.
\newblock \textit{Available at SSRN 3470877}.

\bibitem[{Li et~al.(2022)Li, Shrotriya and Rinaldo}]{li2022ell}
\text{Li, W.}, \text{Shrotriya, S.} and \text{Rinaldo, A.} (2022).
\newblock {$\ell_\infty$}-bounds of the mle in the btl model under general
  comparison graphs.
\newblock In \textit{Uncertainty in Artificial Intelligence}. PMLR.

\bibitem[{Liu et~al.(2022)Liu, Fang and Lu}]{liu2022lagrangian}
\text{Liu, Y.}, \text{Fang, E.~X.} and \text{Lu, J.} (2022).
\newblock Lagrangian inference for ranking problems.
\newblock \textit{Operations Research}.

\bibitem[{Luce(1959)}]{Luce1959}
\text{Luce, R.~D.} (1959).
\newblock \textit{Individual choice behavior: {A} theoretical analysis}.
\newblock John Wiley \& Sons, Inc., New York; Chapman \& Hall, Ltd., London.

\bibitem[{Massey(1997)}]{massey1997statistical}
\text{Massey, K.} (1997).
\newblock Statistical models applied to the rating of sports teams.
\newblock \textit{Bluefield College}, \textbf{1077}.

\bibitem[{Mattei et~al.(2012)Mattei, Forshee and
  Goldsmith}]{mattei2012empirical}
\text{Mattei, N.}, \text{Forshee, J.} and \text{Goldsmith, J.} (2012).
\newblock An empirical study of voting rules and manipulation with large
  datasets.
\newblock \textit{Proceedings of COMSOC}, \textbf{59}.

\bibitem[{Mattei and Walsh(2013)}]{mattei2013preflib}
\text{Mattei, N.} and \text{Walsh, T.} (2013).
\newblock Preflib: A library for preferences http://www.preflib.org.
\newblock In \textit{International conference on algorithmic decision theory}.
  Springer.

\bibitem[{Maystre and Grossglauser(2015)}]{maystre2015fast}
\text{Maystre, L.} and \text{Grossglauser, M.} (2015).
\newblock Fast and accurate inference of {P}lackett--{L}uce models.
\newblock \textit{Advances in {N}eural {I}nformation {P}rocessing {S}ystems},
  \textbf{28}.

\bibitem[{Negahban et~al.(2012)Negahban, Oh and Shah}]{negahban2012iterative}
\text{Negahban, S.}, \text{Oh, S.} and \text{Shah, D.} (2012).
\newblock Iterative ranking from pair-wise comparisons.
\newblock \textit{Advances in {N}eural {I}nformation {P}rocessing {S}ystems},
  \textbf{25}.

\bibitem[{Ouyang et~al.(2022)Ouyang, Wu, Jiang, Almeida, Wainwright, Mishkin,
  Zhang, Agarwal, Slama, Ray et~al.}]{ouyang2022training}
\text{Ouyang, L.}, \text{Wu, J.}, \text{Jiang, X.}, \text{Almeida, D.},
  \text{Wainwright, C.}, \text{Mishkin, P.}, \text{Zhang, C.}, \text{Agarwal,
  S.}, \text{Slama, K.}, \text{Ray, A.} \text{et~al.} (2022).
\newblock Training language models to follow instructions with human feedback.
\newblock \textit{Advances in {N}eural {I}nformation {P}rocessing {S}ystems},
  \textbf{35} 27730--27744.

\bibitem[{Plackett(1975)}]{plackett1975analysis}
\text{Plackett, R.~L.} (1975).
\newblock The analysis of permutations.
\newblock \textit{Journal of the Royal Statistical Society: Series C (Applied
  Statistics)}, \textbf{24} 193--202.

\bibitem[{Portnoy(1986)}]{Portnoy1986}
\text{Portnoy, S.} (1986).
\newblock On the central limit theorem in {${\bf R}^p$} when {$p\to \infty$}.
\newblock \textit{Probab. Theory Related Fields}, \textbf{73} 571--583.

\bibitem[{Rusmevichientong et~al.(2010)Rusmevichientong, Shen and
  Shmoys}]{Rusmevichientong2010dynamic}
\text{Rusmevichientong, P.}, \text{Shen, Z.-J.~M.} and \text{Shmoys, D.~B.}
  (2010).
\newblock Dynamic assortment optimization with a multinomial logit choice model
  and capacity constraint.
\newblock \textit{Operations research}, \textbf{58} 1666--1680.

\bibitem[{Rusmevichientong and Topaloglu(2012)}]{Rusmevichientong2012robust}
\text{Rusmevichientong, P.} and \text{Topaloglu, H.} (2012).
\newblock Robust assortment optimization in revenue management under the
  multinomial logit choice model.
\newblock \textit{Operations research}, \textbf{60} 865--882.

\bibitem[{Shah et~al.(2015)Shah, Balakrishnan, Bradley, Parekh, Ramchandran and
  Wainwright}]{shah2015estimation}
\text{Shah, N.}, \text{Balakrishnan, S.}, \text{Bradley, J.}, \text{Parekh,
  A.}, \text{Ramchandran, K.} and \text{Wainwright, M.} (2015).
\newblock Estimation from pairwise comparisons: Sharp minimax bounds with
  topology dependence.
\newblock In \textit{Artificial intelligence and statistics}. PMLR.

\bibitem[{Shen et~al.(2023)Shen, Chen, Fang and Lu}]{Shen2023}
\text{Shen, S.}, \text{Chen, X.}, \text{Fang, E.} and \text{Lu, J.} (2023).
\newblock Combinatorial inference on the optimal assortment in multinomial
  logit models.
\newblock \textit{Available at SSRN 4371919}.

\bibitem[{Simons and Yao(1999)}]{simons1999asymptotics}
\text{Simons, G.} and \text{Yao, Y.-C.} (1999).
\newblock Asymptotics when the number of parameters tends to infinity in the
  {B}radley-{T}erry model for paired comparisons.
\newblock \textit{The Annals of Statistics}, \textbf{27} 1041--1060.

\bibitem[{Sumida et~al.(2021)Sumida, Gallego, Rusmevichientong, Topaloglu and
  Davis}]{Sumida2021revenue}
\text{Sumida, M.}, \text{Gallego, G.}, \text{Rusmevichientong, P.},
  \text{Topaloglu, H.} and \text{Davis, J.} (2021).
\newblock Revenue-utility tradeoff in assortment optimization under the
  multinomial logit model with totally unimodular constraints.
\newblock \textit{Management Science}, \textbf{67} 2845--2869.

\bibitem[{Sz{\"o}r{\'e}nyi et~al.(2015)Sz{\"o}r{\'e}nyi, Busa-Fekete, Paul and
  H{\"u}llermeier}]{szorenyi2015online}
\text{Sz{\"o}r{\'e}nyi, B.}, \text{Busa-Fekete, R.}, \text{Paul, A.} and
  \text{H{\"u}llermeier, E.} (2015).
\newblock Online rank elicitation for {P}lackett-{L}uce: A dueling bandits
  approach.
\newblock \textit{Advances in {N}eural {I}nformation {P}rocessing {S}ystems},
  \textbf{28}.

\bibitem[{Talluri and Van~Ryzin(2004)}]{Talluri2004}
\text{Talluri, K.} and \text{Van~Ryzin, G.} (2004).
\newblock Revenue management under a general discrete choice model of consumer
  behavior.
\newblock \textit{Management Science}, \textbf{50} 15--33.

\bibitem[{Tropp(2012)}]{tropp2012user}
\text{Tropp, J.~A.} (2012).
\newblock User-friendly tail bounds for sums of random matrices.
\newblock \textit{Foundations of computational mathematics}, \textbf{12}
  389--434.

\bibitem[{Turner and Firth(2012)}]{turner2012bradley}
\text{Turner, H.} and \text{Firth, D.} (2012).
\newblock {B}radley-{T}erry models in {R}: the {B}radley{T}erry2 package.
\newblock \textit{Journal of Statistical Software}, \textbf{48} 1--21.

\bibitem[{Vulcano et~al.(2012)Vulcano, Van~Ryzin and Ratliff}]{Vulcano2012}
\text{Vulcano, G.}, \text{Van~Ryzin, G.} and \text{Ratliff, R.} (2012).
\newblock Estimating primary demand for substitutable products from sales
  transaction data.
\newblock \textit{Operations Research}, \textbf{60} 313--334.

\bibitem[{Wang et~al.(2016)Wang, Bendersky, Metzler and
  Najork}]{wang2016learning}
\text{Wang, X.}, \text{Bendersky, M.}, \text{Metzler, D.} and \text{Najork, M.}
  (2016).
\newblock Learning to rank with selection bias in personal search.
\newblock In \textit{Proceedings of the 39th International ACM SIGIR conference
  on Research and Development in Information Retrieval}.

\bibitem[{Zhang et~al.(2020)Zhang, Rusmevichientong and
  Topaloglu}]{Zhang2020assortment}
\text{Zhang, H.}, \text{Rusmevichientong, P.} and \text{Topaloglu, H.} (2020).
\newblock Assortment optimization under the paired combinatorial logit model.
\newblock \textit{Operations Research}, \textbf{68} 741--761.

\end{thebibliography}

\newpage
\appendix
\section{Numerical Studies} \label{sec:simu}

\subsection{Validate Consistency and Asymptotic Distribution for Fixed Graph}\label{simu:consist}

%To accomplish this, we will elucidate the process through which our dataset is generated.

Consider a scenario where the total number of items to be compared  is $n = 50$ with the true preference scores $\theta^*=(\theta_1,\theta_2, \ldots, \theta_n)$ evenly distributed on $[-2, 2]$.  The total number of comparisons is represented by $|\mathcal{D}|$, whose specific value will be defined case by case. Our comparison process employs a heterogeneous comparison graph. For the first $\frac{1}{5}|\mathcal{D}|$ comparisons, the size of each comparison set is $\{2, 3, 4, 5\}$ with equal probability of $\frac{1}{4}$, and the comparison set is sampled homogeneously among the top $20\%$ of the $n$ items. 
%Within any chosen E-R hypergraph, the compared tuple is selected with equal likelihood. 
For the second $\frac{1}{5}|\mathcal{D}|$ comparisons, the comparisons are conducted among the top $50\%$ of the $n$ items, also utilizing the aforementioned same hypergraph sampling method.   For the remaining $\frac{3}{5}|\mathcal{D}|$ entries, we also use the same hypergraph sampling method as before but comparisons are made among all $n$ items.

The above process generates heterogeneity in item hypergraph degrees and the size of the hypergraph edge. Note that in real applications, the comparison graph may be even endogenously generated, that is, the selection of comparison sets also depends on $\theta^*$: For example, items of higher rank tend to have a higher frequency of being compared.  
Given our comparison graph generation process detailed above, it is easy to see that $n^\dagger \asymp\frac{|\mathcal{D}|}{n}$. Based on this, we select the size of the dataset, $|\mathcal{D}|$, in such a manner that $\sqrt{\frac{\log n}{n^{\dagger}}}$ is evenly spaced on a grid ranging from $0.05$ to $0.20$. 

\noindent\textbf{Consistency and Statistical Rates.} %Next, we proceed to validate the consistency of the results produced by our estimator. 
We employ the spectral estimator $\tilde{\theta}$ outlined in Section \ref{sec:mle_con}, where we set $f(A_l) = |A_l|$ or $ \sum_{i\in A_l}e^{\theta_i^*}$ (oracle weight). The statistical rates obtained are depicted in Figure \ref{fig:consistency}, which shows the statistical errors are proportional to their theoretical rates  linearly. This simulation lends support to the  result on rate of convergence in Theorem \ref{thm_consist_normality}.
%We next validate the asymptotic distribution of entry $\tilde\theta_8,\tilde\theta_{20},\tilde\theta_{30},$ respectively in the following  Figure \ref{fig:asym}. The reason we choose these three entries is that they have a heterogeneous number of comparisons.

\begin{figure}[h]
    \centering
    \begin{tabular}{c}
    \includegraphics[width=0.8\textwidth]{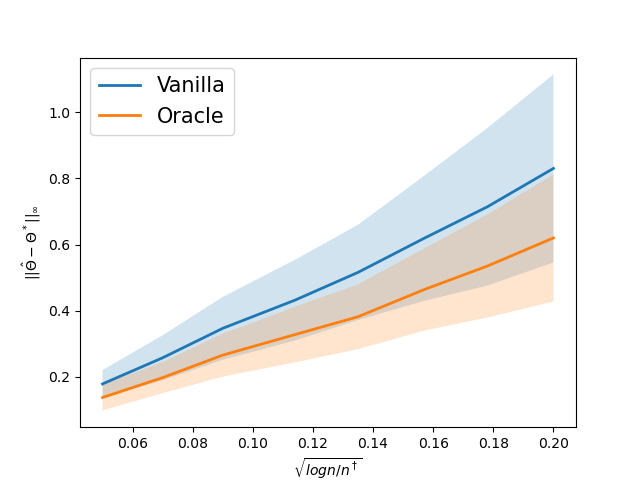}
   \end{tabular}
    \caption{$\ell_{\infty}$- statistical errors of the spectral estimator $\tilde\theta$ against the theoretical rate when $|\cD|$ varies. We let $|\cD|$ increases such that $\sqrt{\log n/n^{\dagger}}$ takes uniform grid between $[0.05,0.20].$ The solid lines represent the averaged statistical errors of 500 repetitions and  the light areas are formed by plus and minus one standard deviation curves to the average curve. The blue and yellow ones correspond to using $f(A_l) = |A_l|$ or $ \sum_{i\in A_l}e^{\theta_i^*}$ (oracle weight), respectively.}
    \label{fig:consistency}
\end{figure}

\noindent\textbf{Asymptotic Normality.} We now proceed to validate the asymptotic distribution of the entries $\tilde\theta_8$, $\tilde\theta_{20}$, and $\tilde\theta_{30}$, which are representative subjects from 3 subgroups of the $n$ items. 
%which are respectively depicted in Figure \ref{fig:asym}. 
%The rationale behind choosing these three particular entries is their varying number of comparisons, which makes them to be representative subjects for analysis.
From Figure \ref{fig:asym}, we observe that the empirical distribution of $\tilde\theta_8$, $\tilde\theta_{20}$, and $\tilde\theta_{30}$ are well approximated by the standard Gaussian distribution. This is consistent with the theoretical result on asymptotic distributions in Theorem \ref{thm_consist_normality}.

\begin{figure}[h]
    \centering
    \begin{tabular}{c}
    \includegraphics[width=1.0\textwidth]{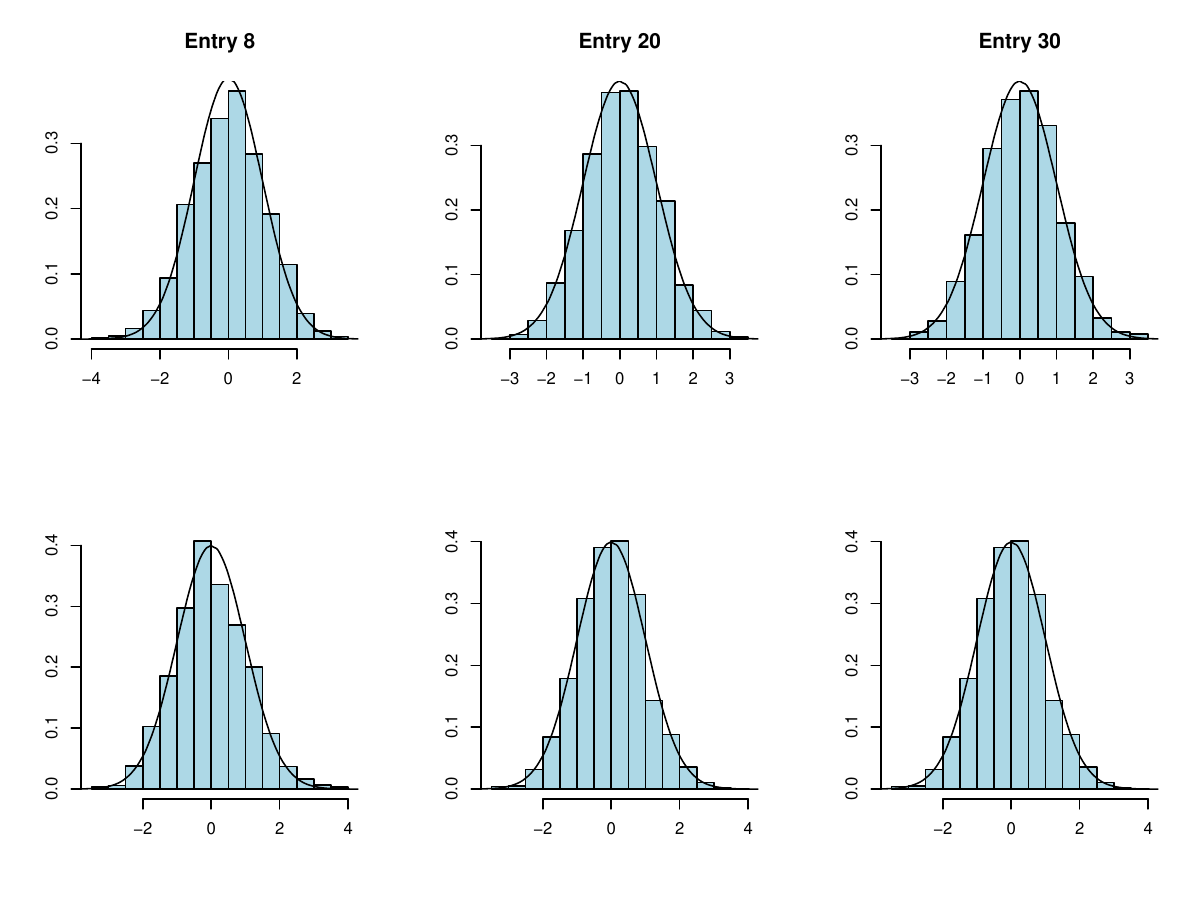}
   \end{tabular}
    \caption{Histograms for the normalized quantities $\rho_8(\tilde\theta)(\tilde\theta_8-\theta_8^*), \rho_{20}(\tilde\theta)(\tilde\theta_{20}-\theta_{20}^*),$ and $\rho_{30}(\tilde\theta)(\tilde\theta_{30}-\theta_{30}^*)$. Here, $\rho_i(\tilde\theta)$ is utilized as an estimator of the inverse of the theoretical standard deviation of $\tilde\theta_i$. The black curves denote the standard Gaussian distribution.  For this analysis, the total number of comparisons, $|\mathcal{D}|$, is set to 12,000, while the rest of the settings remain consistent with those outlined earlier. We use $f(A_l)=|A_l|$ for the three plots on the first row and use $f(A_l)=\sum_{i\in A_l}e^{\theta_i^*}$ for the three plots on the second row. }%Asymptotic Normality for normalized $\rho_8(\tilde\theta)(\tilde\theta_8-\theta_8^*),\rho_{20}(\tilde \theta)(\tilde\theta_{20}-\theta_{20}^*),\rho_{30}(\tilde \theta)(\tilde\theta_{30}-\theta_{30}^*)$, where $\rho_i(\hat\theta)$ is plugged in the estimator of the inverse of the theoretical standard deviation of $\hat\theta_i$. Here we let the total number of comparisons $|\cD|=12000$ and the remaining settings are the same with above.}
    \label{fig:asym}
\end{figure}

\noindent\textbf{Validate Gaussian Multiplier Bootstrap using PP-Plot.} 
%We finally validate the quantile of the distribution generated via the proposed Gaussian Multiplier Bootstrap. Here we  let $\cM=\{8,20,30\}$ in the statistics \eqref{bootstrap}. The result is shown in Figure \ref{fig:ppplot}.
We finally validate the Gaussian approximation results discussed in Section~\ref{sec:onesam_twosided_inf}. We let $|\cD|=24000$ and investigate the distribution of $\cT$ in \eqref{Statistics_max} with $\cM=\{8,20,30\}$. By Theorem~\ref{Theorem_Bootstrap_Validity}, we have $\mathbb{E}[\mathbb{I}\{\cT > \cG_{1-\alpha}\}] = \PP(\cT > \cG_{1-\alpha}) \to \alpha$. We then verify this with various $\alpha \in \{0.05,0.10,\cdots,0.6\}$. 
%We could transform the CI construction into a hypothesis testing problem with $H_0: \theta_1^* - \theta_k^* = \nu_{0,k}$ for all $k\ne 1$ and use $\cT$ as the testing statistic for $H_0$. If $\nu_{0,k}$ is indeed chosen to be $\theta_1^* - \theta_k^*$, i.e. the null hypothesis holds, we can view the significance level $\alpha$ as the theoretical $p$-value of the test $\mathbb{I}\{\cT > \cG_{1-\alpha}\}$.
For every $\alpha,$ we bootstrap $500$ times to compute the critical value $\cG_{1-\alpha}$, and repeat this whole procedure $2000$ times to calculate $\hat{\PP}(\cT > \cG_{1-\alpha}) = \frac{1}{2000} \sum_{b=1}^{2000} \mathbb{I}\{\cT^b > \cG_{1-\alpha}^b\}$ where $\cT^b$ and $\cG_{1-\alpha}^b$ are the pairwise score difference statistic and the corresponding bootstrap critical value of the $b$-th repetition. 
Note that $\hat{\PP}(\cT \le \cG_{1-\alpha})$ is the empirical coverage probability for the pairwise score difference statistic $\cT$.  Figure \ref{fig:ppplot} gives the so-called PP-plot, which shows $\hat{\PP}(\cT > \cG_{1-\alpha})$ against the theoretical significance level. 
From the figure, it the clear that the empirical probabilities matches well with the theoretical ones. Especially, when we apply the significant level of $\alpha=0.05$, the empirical exceptional probability is indeed around $0.05$.

\begin{figure}[h]
    \centering
    \begin{tabular}{c}
    \includegraphics[width=0.8\textwidth]{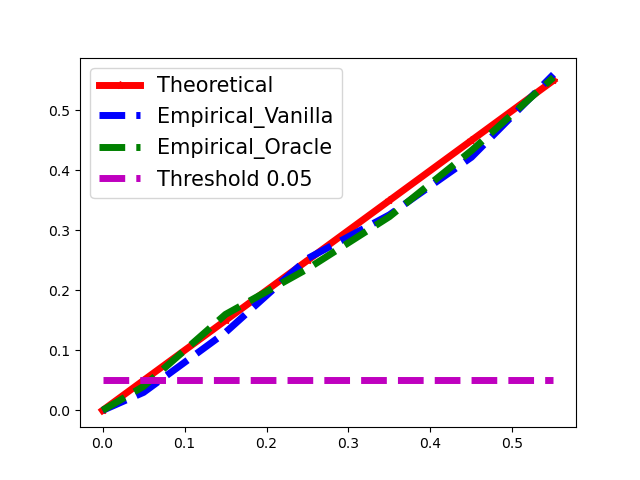}
   \end{tabular}
    \caption{PP-plot of empirical probability $\hat{\PP}(\cT > \cG_{1-\alpha})$ of $\cT$ given in \eqref{sec:onesam_twosided_inf} with $\cM=\{8,20,30\}$ against theoretical significance level $\alpha$. Here we choose $|\cD|=24000.$ The red solid and blue, and green dash-dotted lines represent theoretical and empirical probabilities using the Vanilla Spectral estimator and the Oracle Spectral estimator, respectively. The purple dotted line represents the case with a significance level $0.05$. }
    \label{fig:ppplot}
\end{figure}

\subsection{One-Sample Confidence Intervals for Ranks}
In this subsection, we provide numerical studies for validating our proposed framework for confidence interval construction in Section~\ref{sec:ranking_inference}. Throughout this subsection, we let $|\cD|$ vary among $\{12000,24000,36000\}$ and let the other data generation procedure be the same as section~\ref{simu:consist}. For each experiment, we conduct the bootstrap 500 times to compute the critical value according to $\alpha=0.05$ and further repeat the entire procedure 500 times.

\medskip
\noindent\textbf{Two-Sided Confidence Intervals.}
We construct two-sided confidence intervals (CI) for the $8,20,30$-th item ($\mathcal{M}=\{8\},\{20\},\{30\},$ respectively) using our method in Section~\ref{sec:ranking_estimation}. For the vanilla ($f(A_l) = |A_l|$) and oracle ($f(A_l) = \sum_{i \in A_l} e^{\theta_i^*}$) confidence intervals, we report: (a) EC($\theta$) -- the empirical coverage probability $\hat{\PP}(\cT \le \cG_{1-\alpha})$ for $\cT$ given in \eqref{Statistics_max}, which is also the overall empirical coverage probability for all the score differences $\theta_{k}^* - \theta_{m}^*, \forall k \ne m$ with $m\in \{8,20,30\}$, (b) EC($r$) -- the empirical coverage of the confidence intervals constructed for rank $r_{8},r_{20},r_{30}$, and furthermore (c) Length -- the length of the CIs for rank $r_{8},r_{20},r_{30}$, respectively.

    \begin{table}
	\begin{center}
		\def\arraystretch{1.2}
		\setlength\tabcolsep{4pt}
		%\hspace*{-1.4cm}
		\begin{tabular}{c|c||c|c|c||c|c|c}
			\toprule
			& & \multicolumn{3}{c||}{Vanilla Two-Sided CI} & \multicolumn{3}{c}{Oracle Two-Sided CI} \\
            \hline
	 	&$|\cD|$  & EC($\theta$) & EC($r$) & Length  & EC($\theta$) & EC($r$) & Length \\
			%			\hhline{|=#=|=|=|}
			\hline
			\multirow{3}{*}{$\theta_8^*$	}	&$|\cD|=12000$	&0.954
			 &1.000 &6.384  &0.954 &1.000 &6.298 \\
			&$|\cD|=24000$	&0.950 &1.000 & 4.092
			 &0.968 &1.000 &4.090 \\
			&$|\cD|=36000$	  &0.956 &1.000  &3.008  &0.954 &1.000 &2.928 \\
   			\hline
			\multirow{3}{*}{$\theta_{20}^*$	}	&$|\cD|=12000$	&0.952
			 &1.000 &11.602  &0.960 &1.000 &10.082 \\
			&$|\cD|=24000$	&0.958 &1.000 & 7.450
			 &0.952 &1.000 &6.524 \\
			&$|\cD|=36000$	  &0.954 &1.000  &5.788  &0.958 &1.000 &5.068 \\
   \hline
   			\multirow{3}{*}{$\theta_{30}^*$	}	&$|\cD|=12000$	&0.950
			 &1.000 &17.502  &0.962 &1.000 &14.072 \\
			&$|\cD|=24000$	&0.952 &1.000 & 11.620
			 &0.960 &1.000 &9.528 \\
			&$|\cD|=36000$	  &0.956 &1.000  &9.262  &0.958 &1.000 &7.748 \\
			\bottomrule
		\end{tabular}
	\end{center}
	\caption{Empirical coverages and lengths of one-sample two-sided confidence intervals (CI). For the CI from the spectral method with vanilla or oracle weight, ``EC($\theta$)'' and ``EC($r$)'' denote the empirical CI coverages for score differences and ranks. ``Length'' denotes the length of the CI for $r_{8},r_{20},r_{30}$. %The first three and last three columns of CIs use our bootstrap method with respect to \eqref{bootstrap}  in Section~\ref{sec:bootstrap} with vanilla spectral estimator and oracle spectral estimator, respectively.
	%The number in the bracket represents the results using $\hat\sigma_{mk}$ as the normalization parameter and the numbers outside the bracket are the results using $\tilde{\eta}_{mk} = \frac{z_{1 - \alpha/2}}{\rho_{m}(\hat{\theta})(1 + c_{0})\sqrt{2\log n}} + \frac{1}{\rho_{k}(\hat{\theta})}$ with $c_0=1$ as the normalization parameter. 
	All numbers are averaged over 500 replications.}
	\label{tab1}
\end{table}

The empirical coverage probabilities for score differences, ranks, and the lengths of confidence intervals are summarized in Table \ref{tab1}. A close examination of Table \ref{tab1} reveals that the empirical coverage probabilities for score differences, are around $0.95$, aligning well with the nominal significance level. However, the CIs for the ranks of the 8th, 20th, and 30th items exhibit a more conservative nature. This is attributed to the constraint that ranks must be integers, leading to the empirical coverage probability of one across all instances. It is also noteworthy that with oracle weight, the spectral method indeed achieves narrower CIs than using the vanilla weight. 

%In this experiment, we choose $K=5$ in this top-K test. Consider $m \in \{K-2,\cdots,K+5\}$.  We computed the proportion of rejection for each given $m$.   If $m \le K$ the null hypothesis is true and the proportion is approximately the size of the test, whereas if the alternative is true, the proportion is approximately the power of the test. In addition, when the null holds, we also calculated the size of the test for testing the one-sided score differences.

%The results are presented in Table~\ref{tab2}. We observe from this table that when the alternative holds and true rank increases, the power of our test rises rapidly to $1$. On the other hand, when the null hypothesis holds, we control the test size around $\alpha=0.05$ if we are testing the score differences. The rank test becomes more conservative and gets a size equal to zero, which can be a favorable feature in practice. 

\medskip
\noindent\textbf{One-sided Confidence Intervals.}
In this experiment, we set $K=5$ for the top-K test. We consider values of $m$ ranging from $K-2$ to $K+5$, inclusive. For each $m$, the proportion of rejections was computed. When $m \le K$, the null hypothesis holds true and the proportion approximates the size of the test. Conversely, under the alternative hypothesis, the proportion reflects the test’s power. Moreover, for cases where the null hypothesis is valid, we also assessed the test size concerning one-sided score differences.

The outcomes are presented in Table~\ref{tab2}. Evidently, when the alternative hypothesis is true and the true rank escalates, the power of our test rises fast to $1$. In contrast, when the null hypothesis is true, the test size is controlled at $\alpha=0.05$ for score differences (the values in the brackets). As for the rank test, it adopts a more conservative stance, manifesting a test size of zero. %- a characteristic that might prove advantageous in practical scenarios

\begin{table}
	\begin{center}
		\def\arraystretch{1.2}
		\setlength\tabcolsep{2pt}
		%\hspace*{-1.4cm}
		\begin{tabular}{c||c|c|c||c|c|c|c|c}
			\toprule
			 & \multicolumn{3}{c||}{Null holds: $r_m \le K$} & \multicolumn{5}{c}{Alternative holds: $r_m > K$} \\
			 \hline
			$m$ & $K-2$ & $K-1$ & $K$ & $K+1$ & $K+2$ & $K+3$ &$K+4$ & $K+5$\\
			%			\hhline{|=#=|=|=|}
			$\theta_{m}^*-\theta_{K}^*$ & $0.16$ & $0.08$ & $0$ & $-0.08$ & $-0.16$ & $-0.24$ &$-0.30$ & $-0.36$\\
			\hline 
			%			\hhline{|=#=|=|=|}
			\multirow{2}{*}{$|\cD|=12000$	}	&0 (0.034)
			 &0 (0.032) &0 (0.046)  &0.010 &0.032 &0.222 &0.684 &0.936\\
    &0 (0.032)
			 &0 (0.040) & 0 (0.044)  &0.014 &0.058 &0.302 &0.712 &0.966\\
    \hline
			\multirow{2}{*}{$|\cD|=24000$	}	&0 (0.042) &0 (0.044)  & 0 (0.038)
			 &0.010 &0.212 &0.752 &0.990 &1.000\\
    &0 (0.046) &0 (0.038)  &0 (0.042)
			 &0.016 &0.252 &0.814 &0.998 &1.000\\
    \hline
			\multirow{2}{*}{$|\cD|=36000$	}	  &0 (0.042)  &0 (0.050) &0 (0.046) &0.034 &0.466 &0.972 &1.000 &1.000\\
   &0 (0.040)  &0 (0.048) &0 (0.044) &0.036 &0.542 &0.978 &1.000 &1.000\\
			\bottomrule
		\end{tabular}
	\end{center}
	\caption{Sizes and powers of the test $\psi_{m, K} = \mathbb{I}\{\cR_{m}^{\diamond} > K\}$ for testing hypothesis in \eqref{eq_top_K_test} with $K=5$. The numbers inside the bracket indicate the sizes of testing score differences as we discussed in the text. The numbers outside the bracket represent the sizes or powers of testing the rank, which is our goal here.  For any $|\cD|,$ the first row and second row represent the results using vanilla spectral and oracle spectral estimators, respectively.
	All displayed numbers are averaged over 500 replications. }
	\label{tab2}
\end{table}

\medskip
\noindent\textbf{Uniform One-Sided Intervals and Top-$K$ confidence set}
In the candidate admission Example~\ref{exp:candidate_ad}, to guarantee with high-confidence that all top-$K$ candidates are included, we need to build uniform one-sided coverage for all items, i.e. $\mathcal{M} = [n]$. The candidate set for top $K$ items is $\hat{\cI}_{K} = \{m \in [n] : \cR_{m L}^{\circ} \leq K\}$, all candidates that can not be rejected at significant level $\alpha$. In this simulation, we choose $K\in \{3,5,10\}$. 
In Table~\ref{tab3}, we report the average number of the top-$K$ candidate set $\hat{\cI}_{K}$ and its empirical coverages of the confidence intervals for score differences and for true top-K ranks over 500 replications. 
\begin{table}
	\begin{center}
		\def\arraystretch{1.2}
		\setlength\tabcolsep{3pt}
		%\hspace*{-1.4cm}
		\begin{tabular}{c|c|c|c|c|c|c}
		\toprule
		 &Method&EC($\theta$) & EC($r$) & $K=3$ & $K=5$ & $K=10$  \\
			%			\hhline{|=#=|=|=|}
			\hline 
			 \multirow{2}{*}{$|\cD|=12000$	}		& Vanilla	 &0.962 &1.000	
			 & 6.394  &8.582  &14.702  \\
    	 & Oracle &0.958 &1.000	&6.206   &8.376  &14.334  \\
    \hline
			\multirow{2}{*}{$|\cD|=24000$	}	& Vanilla &0.968 &1.000
			 &5.184  &7.272  &13.090 \\
    & Oracle &0.962 &1.000
			 &5.054  &7.166  &12.830 \\
    \hline
			\multirow{2}{*}{$|\cD|=36000$	}	& Vanilla  &0.956 &1.000 &4.734 
			 &6.762  & 12.450 \\
    & Oracle &0.964 &1.000 &4.596 
			 &6.686  &12.200 \\
			\bottomrule
		\end{tabular}
	\end{center}
	\caption{Empirical coverages and lengths of sure screening confidence set for Example \ref{exp:candidate_ad}. ``EC($\theta$)'' represents the empirical coverage of confidence intervals for score differences. Note EC($\theta$) does not depend on $K$. ``EC($r$)'' denotes the empirical coverage of the confidence intervals for true top $K$ items. For all $K\in \{3,5,10\}$, all $\textrm{EC}(r) = 1$, so we collapse them into just one column. Other numbers are the lengths of the sure screening confidence set $\hat{\cI}_{K}$. 
    %For any $|\cD|,$ the first row and second row represent the results using vanilla spectral and oracle spectral estimators, respectively.
	All displayed numbers are averaged over 500 replications.}
	\label{tab3}
\end{table}
We note from the simulation results that the lengths of the sure screening confidence set closely align with the theoretical values. This lends credence to the robustness of our methodology.
Again we see that with the oracle weight (the second row for each $|\mathcal D|$), the
spectral method generates narrower uniform on-sided intervals and thus a smaller top-$K$ candidates confidence set.

%\medskip 
%\noindent{\red\textbf{Two sample Test} will be added soon}

\subsection{Two-Sample Ranking Inference Test}
\iffalse

The sample sizes are $L1 = 15000$ and $L2 = 18000$. 
\textcolor{red}{
Example~\ref{Example_two_sample_item}. We consider testing $m = 10$\\
The true parameter vector $\theta_{1}^{*}$ is distributed across uniform grids ranging from $-2$ to $2$. We consider three different scenarios
\begin{itemize}
    \item null hypothesis $\theta_{2}^{*} = \theta_{1}^{*}$
    \item null hypothesis $\theta_{2, 10}^{*} = \theta_{1, 10}^{*}$, $\theta_{2, k}^{*} = \theta_{1, k}^{*} + 0.2$ for $k = 1, \ldots, 9$; $\theta_{2, k}^{*} = \theta_{1, k}^{*} - 0.2\times 9/40$ for $k = 11, \ldots, 50$  
    \item alternative hypothesis for $m' = 25$, we set $\theta_{2, m'}^{*} = \theta_{1, 10}^{*}$ and $\theta_{2, 10}^{*} = \theta_{1, m'}^{*}$; and we keep $\theta_{2, k}^{*} = \theta_{1, k}^{*}$ for all $k \neq 10, m'$. In this scenario, $r_{1, 10} = 10$ and $r_{2, 10} = m' = 25$. The testing power is expected to depends on the distance between $10$ and $m'$
\end{itemize}
}
\fi

\iffalse
\textcolor{red}{
Example~\ref{Example_two_sample_set}. We consider two-sample testing for the equality of the sets of top-$10$ ranked items. \\
Again, the true parameter vector $\theta_{1}^{*}$ is distributed across uniform grids ranging from $-2$ to $2$. We also consider the three different scenarios above. 
\textcolor{blue}{To enhance the testing power, we may need the difference between $\theta_{1, 11}^{*}$ and $\theta_{1, 10}^{*}$ to be sufficiently large (also for $\theta_{2}^{*}$).}
}
\fi

In this section, we perform numerical simulations for validating two-sample tests for ranks. 

\iffalse
In Table \ref{tab_two1}, we conduct the hypothesis test in \eqref{eq_two_sample_rank_test}. We first let $\theta_1^*=\theta_2^*=(\theta_1^*,\cdots,\theta_n^*)$  which distributed across uniform grids from $[-2,2],$ with $n=50.$ We test the hypothesis $r^*_{1,10}=r^*_{2,10}$ (whether the rank of the 10th item is preserved in two samples). 
This data generation automatically fits in the framework where the null hypothesis holds. In terms of the alternatives, we study several different cases $i\in \{1,2,3,4\}$. For ``Alter $i$", we swap $\theta^*_{2,10+3i}$ with $\theta^*_{2,10}$ while keeping other entries be the same. In addition, we let sample sizes $|\cD_1|$ (the first sample) and $|\cD_2|$ (the second sample) be chosen from $\{12000,24000,36000\}$, respectively.
The results are presented in Table \ref{tab_two1}. Like the previous one-sample test cases, for every $|D_1|,|D_2|$, the first row present the results by the vanilla spectral method, and the second row presents the results by the oracle spectral method.
\fi

\medskip
\noindent\textbf{Two-Sample Test for Individual Items.} 
We demonstrate the procedure to undertake the two-sample hypothesis testing in \eqref{eq_two_sample_rank_test}. We set the parameters $\theta_{[1]}^*$ and $\theta_{[2]}^*$ such that $\theta_{[1]}^*=\theta_{[2]}^*=(\theta_1^*,\ldots,\theta_n^*)$, uniformly distributed on the fixed grid ranging from $-2$ to $2$ with $n=50$ grid points. The null hypothesis under consideration is that the rank of the $10$th item remains constant across two samples, i.e., $r_{1,10}=r_{2,10}$. %This method of data generation aligns seamlessly with the framework where the null hypothesis is presumed to be true. 
In an exploration of alternative hypotheses, we examine four distinct cases denoted by $i \in \{1,2,3,4\}$. For each of these cases, labelled as ``Alter $i$", we interchange $\theta^*_{2,10+3i}$ with $\theta^*_{2,10}$, while maintaining all other entries unchanged. Moreover, we permit the sample sizes $|\mathcal{D}_1|$ (corresponding to the first sample) and $|\mathcal{D}_2|$ (corresponding to the second sample) to be selected from $\{12000,24000,36000\}$.

The sizes and powers are displayed in Table \ref{tab_two1}. Same as the earlier one-sample test results, for each combination of $|\cD_1|$ and $|\cD_2|$, the results obtained through the vanilla spectral method are exhibited in the first row, while those obtained using the oracle spectral method are depicted in the second row.
\begin{table}
	\begin{center}
		\def\arraystretch{1.2}
		\setlength\tabcolsep{3pt}
		%\hspace*{-1.4cm}
		\begin{tabular}{c|c||c||c|c|c|c}
			\toprule
			\multicolumn{2}{c||}{} & \multicolumn{1}{c||}{Null holds} & \multicolumn{4}{c}{Alternative holds} \\
			 \hline
			$m$ & Method & Null  & Alter 1 & Alter 2 & Alter 3 &Alter 4 \\
			\hline 
			%			\hhline{|=#=|=|=|}
			\multirow{2}{*}{$|\cD_1|=|\cD_2|=12000$	}	& Vanilla &0.000 
			  &0.000 &0.094 &0.652 &0.982 \\
    & Oracle &0.000 
			  &0.000 &0.162 &0.836 &0.998  \\
    \hline
			\multirow{2}{*}{$|\cD_1|=|\cD_2|=24000$	}	& Vanilla &0.000 
			  &0.022 &0.614 &0.996 &1.000 \\
    & Oracle &0.000  
			 &0.026 &0.744 &1.000 &1.000 \\
    \hline
			\multirow{2}{*}{$|\cD_1|=|\cD_2|=36000$	}	& Vanilla  &0.000    &0.048 &0.894 &1.000 &1.000 \\
   & Oracle &0.000   &0.108 &1.000 &1.000 &1.000  \\
			\bottomrule
		\end{tabular}
	\end{center}
	\caption{Sizes and powers of the two-sample hypothesis testing with significance level $\alpha=0.05$ for $r_{1,10}^*=r_{2,10}^*$ in Example \ref{Example_two_sample_item}.	To this end, we first construct confidence intervals for $r_{1,10}$ and $r_{2,10}$ respectively with a coverage level of $1-\frac{\alpha}{2}$. Subsequently, we combine these two confidence intervals by employing the Bonferroni adjustment. The values presented are averaged over 500 replications.}
	\label{tab_two1}
\end{table}
We observe from Table \ref{tab_two1} that when the null holds, the sizes of the tests are controlled well (actually conservative due to the integer nature of ranks), and under the alternatives, if we increase the sample sizes or the distances of $r_{2,10}-r_{1,10},$ the power of the test grows rapidly to $1$. Oracle spectral estimator performs better than vanilla spectral estimator as expected.

\begin{table}
	\begin{center}
		\def\arraystretch{1.2}
		\setlength\tabcolsep{3pt}
		%\hspace*{-1.4cm}
		\begin{tabular}{c|c||c||c|c|c|c}
			\toprule
			\multicolumn{2}{c||}{} & \multicolumn{1}{c||}{Null holds} & \multicolumn{4}{c}{Alternative holds} \\
			 \hline
			$m$ & Method & Null  & Alter 1 & Alter 2 & Alter 3 &Alter 4 \\
			\hline 
			%			\hhline{|=#=|=|=|}
			\multirow{2}{*}{$|\cD_1|=|\cD_2|=12000$	}	& Vanilla &0.000 
			  &0.010 &0.114 &0.518 &0.866 \\
    & Oracle & 0.000 &0.014 &0.256 &0.638 &0.962  \\
    \hline
			\multirow{2}{*}{$|\cD_1|=|\cD_2|=24000$	}	& Vanilla &0.000 
			  &0.268 &0.754 &0.992 &1.000 \\
    & Oracle &0.000  &0.340 &0.892 &0.996 &1.000 \\
    \hline
			\multirow{2}{*}{$|\cD_1|=|\cD_2|=36000$	}	& Vanilla  &0.000    &0.678  &0.984 &1.000 &1.000 \\
   & Oracle &0.000   &0.764 &0.992 &1.000 &1.000  \\
			\bottomrule
		\end{tabular}
	\end{center}
	\caption{Sizes and powers of the two-sample hypothesis testing for the top-$10$ set $S_{1,10}^*=S_{2,10}^*$ in Example \ref{Example_two_sample_set}. 	Here we establish top-10 candidate confidence sets for $S_{1,10}$ and $S_{2,10}$ separately with a coverage level of $1-\frac{\alpha}{2}$ and combine them via Bonferroni adjustment. All displayed numbers are averaged over 500 replications.  }
	\label{tab_two2}
\end{table}
 
\iffalse
In this experiment, we conduct simulations, for Example \ref{Example_two_sample_set}. The data generation is the same as the setting in the two-sample test for items. In this experiment, we test $ H_0: S_{1,10}=S_{2,10}$ (i.e., the top 10 sets are the same among these two samples). The null hypothesis also automatically holds given our generated data. For alternatives, we also consider several alternatives.  In specific, for Alter $i$, we  keep $\theta_1^*$, swap $(\theta^*_{2,9-i},\cdots,\theta^*_{2,10})$ with $(\theta^*_{2,20},\cdots, \theta^*_{2,22+i})$ (i.e., for the alternative $i$, we pick $i+2$ elements from original top-10 set and swap with elements out of the original top-10 set) and let remaining entries be the same. We also allow the same sample sizes $|\cD_1|=|\cD_2|$ be chosen among $\{12000,24000,36000\}.$ The results are summarized in Table \ref{tab_two2}.
\fi

\medskip
\noindent\textbf{Two-Sample Test for Top-K Sets.} In this experiment, a series of simulations are conducted in correspondence with Example \ref{Example_two_sample_set}. %The procedure for data generation is analogous to the framework employed in the two-sample test for items. 
The hypothesis under investigation here is $H_0: S_{1,10}=S_{2,10}$, which states that the top-$10$ sets of the two samples are identical. 
%It is imperative to note that, given the manner in which the data is generated, the null hypothesis is inherently satisfied.
This experiment also considers the situation when alternative hypotheses are genuine. Specifically, for each alternative, denoted as ``Alter $i$'', we keep values of $\theta_1^*$, and swap $(\theta^*_{2,9-i},\ldots,\theta^*_{2,10})$ with $(\theta^*_{2,20},\ldots, \theta^*_{2,22+i})$. This swap involves selecting $i+2$ elements from the original top-10 set and interchanging them with elements outside this set, and keeping the remaining entries remain untouched. For simplicity, the experiment only accommodate equal sample size for two datasets, so $|\cD_1| = |\cD_2|$. And the sample sizes are selected from $\{12000, 24000, 36000\}$.

From Table \ref{tab_two2}, we observe that when the null holds, the sizes of the tests are controlled well although conservative, and under the alternatives, if we increase the sample sizes or the differences between $S_{1,10},$ and $S_{2,10}$ the power of the test grows rapidly to $1$. Consistent with testing individual ranks, oracle weight achieves better efficiency and outperforms vanilla weight in spectral method.

    \begin{table}[h]
	\begin{center}
		\def\arraystretch{1.2}
		\setlength\tabcolsep{3pt}
		%\hspace*{-1.4cm}
		\begin{tabular}{c||c|c|c|c|c|c}
			\toprule
			&Estimator  & $p=0.02$ & $p=0.05$ & $p=0.08$  & $p=0.11$ & $p=0.14$   \\
			%			\hhline{|=#=|=|=|}
			\hline
   \multirow{4}{*}{$\ell_2$} & Vanilla &1.092 (0.140) &0.688 (0.086)  &0.543 (0.061) &0.301 (0.052) &0.181 (0.047) \\ %&0.712 (0.173) &0.270 (0.058) & 0.154 (0.031) &0.1177 (0.0248) &0.088 (0.017) \\
   & Oracle &0.902 (0.102) &0.561 (0.061) &0.447 (0.043) &0.248 (0.040) &0.150 (0.037) \\%&0.655 (0.148) &0.246 (0.052) & 0.142 (0.028) &0.1082 (0.0236) &0.082 (0.015) \\
    &Two Step &0.906 (0.103) &0.562 (0.061) &0.447 (0.043) &0.248 (0.040) &0.150 (0.037) \\%&0.659 (0.148) &0.246 (0.052) &0.142 (0.028) &0.1082 (0.0236) &0.082 (0.015) \\
    & MLE  &0.902 (1.102) &0.562 (0.061) &0.447 (0.043) &0.248 (0.040) &0.150 (0.037) \\%&0.656 (0.148) &0.246 (0.052) &0.142 (0.028) &0.1082 (0.0236) &0.082 (0.015) \\
      &Two Step$\;-\;$MLE &0.046 (0.012) &0.018 (0.004) &0.011 (0.002) &0.008 (0.002) &0.006 (0.001)\\
    \hline
   \multirow{4}{*}{$\ell_{\infty}$} & Vanilla &0.427 (0.081) &0.259 (0.059) &0.206 (0.041) &0.116 (0.039) &0.070 (0.037) \\%&0.3059 (0.065) &0.1845 (0.032) &0.1422 (0.0266) &0.1239 (0.021) &0.104 (0.016) \\
   & Oracle &0.338 (0.063) &0.204 (0.034) &0.162 (0.030) &0.091 (0.027) &0.054 (0.022) \\
    & Two Step &0.337 (0.063) &0.204 (0.034) &0.162 (0.030) &0.091 (0.027) &0.054 (0.022) \\%&0.2940 (0.060) &0.1766 (0.030) &0.1362 (0.0256) &0.1181 (0.020) &0.102 (0.016) \\
    & MLE &0.337 (0.063) &0.204 (0.034) &0.162 (0.030) &0.091 (0.027) &0.054 (0.022) \\
    & Two Step$\;-\;$MLE &0.021 (0.007) &0.008 (0.002) &0.005 (0.002) &0.003 (0.001) &0.002 (0.001) \\
			\bottomrule
		\end{tabular}
	\end{center}
	\caption{Statistical errors for Vanilla Spectral, Oracle Spectral, Two-Step Spectral, and MLE when sampling probability $p\in \{0.02,0.05,0.08,0.11,0.14\}$. The first four rows are the comparisons of $\|\tilde{\theta}-\theta^*\|_2$ and the last four rows are results for  $\|\tilde{\theta}-\theta^*\|_{\infty}.$ The rows ``Two Step$\;-\;$MLE'' shows the difference between the two-step estimator and MLE. The numbers inside the bracket are standard errors. The  displayed numbers are averages among 500 replications.}
	\label{tab_encoding_consist}
\end{table}

\subsection{Validate Connection Between MLE and Spectral Method via Random Graph}
%In this section, we validate the connection between the MLE and the spectral estimator for their statistical consistency rates and asymptotic variances using the PL model over the random graph. A detailed introduction to this setting is given in section \ref{Section_uncertainty_quantification_random}. We let $n=50$ and generate $\theta^*=(\theta_1,\theta_2, \ldots, \theta_n)$ on uniform grids ranging from $-2$ to $2$ (the same as the fixed graph scenario). We sample comparisons from a hypergraph with size $M=3$. Throughout this section, for every sampled compared tuple $(i_1,i_2,i_3),$ we fix $L=10,$ i.e., we compare them independently $10$ times. We  compute different estimators and compare their $\ell_2,$ and $\ell_{\infty}$ statistical rates and asymptotic normality. In total, we compare (i). Vanilla Spectral estimator (i.e. by setting $f(A_l)=|A_l|$) (ii). Oracle Spectral estimator by setting $f(A_l)=\sum_{i\in A_l}e^{\theta_i^*}.$ (iii). Two-Step Spectral estimator (by setting $f(A_l)=|A_l|$ in the first step and obtain $\hat\theta$ and re-run the spectral method by plugging $\hat\theta$ in $f(A_l)=\sum_{i\in A_l}e^{\hat\theta_i}.$ and finally (iv). the MLE estimator given in \cite{fan2022ranking}. We let $p$ take values in $\{0.02,0.05,0.08,0.11,0.14\}$ and summarize the $\ell_2$- and $\ell_{\infty}$- error in the following Table \ref{tab_encoding_consist}. Moreover, we also compare the asymptotic variances of the different four estimators through their first five entries. The results are summarized in Table \ref{tab_variance_encoding}.

In this subsection, we explore and validate the relationship between the MLE and the spectral estimators in terms of their statistical errors and asymptotic variances, under the PL model with a random comparison graph. 
%A comprehensive exposition of this setting is provided in Section \ref{Section_uncertainty_quantification_random}. 
We set $n=50$ and take the uniform grids on $[-2, 2]$ as the score vector $\theta^* = (\theta_1^*, \theta_2^*, \ldots, \theta_n^*)$. Comparisons are sampled from a hypergraph with size $M=3$. Each sampled comparison tuple $(i_1, i_2, i_3)$ is compared independently for $L=10$ times if they are chosen. We calculate various estimators and examine their $\ell_2$ and $\ell_{\infty}$ statistical rates and asymptotic normality.

%\subsection{Compare variance with MLE}
    \begin{table}[h]
	\begin{center}
		\def\arraystretch{1.2}
		\setlength\tabcolsep{3pt}
		%\hspace*{-1.4cm}
		\begin{tabular}{c|c||c|c|c|c|c}
			\toprule
			&Estimator  & $\tilde{\theta}_1$ & $\tilde{\theta}_2$ & $\tilde{\theta}_3$  & $\tilde{\theta}_4$ & $\tilde{\theta}_5$   \\
			%			\hhline{|=#=|=|=|}
			\hline
    \multirow{4}{*}{$p=0.02$}&Vanilla &0.140 &0.152 &0.134 &0.153 &0.133 \\ %&0.712 (0.173) &0.270 (0.058) & 0.154 (0.031) &0.1177 (0.0248) &0.088 (0.017) \\
    &Oracle &0.136 &0.149 &0.133 &0.149 &0.131 \\%&0.655 (0.148) &0.246 (0.052) & 0.142 (0.028) &0.1082 (0.0236) &0.082 (0.015) \\
   & Two Step &0.137 &0.149 &0.132 &0.149 &0.131 \\%&0.659 (0.148) &0.246 (0.052) &0.142 (0.028) &0.1082 (0.0236) &0.082 (0.015) \\
    & MLE  &0.137 &0.149 &0.133 &0.149 &0.131 \\%&0.656 (0.148) &0.246 (0.052) &0.142 (0.028) &0.1082 (0.0236) &0.082 (0.015) \\
    \hline
   \multirow{4}{*}{$p=0.05$}&Vanilla &0.095 &0.090 &0.082 &0.090 &0.087 \\%&0.3059 (0.065) &0.1845 (0.032) &0.1422 (0.0266) &0.1239 (0.021) &0.104 (0.016) \\
    &Oracle &0.092 &0.088 &0.080 &0.089 &0.083 \\
     &Two Step &0.092 &0.088 &0.080 &0.089 &0.083 \\%&0.2940 (0.060) &0.1766 (0.030) &0.1362 (0.0256) &0.1181 (0.020) &0.102 (0.016) \\
     &MLE &0.092 &0.088 &0.080 &0.089 &0.083 \\
     \hline
        \multirow{4}{*}{$p=0.08$}&Vanilla &0.077 &0.065 &0.072 &0.074 &0.072 \\%&0.3059 (0.065) &0.1845 (0.032) &0.1422 (0.0266) &0.1239 (0.021) &0.104 (0.016) \\
    &Oracle &0.075 &0.062 &0.069 &0.068 &0.071 \\
     &Two Step &0.075 &0.062 &0.069 &0.067 &0.071 \\%&0.2940 (0.060) &0.1766 (0.030) &0.1362 (0.0256) &0.1181 (0.020) &0.102 (0.016) \\
     &MLE &0.075 &0.063 &0.069 &0.067 &0.071 \\
			\bottomrule
		\end{tabular}
	\end{center}
	\caption{Asymptotic variances of Vanilla Spectral, Oracle Spectral, Two-Step Spectral, and MLE estimators when sampling probability $p\in \{0.02,0.05,0.08\}$. Here we compare the first five entries $\{\tilde{\theta}_1,\cdots,\tilde{\theta}_5\}$. The  displayed numbers are averages among 500 replications.}
	\label{tab_variance_encoding}
\end{table}

We consider the following estimators:
(i) Vanilla Spectral estimator (by setting $f(A_l) = |A_l|$),
(ii) Oracle Spectral estimator (by setting $f(A_l) = \sum_{i \in A_l} e^{\theta_i^*}$),
(iii) Two-Step Spectral estimator (by initially setting $f(A_l) = |A_l|$ to obtain $\hat\theta$, followed by a re-run of the spectral method with $f(A_l) = \sum_{i \in A_l} e^{\hat\theta_i}$), and
(iv) the MLE estimator as presented in \cite{fan2022ranking}.

We vary the sampling probability $p$ among the values ${0.02, 0.05, 0.08, 0.11, 0.14}$ and summarize the $\ell_2$ and $\ell_{\infty}$ errors in Table \ref{tab_encoding_consist}. Additionally, we investigate the asymptotic variances of these four estimators by examining the first 5 entries. The findings are consolidated in Table \ref{tab_variance_encoding}.
Table \ref{tab_encoding_consist} and Table \ref{tab_variance_encoding} reveal that the Oracle Spectral and Two-Step Spectral methods exhibit performance closely aligned with that of the MLE, with respect to both statistical rates and efficiency. In contrast, the Vanilla Spectral method demonstrates a marginally inferior performance compared to the aforementioned trio. This observation corroborates our assertion that the asymptotic distribution of Two-Step (or Oracle Weight) Spectral method converges to that of the MLE.
 %One concludes from Tables \ref{tab_encoding_consist} and \ref{tab_variance_encoding} that the Oracle Spectral, Two Step Spectral performs almost the same as the MLE estimator given in \cite{fan2022ranking} in terms of statistical rates and efficiency, whereas  the vanilla spectral method performs slightly worse than these three. This matches our claim that the two-step (or oracle weight) spectral method converges to the second-order distribution of the MLE estimator. 

\newpage
\section{Additional Real Data Results}\label{sec:add_real_data}

\subsection{Real Data Analysis: Vanilla Spectral Method for Journal Ranking} 
\label{vanilla_spectral_real_data}
%{\red MY to-do: add some description.}
In this section, we present the ranking inference outcomes for the 33 journals, derived using the Vanilla Spectral method (or the one-step spectral method). These results, corresponding to the two previously mentioned time periods, are compiled and displayed in Table \ref{Table_one_spectral}.
\begin{table}
    \centering
         	\def\arraystretch{0.95}
      \setlength\tabcolsep{3.0pt}
    \begin{tabular}{c|r r r r r r|r r r r r r}  
    \toprule
               \multicolumn{1}{c|}{} & \multicolumn{6}{c|}{$2006-2010$} & \multicolumn{6}{c}{$2011-2015$}\\
       Journal & \multicolumn{1}{c}{$\tilde{\theta}$} & $\tilde{r}$ & TCI & OCI & UOCI & Count & \multicolumn{1}{c}{$\tilde{\theta}$} & $\tilde{r}$ & TCI & OCI & UOCI & Count\\ \hline
       JRSSB  & $1.635$ & $1$ & $[1, 1]$ & $[1, n]$ & $[1, n]$ & $5282$ &  $1.536$ & $1$ & $[1, 2]$ & $[1, n]$ & $[1, n]$ & $5513$ \\ 
       AoS  & $1.231$ & $3$ & $[2, 4]$ & $[2, n]$ & $[2, n]$ & $7674$ & $1.533$ & $2$ & $[1, 2]$ & $[1, n]$ & $[1, n]$ &$11316$ \\ 
       Bka  & $1.288$ & $2$ & $[2, 4]$ & $[2, n]$ & $[2, n]$ & $5579$ & $1.163$ & $3$ & $[3, 4]$ & $[3, n]$ & $[3, n]$ & $6399$ \\ 
       JASA  & $1.161$ & $4$ & $[2, 4]$ & $[3, n]$ & $[2, n]$ & $9652$ & $1.060$ & $4$ & $[3, 4]$ & $[3, n]$ & $[3, n]$ & $10862$\\ 
       JMLR  & $0.020$ & $19$ & $[12, 25]$ & $[12, n]$ & $[11, n]$ & $1100$ & $0.779$ & $5$ & $[5, 6]$ & $[5, n]$ & $[5, n]$ & $2551$ \\ 
       Biost  & $0.293$ & $11$ & $[10, 20]$ & $[10, n]$ & $[9, n]$ &  $2175$ & $0.619$ & $6$ & $[5, 10]$ & $[5, n]$ & $[5, n]$  & $2727$\\  
       Bcs  & $0.742$ & $6$ & $[5, 8]$ & $[5, n]$ & $[5, n]$ & $6614$ & $0.509$ & $7$ & $[6, 10]$ & $[6, n]$ & $[6, n]$  & $6450$ \\ 
       StSci  & $0.729$ & $7$ & $[5, 9]$ & $[5, n]$ & $[5, n]$ & $1796$ & $0.455$ & $8$ & $[6, 11]$ & $[6, n]$ & $[6, n]$  & $2461$ \\ 
       Bern  & $0.838$ & $5$ & $[5, 7]$ & $[5, n]$ & $[5, n]$ & $1575$ & $0.424$ & $9$ & $[6, 13]$ & $[7, n]$ & $[6, n]$ & $2613$ \\
       JRSSA  & $0.237$ & $14$ & $[10, 20]$ & $[10, n]$ & $[8, n]$ & $893$ & $0.418$ & $10$ & $[6, 13]$ & $[6, n]$ & $[6, n]$ & $865$ \\ 
       JCGS  & $0.575$ & $8$ & $[6, 10]$ & $[7, n]$ & $[5, n]$ & $2493$ & $0.306$ & $11$ & $[8, 13]$ & $[8, n]$ & $[8, n]$ & $3105$ \\ 
       Sini  & $0.384$ & $10$ & $[9, 16]$ & $[9, n]$ & $[8, n]$ & $3701$ & $0.263$ & $12$ & $[9, 13]$ & $[9, n]$ & $[8, n]$ & $4915$ \\ 
       ScaJS  & $0.504$ & $9$ & $[8, 14]$ & $[8, n]$ & $[7, n]$ & $2442$ & $0.240$ & $13$ & $[9, 14]$ & $[9, n]$ & $[8, n]$ & $2573$ \\
       AoAS  & $-1.229$ & $30$ & $[27, 32]$ & $[28, n]$ & $[27, n]$ & $1258$ & $0.066$ & $14$ & $[13, 20]$ & $[13, n]$ & $[13, n]$  & $3768$  \\  
       JRSSC  & $0.068$ & $17$ & $[11, 24]$ & $[12, n]$ & $[11, n]$  & $1401$ & $-0.008$ & $15$ & $[14, 22]$ & $[14, n]$ & $[13, n]$ & $1492$ \\
       JSTA  & $0.236$ & $15$ & $[9, 21]$ & $[10, n]$ & $[8, n]$ & $751$ & $-0.061$ & $16$ & $[14, 23]$ & $[14, n]$ & $[13, n]$ & $1026$ \\ 
       JSPI  & $-0.320$ & $26$ & $[22, 26]$ & $[22, n]$ & $[22, n]$ & $6505$ & $-0.073$ & $17$ & $[14, 22]$ & $[14, n]$ & $[14, n]$ & $6732$ \\ 
       CanJS  & $0.089$ & $16$ & $[11, 23]$ & $[11, n]$ & $[11, n]$  &  $1694$ & $-0.074$ & $18$ & $[14, 23]$ & $[14, n]$ & $[14, n]$ & $1702$ \\ 
       JMVA  & $-0.081$ & $22$ & $[16, 25]$ & $[16, n]$ & $[13, n]$ & $3833$ & $-0.130$ & $19$ & $[15, 23]$ & $[15, n]$ & $[15, n]$ & $6454$ \\
       EJS  & $-1.476$ & $31$ & $[30, 33]$ & $[30, n]$ & $[30, n]$ & $1366$ & $-0.188$ & $20$ & $[15, 25]$ & $[15, n]$ & $[15, n]$  & $4112$\\ 
       Extrem  & $-2.416$ & $33$ & $[32, 33]$ & $[32, n]$ & $[29, n]$ & $173$ & $-0.242$ & $21$ & $[14, 30]$ & $[15, n]$ & $[14, n]$ & $487$ \\ 
       SMed  & $-0.201$ & $24$ & $[19, 26]$ & $[20, n]$ & $[18, n]$ & $6626$ & $-0.248$ & $22$ & $[17, 26]$ & $[17, n]$ & $[16, n]$ & $6857$ \\ 
       SPLet  & $-0.029$ & $20$ & $[13, 24]$ & $[14, n]$ & $[13, n]$ & $3651$ & $-0.346$ & $23$ & $[20, 28]$ & $[21, n]$ & $[19, n]$  & $4439$\\ 
       JClas  & $-0.034$ & $21$ & $[10, 26]$ & $[11, n]$ & $[9, n]$  & $260$ & $-0.360$ & $24$ & $[14, 30]$ & $[15, n]$ & $[12, n]$  & $224$ \\
       ISRe  & $0.283$ & $12$ & $[8, 21]$ & $[8, n]$ & $[8, n]$ & $511$ & $-0.449$ & $25$ & $[20, 30]$ & $[21, n]$ & $[17, n]$ & $905$ \\
       AISM  & $0.275$ & $13$ & $[9, 20]$ & $[10, n]$ & $[8, n]$ & $1313$ & $-0.466$ & $26$ & $[22, 30]$ & $[22, n]$ & $[20, n]$ & $1605$ \\ 
       CSDA  & $-0.953$ & $28$ & $[27, 31]$ & $[27, n]$ & $[27, n]$ & $6732$ & $-0.513$ & $27$ & $[23, 30]$ & $[23, n]$ & $[22, n]$ & $8717$ \\ 
       JNS  & $-0.209$ & $25$ & $[17, 26]$ & $[18, n]$ & $[16, n]$ & $1286$ & $-0.546$ & $28$ & $[22, 30]$ & $[23, n]$ & $[22, n]$ & $1895$ \\ 
       AuNZ  & $0.057$ & $18$ & $[11, 24]$ & $[11, n]$ & $[10, n]$ & $862$ & $-0.559$ & $29$ & $[22, 30]$ & $[22, n]$ & $[21, n]$ & $816$ \\ 
       SCmp  & $-0.126$ & $23$ & $[16, 26]$ & $[16, n]$ & $[13, n]$ & $1309$ & $-0.568$ & $30$ & $[23, 30]$ & $[23, n]$ & $[22, n]$ & $2650$ \\  
       Bay  & $-1.700$ & $32$ & $[30, 33]$ & $[30, n]$ & $[27, n]$ & $279$ & $-1.144$ & $31$ & $[31, 32]$ & $[31, n]$ & $[31, n]$ & $842$ \\ 
       CSTM  & $-0.967$ & $29$ & $[27, 31]$ & $[27, n]$ & $[27, n]$ & $2975$ & $-1.343$ & $32$ & $[31, 32]$ & $[31, n]$ & $[31, n]$ & $4057$ \\ 
       JoAS  & $-0.906$ & $27$ & $[27, 30]$ & $[27, n]$ & $[27, n]$  & $1055$ & $-2.051$ & $33$ & $[33, 33]$ & $[33, n]$ & $[33, n]$ & $2780$ \\ 
       \bottomrule
    \end{tabular}
    \caption{Ranking inference results for 33 journals in 2006-2010 and 2011-2015 based on the Vanilla Spectral estimator. For each time period, there are 6 columns. The first column $\tilde{\theta}$ denotes the estimated underlying scores. The second through the fifth columns denote their relative ranks, two-sided, one-sided, and uniform one-sided confidence intervals for ranks with coverage level $95\%$, respectively. The sixth column denotes the number of comparisons in which each journal is involved.}
    \label{Table_one_spectral}
\end{table}

From our analysis, we can draw parallels to the findings from the two-step regime described in the main text. With a significance level of $\alpha = 10\%$, we note that the ranks of the following journals have shown significant differences between the two time periods:
\begin{align*}
\emph{ AISM, AoAS, EJS, Extrem, JMLR, JoAS.}
\end{align*}
Further, at a significance level of $\alpha = 10\%$, we observe that the top five ranked journals' sets exhibit notable differences across the two periods. Specifically, for the 2006-2010 period, the 95\% confidence set for the top-5 ranked items includes: \emph{AoS, Bcs, Bern, Bka, JASA, JRSSB, JCGS, StSci.}
And for the 2011-2015 period, the 95\% confidence set for the top-5 ranked items includes: \emph{AoS, Biost, Bka, JASA, JMLR, JRSSB.}

\subsection{Real Data Analysis: Vanilla Spectral Method for Movie Ranking} \label{vanilla_spectral_real_data2}
In this section, we carry out the ranking inference on the movies and TV series featured in the \emph{Netflix Prize} dataset \citep{bennett2007netflix}. We follow all details described in Section \ref{rank_movie} of the main text. The only distinction is that here, we employ the Vanilla Spectral estimator (one-step spectral estimator) in lieu of the Two-Step Spectral estimator. Similar results are presented in Table \ref{real_data_movie2}.

\begin{table}[H]
    \centering
             	\def\arraystretch{0.95}
      \setlength\tabcolsep{3pt}
    \begin{tabular}{r|r r r r r r r}
    \toprule
    Movie & $\tilde{\theta}$ & $\tilde{r}$ & TCI & OCI & UOCI & Count \\ \hline
    The Silence of the Lambs               & $2.423$ & $1$ & $[1, 1]$ & $[1, n]$ &  $[1, n]$ &  $19589$ \\
The Green Mile                         & $2.171$ & $2$ & $[2, 5]$ & $[2, n]$ &  $[2, n]$ &   $5391$ \\
Shrek (Full-screen)                    & $2.142$ & $3$ & $[2, 4]$ & $[2, n]$ &  $[2, n]$ &  $19447$ \\
The X-Files: Season 2                  & $2.058$ & $4$ & $[2, 9]$ & $[2, n]$ &  $[1, n]$ &   $1114$ \\
Ray                                    & $1.998$ & $5$ & $[3, 7]$ & $[4, n]$ &  $[2, n]$ &   $7905$ \\
The X-Files: Season 3                  & $1.872$ & $6$ & $[4, 11]$ &  $[4, n]$ &  $[2, n]$ &   $1442$ \\
The West Wing: Season 1                & $1.870$ & $7$ & $[4, 11]$ & $[4, n]$ &  $[4, n]$ &   $3263$ \\
National Lampoon's Animal House        & $1.770$ & $8$ & $[5, 13]$ & $[5, n]$ &  $[5, n]$ &  $10074$ \\
Aladdin: Platinum Edition              & $1.744$ & $9$ & $[5, 17]$ & $[5, n]$ &  $[4, n]$ &   $3355$ \\
Seven                                  & $1.713$ & $10$ &  $[6, 16]$ & $[6, n]$ &  $[5, n]$ &  $16305$ \\
Back to the Future                     & $1.672$ & $11$ & $[6, 18]$ & $[6, n]$ &  $[5, n]$ &   $6428$ \\
Blade Runner                           & $1.574$ & $12$ & $[9, 20]$ & $[9, n]$ &  $[7, n]$ &   $5597$ \\
Harry Potter and the Sorcerer's Stone  & $1.518$ & $13$ & $[10, 24]$ & $[10, n]$ & $[10, n]$ &   $7976$ \\
High Noon                              & $1.490$ & $14$ & $[8, 30]$ & $[9, n]$ &  $[6, n]$ &  $1902$ \\
Jaws                                   & $1.436$ & $15$ & $[12, 32]$ & $[12, n]$ & $[12, n]$ &   $8383$ \\
The Ten Commandments                   & $1.423$ & $16$ & $[11, 34]$ & $[12, n]$ &  $[9, n]$ &  $2186$ \\
Sex and the City: Season 6: Part 2     & $1.393$ & $17$ & $[8, 39]$ & $[9, n]$ &  $[6, n]$ &    $532$ \\
Stalag 17                              & $1.377$ & $18$ & $[9, 38]$ & $[11, n]$ &  $[6, n]$ &    $806$ \\
Willy Wonka \& the Chocolate Factory    & $1.370$ & $19$ & $[13, 34]$ & $[13, n]$ & $[12, n]$ &   $9188$ \\
Blazing Saddles                        & $1.367$ & $20$ & $[13, 34]$ & $[13, n]$ & $[12, n]$ &   $7096$ \\
%Unforgiven                             & $1.322$ & $21$ & $[14, 35]$ & $[14, n]$ & $[13, n]$ &   $9422$ \\
%To Sir                                 & $1.309$ & $22$ & $[11, 48]$ & $[11, n]$ &  $[7, n]$ &    $737$ \\
%Waiting for Guffman                    & $1.304$ & $23$ & $[12, 40]$ & $[12, n]$ & $[11, n]$ &   $1586$ \\
%A Bug's Life                           & $1.296$ & $24$ & $[13, 37]$ & $[14, n]$ & $[12, n]$ &   $2757$ \\ 
%Caddyshack                             & $1.253$ & $25$ & $[14, 37]$ & $[15, n]$ & $[14, n]$ &  $6393$ \\
%Dog Day Afternoon                      & $1.194$ & $26$ & $[14, 47]$ & $[15, n]$ & $[13, n]$ &   $3010$ \\
%Miracle on 34th Street                 & $1.184$ & $27$ & $[14, 48]$ & $[14, n]$ & $[13, n]$ &   $2721$ \\
%Network                                & $1.143$ & $28$ & $[16, 49]$ & $[16, n]$ & $[14, n]$ &  $3842$ \\
%Lilies of the Field                    & $1.142$ & $29$ & $[13, 56]$ & $[14, n]$ & $[12, n]$ &   $844$ \\
%Da Ali G Show: Season 1                & $1.119$ & $30$ & $[14, 56]$ & $[14, n]$ & $[12, n]$ &   $647$ \\ 
    \bottomrule
    \end{tabular}
    \caption{Ranking inference results for top-$20$ Netflix movies or TV series based on the Vanilla Spectral estimator. The first column $\tilde{\theta}$ denotes the estimated underlying scores. The second through the fifth columns denote their relative ranks, two-sided, one-sided, and uniform one-sided confidence intervals for ranks with coverage level $95\%$, respectively. The sixth column denotes the number of comparisons in which each movie is involved.}
    \label{real_data_movie2}
\end{table}

\newpage

\section{Proofs for Fixed Graph (Section \ref{sec4.1})}
{We would also like to emphasize that in the derivation of the non-asymptotic statistical rates of all proofs in the following sections, although we write some high-probability terms as $1-o(1)$ or $o_p(\cdot),O_p(\cdot)$, as we use Bernstein and Hoeffding-type concentration inequalities through our analysis, these probability terms should be regarded as $1-\cO(n^{-\zeta})$ with some $\zeta\ge 2$ (different choice of $\zeta$ will only affect constant terms in the involved concentration inequalities). }

\subsection{Proof of Theorem \ref{thm_approx}}

We need to verify \eqref{approx1}, where we hope to show $\hat\pi_i$ and $\bar\pi_i = \frac{\sum_{j:j\ne i} P_{ji}\pi_j^*}{\sum_{j:j\ne i} P_{ij}}$ are close. Let 
\begin{equation}
\delta_i = \frac{\hat\pi_i - \bar\pi_i}{\pi_i^*} = \frac{\sum_{j:j\ne i} P_{ji} (\hat\pi_j - \pi_j^*)}{\pi_i^* \sum_{j:j\ne i} P_{ij}} = \frac{\sum_{j:j\ne i} P_{ji} \pi_j^*\delta_j}{\pi_i^* \sum_{j:j\ne i} P_{ij}} + \frac{\sum_{j:j\ne i} P_{ji} (\bar\pi_j - \pi_j^*)}{\pi_i^* \sum_{j:j\ne i} P_{ij}} \,.
\label{eq_delta_sp}
\end{equation}
Define $R = (R_1,\dots,R_n)^\top$ where $R_{i} = \sum_{j:j\ne i} P_{ji} (\bar\pi_j - \pi_j^*)$ and recall $\Omega = \{\Omega_{ij}\}_{i\le n, j\le n}$ where $\Omega_{ij} = - P_{ji}\pi_j^*$ for $i\ne j$ and $\Omega_{ii} = \sum_{j:j\ne i} P_{ij}\pi_i^*$. The above equation can be written as 
\[
\Omega \delta = R\,.
\]
Next we will work on bounding $\|R\|_{\infty} = \cO_P(a_{n,p})$, leading to the bound $\|R\|_{2} = \cO_P(\sqrt{n} a_{n,p})$. Then we hope to bound the minimal eigenvalue of $\Omega$ and invert $\Omega$. However, since $E[\Omega |\mathcal G \text{ or }
\tilde{\mathcal G}]$ has minimal eigenvalue equal to zero, with eigenvector ${\bf 1}$. So we only work on spaces orthogonal to ${\bf 1}$. Following the notation and derivation of \cite{gao2021uncertainty}, 
\[
\lambda_{\min, \bot}(\Omega) = \min_{\|v\|=1,v^\top {\bf 1}=0} v^\top \Omega v\,.
\]
Then, letting $\bar\delta = n^{-1}\delta^\top{\bf 1}$, we have
\begin{equation*}
\begin{aligned}
\|R\|_2 &= \|\Omega\delta\|_2 \ge \|E[\Omega|\mathcal G \text{ or } \tilde{\mathcal G} ] \delta\|_2 - \|\Omega - E[\Omega|\mathcal G \text{ or } \tilde{\mathcal G} ]\| \|\delta\|_2 \\
& = \|E[\Omega|\mathcal G \text{ or } \tilde{\mathcal G} ] (\delta - \bar\delta {\bf 1}) \|_2 - \|\Omega - E[\Omega|\mathcal G \text{ or } \tilde{\mathcal G} ]\| \|\delta\|_2 \\
& \ge \lambda_{\min\bot}(E[\Omega|\mathcal G \text{ or } \tilde{\mathcal G} ]) (\|\delta\|_2 - \sqrt{n} |\bar\delta|) - \|\Omega - E[\Omega|\mathcal G \text{ or } \tilde{\mathcal G} ]\| \|\delta\|_2 \,.
\end{aligned}
\end{equation*}
Therefore, 
\begin{equation}
\label{bound_delta_l2}
(\lambda_{\min\bot}(E[\Omega|\mathcal G \text{ or } \tilde{\mathcal G} ]) - \|\Omega - E[\Omega|\mathcal G \text{ or } \tilde{\mathcal G} ]\|) \|\delta\|_2 \le \|R\|_2 + \lambda_{\min\bot}(E[\Omega|\mathcal G \text{ or } \tilde{\mathcal G} ]) \sqrt{n} |\bar\delta| \,.
\end{equation}
We will find $b_{n,p},c_{n,p}$ such that ${e^{-\bar\kappa}} b_{n,p} \lesssim \lambda_{\min, \bot}(E[\Omega|\mathcal G \text{ or } \tilde{\mathcal G} ]) \le \lambda_{\max}(E[\Omega|\mathcal G \text{ or } \tilde{\mathcal G} ]) \lesssim { e^{\bar\kappa}} b_{n,p}$ and $\|\Omega - E[\Omega|\mathcal G \text{ or } \tilde{\mathcal G} ]\| = o_P(b_{n,p})$, and $|\bar\delta|=\cO_P(c_{n,p})$. Then we conclude $\|\delta\|_2 = \cO_P(\sqrt{n} ({ e^{\bar\kappa}} a_{n,p} b_{n,p}^{-1}+c_{n,p}))$. This $\ell_2$ bound of $\delta$, together with \eqref{eq_delta_sp}, actually generates an $\ell_\infty$ bound of $\delta$, that is $\|\delta\|_\infty = \cO_P(d_{n,p})$, which is of a smaller order comparing to the leading term that dominates the asymptotic distribution. Specifically, 
\[
\frac{\hat\pi_i - \pi_i^*}{\pi_i^*} = \frac{\bar\pi_i - \pi_i^*}{\pi_i^*} + \delta_i ={ \frac{\sum_{j:j\ne i} (P_{ji}\pi_j^* - P_{ij}\pi_i^*)}{\pi_i^* \sum_{j:j\ne i} P_{ij}} }+ \delta_i\,.
\]
As we have seen in { Section \ref{sec4.1} that conditioning on $\mathcal G$, $\frac{\bar\pi_i - \pi_i^*}{\pi_i^*}$ in \eqref{approx1} can be well approximated by $J_i^*$ defined in \eqref{approx2}, that is $\frac{\bar\pi_i - \pi_i^*}{\pi_i^*} = \cO_P(J_i^*)$. In addition, $J_i^*$ is asymptotically normally distributed with variance given in \eqref{eq:fan1}. By Assumption \ref{ass4.1}, this variance term is in the order of $O(n^\dagger (e^{\theta_i + \max_k\theta_k^*}) / (n^\dagger e^{\theta_i})^2 = O(e^{\bar\kappa}/n^\dagger)$. Therefore,
}
\[
\frac{\bar\pi_i - \pi_i^*}{\pi_i^*} = \cO_P(J_i^*) = \cO_P({ e^{\bar\kappa/2}}/\sqrt{n^\dagger})\,.
\]
So it suffices to confirm that $\|\delta\|_\infty = O_P(d_{n,p}) = o_p(1/\sqrt{n^\dagger})$.

Now let us get into the details of finding out $a_{n,p}, b_{n,p}, c_{n,p}, d_{n,p}$ to prove \eqref{approx1}. For $j\ne i$, define 
\begin{equation} \label{eq:pi_bar_leave_one_out}
\bar\pi_j^{(-i)} = \frac{\sum_{k:k\ne j} \tilde P_{kj}\pi_k^*}{\sum_{k:k\ne j} \tilde P_{jk}} \,,
\end{equation}
where $\tilde P_{jk}$ changes any randomness in $(A_l,c_l)$ to its expectation if $(A_l,c_l)$ involves item $i$, that is, $\tilde P_{jk}$ removes any randomness in comparison with item $i$. In the case of fixed $\mathcal G$, { following the notation $Z_{A_l}^k$ in Section \ref{sec4.1},}
\[
\tilde P_{jk} = \frac1d \sum\limits_{l \in \mathcal D} 1(j,k\in A_l) \tilde Z_{A_l}^k \,,
\]
where $\tilde Z_{A_l}^k = E[Z_{A_l}^k]$ if $i \in A_l$ and $\tilde Z_{A_l}^k = Z_{A_l}^k$ otherwise. With the definition of $\bar\pi_j^{(-i)}$, we have
\begin{equation*}
\begin{aligned}
R_i &= \sum_{j:j\ne i} P_{ji} (\bar\pi_j - \pi_j^*) = \sum_{j:j\ne i} P_{ji} (\bar\pi_j - \bar\pi_j^{(-i)}) + \sum_{j:j\ne i} P_{ji} (\bar\pi_j^{(-i)} - \pi_j^*) = I_{i,1} + I_{i,2} \,.
\end{aligned}
\end{equation*}
Note that we expect $I_1$ to be small as $\bar\pi_j$ and $\bar\pi_j^{(-i)}$ only differs for comparisons involving $i$, and we expect $I_2$ to be small since now $P_{ji}$, which only involves comparison including item $i$, and $\bar\pi_j^{(-i)} - \pi_j^*$ are independent. 

Next we provide bounds for $I_1$ and $I_2$ separately. Let us first work on $I_1$.
\begin{equation*}
\begin{aligned}
\max |I_{i,1}| &= \max_i |\sum_{j:j\ne i} P_{ji} (\bar\pi_j - \bar\pi_j^{(-i)})| \\
& = \max_i \bigg| \sum_{j:j\ne i} P_{ji} \bigg( \frac{\sum_{k:k\ne j} P_{kj}\pi_k^*}{\sum_{k:k\ne j} P_{jk}} - \frac{\sum_{k:k\ne j} \tilde P_{kj}\pi_k^*}{\sum_{k:k\ne j} \tilde P_{jk}}\bigg)\bigg| \\
& = \max_i \bigg| \sum_{j:j\ne i} P_{ji} \bigg( \frac{\sum_{k:k\ne j} P_{kj}\pi_k^*}{\sum_{k:k\ne j} P_{jk}} - \frac{\sum_{k:k\ne j} P_{kj}\pi_k^*}{\sum_{k:k\ne j} \tilde P_{jk}} - \frac{\sum_{k:k\ne j} (\tilde P_{kj} - P_{kj}) \pi_k^*}{\sum_{k:k\ne j} \tilde P_{jk}}\bigg) \bigg| \\
& \lesssim \frac{n n^{\ddagger}}{d} \max_j \Big| \frac{1}{\sum_{k:k\ne j} \tilde P_{jk}} - \frac{1}{\sum_{k:k\ne j} P_{jk}} \Big| \sum_{k:k\ne j} P_{kj}\pi_k^* + \max_i \bigg| \sum_{j:j\ne i} P_{ji} \frac{\sum_{k:k\ne j} (\tilde P_{kj} - P_{kj}) \pi_k^*}{\sum_{k:k\ne j} \tilde P_{jk}} \bigg| \\
& =: \Delta_{1} + \Delta_{2} \,,
\end{aligned}
\end{equation*}
{ where the inequality uses $|P_{ji}| \lesssim n^{\ddagger} / d$ and holds with probability $1-o(1)$}.
It is not hard to show that $\max_j | \sum_{k:k\ne j} (\tilde P_{kj} - P_{kj})\pi_k^*| = \cO_P({ e^{\bar\kappa}}(dn)^{-1} \sqrt{n^\ddagger \log n})$. Furthermore, when $n^\dagger \gtrsim \log n$, $\max_{i} \big|\sum_{j:j\ne i} P_{ij} - \sum_{j:j\ne i} E[P_{ij} | \mathcal G] \big| = \cO_P(d^{-1}\sqrt{n^\dagger \log n})$ and $\sum_{j:j\ne i} E[P_{ij} | \mathcal G] \asymp \frac{1}{d} n^\dagger$ imply that $\max_j (\sum_{k:k\ne j} \tilde P_{jk})^{-1}$ and $\max_j (\sum_{k:k\ne j} P_{jk})^{-1}$ are both $\cO_P(d/n^\dagger)$. Therefore, the first term $\Delta_1$ has the rate of convergence of 
\[
\Delta_1 = \cO_P\bigg( {e^{\bar\kappa}} \frac{n n^{\ddagger}}{d} \bigg( \frac{d}{n^
\dagger}\bigg)^2 \frac{\sqrt{n^\ddagger \log n}}{dn} \frac{n^\dagger}{dn}\bigg) = \cO_P\bigg( {e^{\bar\kappa}} \frac{n^\ddagger \sqrt{n^\ddagger \log n}}{dn n^\dagger} \bigg) \,.
\]
To deal with $\Delta_2$, we need to further expand the expression.
\begin{equation*}
\begin{aligned}
\Delta_2 &= \max_i \bigg| \sum_{j:j\ne i} P_{ji} \frac{\sum_{k:k\ne j} (\tilde P_{kj} - P_{kj}) \pi_k^*}{\sum_{k:k\ne j} \tilde P_{jk}} \bigg| \\
& \lesssim \frac{{ e^{\bar\kappa}}}{nn^\dagger} \max_i \bigg| \sum_{j:j\ne i} P_{ji} \bigg(\sum_{k:k\ne j} \sum_{h \in \mathcal D} 1(i,j,k \in A_h) (Z_{A_h}^j - E[Z
_{A_h}^j]) \bigg) \bigg| \\
& \lesssim \frac{{ e^{\bar\kappa}}}{nn^\dagger} \max_i \bigg| \sum_{j:j\ne i} P_{ji} \bigg(\sum_{h \in \mathcal D} 1(i,j \in A_h) (Z_{A_h}^j - E[Z
_{A_h}^j]) \bigg) \bigg| \,,
\end{aligned}
\end{equation*}
{with probability $1-o(1)$}. In the above, the second inequality is due to $ (\sum_{k:k\ne j} \tilde P_{jk})^{-1} = \cO_P(d/n^\dagger), \forall j$, $\pi^* \asymp {e^{\bar\kappa}}/n$ and $\tilde P_{kj}$ only differs with $P_{kj}$ when $i$ is in the comparison set. And the third inequality is because when $k\ne i$, if $i,j\in A_h$, $A_h$ is at most counted for $M-2$ times in $i,j,k \in A_h$ for different $k \in A_h$, while when $k=i$, we get another copy of the same term. Next notice that $P_{ji}$ and $Z_{A_h}^j$ are independent over both $j$ and $h$ since $P_{ji}$ only involves comparisons ending with winner $i$ and $Z_{A_h}^j$ only involves comparisons ending with winner $j$. So by Hoeffding's inequality, first conditioning on $P_{ji}$, we have
\begin{equation*}
\begin{aligned}
\Delta_2 &= \cO_P\bigg(\frac{{ e^{\bar\kappa}}}{nn^\dagger} \sqrt{\log n \cdot \max_i \sum_{j: j\ne i} \sum_{h\in \mathcal D} 1(i,j\in A_h) P_{ji}^2}\bigg) = \cO_P\bigg(\frac{{e^{\bar\kappa}}}{dnn^\dagger} \sqrt{\log n \cdot n^\dagger (n^\ddagger)^2} \bigg)\,.
\end{aligned}
\end{equation*}
Here we use the fact that $|P_{ji}| \le n^\ddagger / d$. Combining the bounds for $\Delta_1$ and $\Delta_2$, we obtain
\begin{equation*}
\begin{aligned}
\max_i |I_{i,1}| &\lesssim \frac{{ e^{\bar\kappa}} n^\ddagger}{dnn^\dagger} \sqrt{n^\dagger \log n} \,,
\end{aligned}
\end{equation*}
with probability $1-o(1)$.

\medskip
\medskip

Now we bound $\max_i |I_{i,2}|$ for fixed $\mathcal G$. With probability $1-o(1)$, 
\begin{equation*}
\begin{aligned}
\max_i |I_{i,2}| &= \max_i \bigg| \sum_{j:j\ne i} P_{ji} (\bar\pi_j^{(-i)} - \pi_j^*) \bigg|
\\
&\lesssim \frac{d}{n^\dagger} \max_i \bigg| \sum_{j:j\ne i}  \sum_{k:k\ne j} P_{ji} (\tilde P_{kj}\pi_k^* - \tilde P_{jk}\pi_j^*) \bigg| \\
& \lesssim \frac{{e^{-\min_{i\in[n]} \theta_i^*}}}{n n^\dagger} \max_i \bigg| \sum_{j:j\ne i}  \sum_{k:k\ne j} P_{ji} \bigg( \sum_{h\in\mathcal D} 1(j,k\in A_h) (\tilde Z_{A_h}^j e^{\theta_k^*} - \tilde Z_{A_h}^k e^{\theta_j^*}) \bigg) \bigg| \,.
\end{aligned}
\end{equation*}
In the right-hand side of the above bound, when $k=i$, the term is exactly zero because when $i\in A_h$, both $\tilde Z_{A_h}^j e^{\theta_k^*}$ and $\tilde Z_{A_h}^k e^{\theta_j^*}$ are constants in expectation and they cancel out. When $k\ne i$, $P_{ji}$ and $(\tilde Z_{A_h}^j e^{\theta_k^*} - \tilde Z_{A_h}^k e^{\theta_j^*})$ are independent over both $h$ and $j,k$. We separate $1(j,k\in A_h)(\tilde Z_{A_h}^j e^{\theta_k^*} - \tilde Z_{A_h}^k e^{\theta_j^*})$ into two terms $1(j,k\in A_h)(\tilde Z_{A_h}^j e^{\theta_k^*} - E[\tilde Z_{A_h}^j] e^{\theta_k^*}) - 1(j,k\in A_h)(\tilde Z_{A_h}^k e^{\theta_j^*} - E[\tilde Z_{A_h}^k] e^{\theta_j^*})$. For the former term, summing over $k$ makes it $1(j \in A_h)(\tilde Z_{A_h}^j - E[\tilde Z_{A_h}^j]) \sum_{k: k\in A_h, k\ne i,j} e^{\theta_k^*}$, { where $\sum_{k: k\in A_h, k\ne i,j} e^{\theta_k^*} \lesssim e^{\max_{i\in[n]} \theta_i^*} \lesssim e^{\bar\kappa} e^{\min_{i\in[n]} \theta_i^*}$}, we then apply Hoeffding's inequality with average over $h$ and $j$. For the latter term, we sum over $j$ first. It is not hard to see that
\begin{equation*}
\begin{aligned}
\max_i |I_{i,2}| & = \cO_P\bigg(\frac{{ e^{\bar\kappa}}}{nn^\dagger} \sqrt{\log n \cdot \max_i \sum_{j: j\ne i} \sum_{h\in \mathcal D} 1(j\in A_h) P_{ji}^2}\bigg) \\ 
& = \cO_P\bigg(\frac{{e^{\bar\kappa}}}{nn^\dagger} \sqrt{n^\dagger \log n \cdot \max_i \sum_{j: j\ne i} P_{ji}^2}\bigg) \\
& = \cO_P\bigg(\frac{{ e^{\bar\kappa}}}{dnn^\dagger} \sqrt{\log n \cdot n^\ddagger (n^\dagger)^2} \bigg)\,.
\end{aligned}
\end{equation*}
Here the last equation is because $\max_i \sum_{j: j\ne i} P_{ji}^2 =\cO_P(n^\ddagger \max_i \sum_{j: j\ne i} P_{ji} / d) = \cO_P(n^\ddagger n^\dagger / d^2)$.
Combining the bounds for $I_{i,1}$ and $I_{i,2}$, we conclude that with probability $1-o(1)$,
\[
\|R\|_\infty \lesssim \frac{{e^{\bar\kappa}}}{d}\sqrt{\frac{n^\ddagger \log n}{n^2}} =: a_{n,p}\,.
\]
% When $n^{\ddagger}\lesssim \log n$ and $n^\dagger\gtrsim \textrm{poly}(\log n),$ by using the similar proof procedure of Lemma 6 and proof in section C.2 in \cite{fan2022ranking}, we can achieve $\|R\|_{\infty}\lesssim \frac{1}{dn}$. 
%\begin{equation*}
%\begin{aligned}
%R_i &= \sum_{j:j\ne i} P_{ji} (\bar\pi_j - \pi_j^*) = \sum_{j:j\ne i} P_{ji} \frac{\sum_{k:k\ne j} P_{kj}\pi_k^* - P_{jk}\pi_j^*}{\sum_{k:k\ne j} P_{jk}} \\
%& \lesssim \sum_{j:j\ne i} \frac{E[P_{ji}|\mathcal G \text{ or } \tilde{\mathcal G}]}{\sum_{k:k\ne j} E[P_{jk}|\mathcal G \text{ or } \tilde{\mathcal G}]} \bigg(\sum_{k:k\ne j} P_{kj}\pi_k^* - P_{jk}\pi_j^*\bigg) \\
%& = O\bigg(\frac{n^\ddagger}{n^\dagger}\bigg) \sum_{j:j\ne i} \sum_{k:k\ne j} (P_{kj}\pi_k^* - P_{jk}\pi_j^*) \\
%& = O_P\bigg(\frac{1}{dn^2} \sqrt{n n^\dagger}\bigg)\,.
%\end{aligned}
%\end{equation*}

\medskip
\medskip

Next, we need
\begin{equation}
{e^{-\bar\kappa}} b_{n,p} \lesssim \lambda_{\min, \bot}(E[\Omega|\mathcal G \text{ or } \tilde{\mathcal G} ]) \le \lambda_{\max}(E[\Omega|\mathcal G \text{ or } \tilde{\mathcal G} ]) \lesssim { e^{\bar\kappa}} b_{n,p}\,.
\end{equation}
\begin{equation}
\|\Omega - E[\Omega|\mathcal G \text{ or } \tilde{\mathcal G} ]\| = o_P(b_{n,p})\,.
\end{equation}
By Assumption \ref{ass4.2}, we can simply choose $b_{n,p} := \frac{n^\dagger}{dn}$. 
Then, let us find out $c_{n,p}$. We will show $|\bar\delta|=\cO_P({e^{\bar\kappa}} / \sqrt{nn^\dagger})$, that is $c_{n,p} := { e^{\bar\kappa}} / \sqrt{nn^\dagger}$. 
{From \eqref{eq_delta_sp}, it can be shown that $|\bar\delta| = \cO_P( e^{-\min_{i} \theta_i^*} |\sum_i e^{\theta_i^*}\delta_i|)$. Also from \eqref{eq_delta_sp}, }
$\pi_i^*\delta_i = \hat\pi_i - \bar\pi_i$. Since $\sum_{i=1}^n \hat\pi_i = 1$, we have 
\[
|\sum_i \pi_i^*\delta_i| = |\sum_i \bar\pi_i - 1| = |\sum_i (\bar\pi_i - \pi_i^*)| = \cO_P\bigg( \frac{d}{n^\dagger} \bigg| \sum_{i=1}^n \sum_{j:j\ne i} (P_{ji} \pi_j^* - P_{ij}\pi_i^*) \bigg| \bigg)\,.
\]
{Equivalently, $|\sum_i e^{\theta_i^*}\delta_i| = \cO_P( \frac{d}{n^\dagger} \big| \sum_{i=1}^n \sum_{j:j\ne i} (P_{ji} e^{\theta_j^*} - P_{ij} e^{\theta_i^*}) \big|)$.}
Similar to the proof of $I_{i,2}$ without taking the additional max over $i$, again by Hoeffding's inequality, we conclude that 
\[
|\bar\delta| \lesssim \frac{{ e^{\bar\kappa}}}{nn^\dagger} \sqrt{\sum_{i=1}^n \sum_{l\in\mathcal D} 1(i\in A_l)} \lesssim  \frac{{e^{\bar\kappa}} }{\sqrt{nn^\dagger}} \,,
\]
with probability $1-o(1)$.

Finally, we achieve the goal of bounding $\|\delta\|_{\infty}$ as follows. From \eqref{bound_delta_l2}, we have 
\[
\|\delta\|_2 = \cO_P(\sqrt{n}({ e^{\bar\kappa}} a_{n,p}b_{n,p}^{-1}+c_{n,p})) = \cO_P\bigg({ e^{2\bar\kappa}} \sqrt{\frac{n n^\ddagger \log n}{ (n^\dagger)^2}} + { e^{\bar\kappa}} \sqrt{\frac{1}{ n^\dagger}} \bigg).
\]
Then from \eqref{eq_delta_sp}, with probability $1-o(1)$, 
\begin{equation} \label{eq:delta_bound_fixed_graph}
\begin{aligned}
\|\delta\|_\infty & \lesssim \frac{d\, { e^{-\min_i\theta_i^*}}}{n^\dagger} \bigg(\|\delta\|_2 \sqrt{\sum_{j: j\ne i} P_{ji}^2 e^{2\theta_j^*}} + \|(\sum_i e^{\theta_i^*}) R\|_\infty\bigg)  \\ 
& \lesssim \frac{d}{n^\dagger} \bigg( { e^{\bar\kappa}} \bigg({ e^{2\bar\kappa}}\sqrt{\frac{n n^\ddagger \log n}{ (n^\dagger)^2}} + { e^{\bar\kappa}}\sqrt{\frac{1}{ n^\dagger}}\bigg) \sqrt{\frac{1}{d^2} (n^{\ddagger} n^\dagger)} + \frac{{ e^{\bar\kappa}}}{d} \sqrt{n^\ddagger\log n}\bigg) \\
& \lesssim { e^{3\bar\kappa}} \sqrt{\frac{n (n^\ddagger)^2\log n}{(n^\dagger)^3}} + { e^{2\bar\kappa}} \sqrt{\frac{n^\ddagger \log n}{(n^\dagger)^2}} =: d_{n,p} \,.
\end{aligned}
\end{equation}
We have $d_{n,p} = o(1/\sqrt{n^\dagger})$ if  ${ e^{3\bar\kappa}} n^\ddagger n^{1/2} (\log n)^{1/2} = o(n^\dagger)$ {and $e^{2\bar\kappa} \log n = o(n)$}.

Recall $\frac{\hat\pi_i - \pi_i^*}{\pi_i^*} = \frac{\bar\pi_i - \pi_i^*}{\pi_i^*} + \delta_i$, and we have shown $\delta_i$ is a small term negligible for deriving the asymptotic distribution of $\frac{\hat\pi_i - \pi_i^*}{\pi_i^*}$, which is entirely driven by $\frac{\bar\pi_i - \pi_i^*}{\pi_i^*}$.   

Finally, we go from the inference of $\hat\pi_i$ to the inference of $\tilde\theta_i$. Recall $\tilde\theta_i = \log \hat\pi_i - \frac{1}{n} \sum_{k=1}^n \log \hat\pi_k$ and $\log \pi_i^* = \theta_i^* - \log \sum_{k=1}^n e^{\theta_k^*}$, which together with our assumption that $1^\top \theta^* = 0$ gives $\theta_i^* = \log \pi_i^* - \frac{1}{n} \sum_{k=1}^n \log \pi_k^*$. Therefore
\[
\tilde\theta_i - \theta_i^* =  \log \frac{\hat\pi_i}{\pi_i^*} - \frac{1}{n} \sum_{k=1}^n \log \frac{\hat\pi_k}{\pi_k^*} \,.
\]
By Delta method, $\log \frac{\hat\pi_i}{\pi_i^*}$ has the same asymptotic distribution as $\frac{\hat\pi_i - \pi_i^*}{\pi_i^*}$. It suffices to show that the additional term $\frac{1}{n} \sum_{k=1}^n \log \frac{\hat\pi_k}{\pi_k^*} = o_P(1/\sqrt{n^\dagger})$. By Taylor's expansion, we only need to show $\frac{1}{n} \sum_{k=1}^n \frac{\hat\pi_k-\pi_k^*}{\pi_k^*} = \frac{1}{n} \sum_{k=1}^n \frac{\bar\pi_k-\pi_k^*}{\pi_k^*} + \bar\delta = \cO_P(\frac{1}{n} \sum_{k=1}^n J_k^* + \bar\delta) = o_P(1/\sqrt{n^\dagger})$. From the above, we already derived that $\bar\delta = \cO_P({ e^{\bar\kappa}} / \sqrt{n n^\dagger}) = o_P(1/\sqrt{n^\dagger})$ {when $e^{2\bar\kappa} \log n = o(n)$}. So it only remains to prove $\bar J:= \frac{1}{n} \sum_{k=1}^n J_k^* = o_P(1/\sqrt{n^\dagger})$. In fact, this proof is almost identical to that of $\bar \delta$. 
\[
|\bar J| = \bigg| \frac{1}{n} \sum_{i=1}^n \frac{\sum_{j:j\ne i} (P_{ji} \pi_j^* - P_{ij}\pi_i^*)}{\sum_{j:j\ne i} \EE[P_{ij}|\mathcal G \text{ or } \tilde{\mathcal G} ] \pi_i^*} \bigg| = \cO_P\bigg( \frac{d \, {e^{-\min_i\theta_i^*}}}{n^\dagger} \bigg| \sum_{i=1}^n \sum_{j:j\ne i} (P_{ji} e^{\theta_j^*} - P_{ij} e^{\theta_i^*}) \bigg| \bigg) = \cO_P\bigg( \frac{{ e^{\bar\kappa}}}{\sqrt{nn^\dagger}} \bigg)\,,
\]
which is indeed $o_P(1/\sqrt{n^\dagger})$.
So we conclude that $\tilde\theta_i - \theta_i^*$ has the same asymptotic distribution as $\frac{\bar\pi_i - \pi_i^*}{\pi_i^*}$ and $J_i^*$ for all $i \in [n]$.
\qed

\subsection{Proof of Theorem \ref{thm_consist_normality}}

From the proof of Theorem \ref{thm_approx}, we know for all $i\in[n]$, $\tilde{\theta}_i-\theta_i^*=J_i^*+o_P(1/\sqrt{n^\dagger})$.
According to the definition of $J_i^*,$ given in \eqref{approx2}. For the denominator, we have
${\sum_{j\neq i}}\EE[P_{ij}\given \cG]e^{\theta_i^*} \asymp e^{\theta_i^*} (n^\dagger / d)$ { from Assumption \ref{ass4.1}}. For the numerator, it can be written into an independent sum over $\cD$. By Bernstein inequality, we obtain the order of its maximum to be ${ e^{\max_i\theta_i^*}} \sqrt{n^{\dagger}\log n}/d$. Therefore, by combining the orders of the numerator and denominator, we have
$\|\tilde{\theta}-\theta^*\|_{\infty}\lesssim {e^{\bar\kappa}} \sqrt{\frac{\log n}{n^\dagger}}$ with probability $1-o(1)$. This concludes the proof of \eqref{l_infty_consist}.

Also since the numerator is an independent sum, by central limit theorem, it is asymptotically normal. It is easy to verify the asymptotic variance leads to the form of $\rho_i(\theta^*)$ in \eqref{eq_rho_theta}. Next we prove that the numerator and denominator of $\rho_i(\theta)$ can be calculated with any consistent estimator $\hat\theta$. To be specific, for the denominator, we need to show
\begin{align*}
    \sum\limits_{l \in \mathcal D} 1(i \in A_{l}) \bigg(\frac{\sum_{u\in A_l} e^{\hat \theta_u} - e^{\hat \theta_i}}{f(A_l)}\bigg) \frac{e^{\hat \theta_i}}{f(A_l)}\bigg{/} \bigg[\sum\limits_{l \in \mathcal D} 1(i \in A_{l}) \bigg(\frac{\sum_{u\in A_l}  e^{\theta_u^*} - e^{\theta_i^*}}{f(A_l)}\bigg) \frac{ e^{\theta_i^*}}{f(A_l)}\bigg]\rightarrow 1.
\end{align*}
When we have $\theta_i^*$ being bounded for all $i\in[n],$ and $\hat \theta_i-\theta_i^*=o_P(1)$, it is straightforward to verify that 
\begin{align*}
    \frac{\bigg(\sum\limits_{l \in \mathcal D} 1(i \in A_{l}) \bigg(\frac{\sum_{u\in A_l} e^{\hat \theta_u} - e^{\hat \theta_i}}{f(A_l)}\bigg) \frac{e^{\hat \theta_i}}{f(A_l)}-\sum\limits_{l \in \mathcal D} 1(i \in A_{l}) \bigg(\frac{\sum_{u\in A_l}  e^{\theta_u^*} - e^{\theta_i^*}}{f(A_l)}\bigg) \frac{ e^{\theta_i^*}}{f(A_l)}\bigg)}{\bigg[\sum\limits_{l \in \mathcal D} 1(i \in A_{l}) \bigg(\frac{\sum_{u\in A_l}  e^{\theta_u^*} - e^{\theta_i^*}}{f(A_l)}\bigg) \frac{ e^{\theta_i^*}}{f(A_l)}\bigg]}\rightarrow 0,
\end{align*}
by using Taylor's expansion.
The numerator component of $\rho_i(\theta)$ can be proved similarly. Therefore, we conclude that by plugging in any consistent estimator of $\theta_i^*,i\in[n],$ we obtain consistent estimator for the asymptotic variance.
%{\red MY to do: add variance order and consistency results.}
\qed

\newpage

\section{Proofs for the PL Model with Random Graph (Section \ref{sec:rand_graph})}
\subsection{Proof of Theorem \ref{thm_b_rand}}
We hope to show 
\begin{equation}
\label{bound_b_1}
{e^{-\bar\kappa}} b_{n,p} \lesssim \lambda_{\min, \bot}(E[\Omega|\tilde{\mathcal G} ]) \le \lambda_{\max}(E[\Omega|\tilde{\mathcal G} ]) \lesssim {e^{\bar\kappa}} b_{n,p}\,,
\end{equation}
\begin{equation}
\label{bound_b_2}
\|\Omega - E[\Omega|\tilde{\mathcal G} ]\| = o_P(b_{n,p})\,,
\end{equation}
hold for $b_{n,p} = \frac{n^\dagger}{dn}$. We will only focus on conditioning on $\tilde{\mathcal G}$ with $M=3$. The case with general $M$ can be shown similarly. 
%One thing to note is that the first conclusion is based on the randomness of the comparison graph, so if $\mathcal G$ is conditioned upon, it has to be a condition rather than a probability conclusion. However, we have shown under Erdos-Renyi type of uniform sampling on hyper comparison graph, which is indeed $\tilde{\mathcal G}$, it indeed holds with negligible probability,  as long as $n^{m-1} p \gg \log n$, which is the required probability to make sure the hyper-graph is connected.

Let us first look at \eqref{bound_b_1} conditioning on $\tilde{\mathcal G}$ with $M=3$. Let $\Omega^* = E[\Omega|\tilde{\mathcal G}]$. $\Omega_{ij}^* = -P_{ji}^* \pi_j^* = -P_{ij}^* \pi_i^* = \Omega_{ji}^*$ and $\Omega_{ii}^* = \sum_{j:j\ne i} P_{ij}^* \pi_i^*$, where recall that $P_{ij}^* = E[P_{ij} | \tilde{\mathcal G}] = \frac{L}{d} \sum_{k:k\ne j,i} \tilde A_{ijk} E[Z_{ijk}^l]$. So,
\[
{e^{-\bar\kappa}} \frac{L}{dn} v^\top L_{\tilde{\mathcal G}} v \lesssim v^\top \Omega^* v = \sum_{i \ne j} \frac{1}{2} (v_i-v_j)^2 \Omega_{ij}^* \lesssim { e^{\bar\kappa}} \frac{L}{dn} v^\top L_{\tilde{\mathcal G}} v \,,
\]
with probability $1-o(1)$ where $(L_{\tilde{\mathcal G}})_{ij} = -\sum_{k:k\ne j,i} \tilde A_{ijk}$ and $(L_{\tilde{\mathcal G}})_{ii} = \sum_{j:j\ne i} \sum_{k:k\ne j,i} \tilde A_{ijk}$. 
%Note that $L_{\tilde{\mathcal G}}$ for $M=2$ is exactly the graph Laplacian. 
It is easy to see all eigenvalues of $E[L_{\tilde{\mathcal G}}]$ are of the order $n^2 p \asymp n^{\dagger}/L$, except for the zero minimal eigenvalue. Again, here $E$ is with respect to the randomness in comparison sets. So we only need to show 
\[
\max_j |\lambda_j(L_{\tilde{\mathcal G}}) - \lambda_j(E[L_{\tilde{\mathcal G}}])| \le \|L_{\tilde{\mathcal G}} - E[L_{\tilde{\mathcal G}}]\| = o_P(n^2 p)\,.
\]
The first inequality is by Weyl's inequality. To prove the second bound, write $L_{\tilde{\mathcal G}} = \sum_{i < j<k} \Delta_{ijk}$ where
\[
\Delta_{ijk} = \tilde A_{ijk} \Big[2(e_i e_i^\top+e_j e_j^\top+e_k e_k^\top) - (e_i e_j^\top+e_j e_i^\top+e_i e_k^\top + e_k e_i^\top+e_j e_k^\top+e_k e_j^\top) \Big]
\]
$\Delta_{ijk}$'s are symmetric and independent. Hence, by matrix Bernstein inequality (Theorem 1.4 of \cite{tropp2012user}), we have
\[
P\Big(\|L_{\tilde{\mathcal G}} - E[L_{\tilde{\mathcal G}}] \ge t\Big) \le n \exp\Big(-\frac{t^2/2}{\sigma^2 + Mt/3}\Big)\,,
\]
where $\max_{i,j,k} \|\Delta_{ijk}\| \le M = O(1)$ and $\sigma^2 = \|\sum_{i<j<k} E[\Delta_{ijk}^2]\| = O(n^2 p)$. Note that although the summation over $i,j,k$ sums over 3 indices, for diagonal summations only 2 indices have nonzero elements,  while for off-diagonal summations, only 1 index have nonzero elements. Therefore, we have shown that
\[
\|L_{\tilde{\mathcal G}} - E[L_{\tilde{\mathcal G}}]\| = \cO_P\bigg(\sqrt{n^2 p \log n} + \log n\bigg)\,,
\]
which is $o_P(n^2 p)$ if $n^2 p \gg \log n$. 

Now let us verify \eqref{bound_b_2} conditioning on $\tilde{\mathcal G}$, when $n^\dagger \gg \log n$. We will still only consider $M=3$ for simplicity. Plugging in the definition of $P_{ij}$ into $\Omega$, we have
\begin{equation*}
\begin{aligned}
\Omega_{ij} &= -\frac{1}{d} \sum_{l=1}^L \sum_{k: k\ne j,i} \tilde A_{ijk} Z_{ijk}^l \pi_j^* \,, \\
\Omega_{ii} &= \frac{1}{d} \sum_{l=1}^L \sum_{j < k: j,k\ne i} \tilde A_{ijk} (Z_{ijk}^l + Z_{ikj}^l) \pi_i^* \,.
\end{aligned}
\end{equation*}
We can write $\Omega = \sum_{l=1}^L \sum_{i<j<k} S_{ijk}^l$ where 
\begin{equation*}
\begin{aligned}
S_{ijk}^l = d^{-1}\tilde A_{jik} \Big[
&-Z_{ijk}^l \pi_j^* e_i e_j^\top + Z_{ijk}^l \pi_i^* e_i e_i^\top -Z_{kij}^l \pi_k^* e_i e_k^\top + Z_{ikj}^l \pi_i^* e_i e_i^\top \\
&-Z_{ijk}^l \pi_i^* e_j e_i^\top + Z_{jik}^l \pi_j^* e_j e_j^\top -Z_{kji}^l \pi_k^* e_j e_k^\top + Z_{jki}^l \pi_j^* e_j e_j^\top \\
&-Z_{ikj}^l \pi_i^* e_k e_i^\top + Z_{kij}^l \pi_k^* e_k e_k^\top -Z_{jki}^l \pi_j^* e_k e_j^\top + Z_{kji}^l \pi_k^* e_k e_k^\top \Big]
\end{aligned}
\end{equation*}
{ $S_{ijk}^l$ is independent for $i,j,k,l$ but asymmetric.} We leverage the Bernstein inequality for asymmetric matrices (Theorem 1.6 of \cite{tropp2012user}), we have
\[
P\Big(\|\Omega - \Omega^* \|\ge t | \tilde{\mathcal G}\Big) \le 2n 
\exp\Big(-\frac{t^2/2}{\sigma^2 + Mt/3}\Big)\,,
\]
where $$\sigma^2 = \max\{\|\sum_l \sum_{i<j<k} E[S_{ijk}^l (S_{ijk}^l)^\top |\tilde{\mathcal G}]\|, \|\sum_l \sum_{i<j<k} E[(S_{ijk}^l)^\top S_{ijk}^l |\tilde{\mathcal G}]\|$$ 
and $\max_{i,j,k,l} \|S_{ijk}^l\| \le M = O({ e^{\bar\kappa}}(dn)^{-1})$. It is not hard to see $\sigma^2 = O(L { e^{2\bar\kappa}} \cdot \max_{i} \sum_{j < k: j,k\ne i} \tilde A_{ijk} / (dn)^2) = O({e^{2\bar\kappa}} n^\dagger / (dn)^2)$, where rate ${e^{2\bar\kappa}}/n^2$ comes from the order of $\pi_i^*$ and rate $n^\dagger$ comes from summer over $M-1$ indices of the $M$ items being compared. So we get
\[
\|\Omega - E[\Omega|\tilde{\mathcal G}]\| = \cO_P\bigg({e^{\bar\kappa}} (dn)^{-1}(\sqrt{n^\dagger \log n} + \log n)\bigg)\,,
\]
which is $o_P(n^\dagger / dn)$ if $n^\dagger \gg {e^{2\bar\kappa}} \log n$. So we can indeed choose $b_{n,p} = n^\dagger / dn$ to make \eqref{bound_b_1} and \eqref{bound_b_2} hold.  
\qed

\subsection{Proof of Theorem \ref{thm_consist_normality_rand}}

In the case of random comparison graph i.e. conditioning on $\tilde {\mathcal G}$ and $M=3$, recall that 
\[
P_{jk} = \frac{1}{d} \sum_{l=1}^L \sum\limits_{s: s \ne j,k} \tilde A_{jks} Z_{{jks}}^l\,,
\]
where we have $Z_{ijk}^l = 1(A_l=\{i,j\}, c_l=j) / f(\{i,j\}) + 1(A_l=\{i,j,k\}, c_l=j) / f(\{i,j,k\})$. In addition, we have the leave-one-out version of $P_{jk}$, in which we only replace any randomness related to the comparison of item $i$ with its expectation and denote the new transition probability as $\tilde P_{jk}$.
\iffalse
That is in \eqref{eq:pi_bar_leave_one_out}, we have
\[
\tilde P_{jk} = \frac{1}{d} \sum_{l=1}^L \sum\limits_{s: s \ne k,j} \tilde A_{jks} \tilde Z_{{jks}}^l\,,
\]
where $\tilde Z_{{jks}}^l = E[Z_{{jks}}^l]$ if $k=i$ or $s=i$ and $\tilde Z_{{jks}}^l = Z_{{jks}}^l$ otherwise. 
\fi
%This is similar to the leave-two-out idea in the paper of \cite{gao2021uncertainty}, but can be easily generalized to leaving $M-1$ items out for $M$-way comparison. 

As we discussed after Assumption \ref{ass4.3} in Section \ref{sec:rand_graph}, $n^\dagger \asymp \binom{n-1}{M-1}pL$ and $n^\ddagger \asymp (\binom{n-2}{M-2}p + \log n)L$. If we are in a dense regime where $\binom{n-2}{M-2}p \ge C \log n$, then the first term dominates $n^\ddagger$ and we have $n^\dagger \asymp n n^\ddagger$. So Assumption \ref{ass4.1} holds. We can exactly follow the procedures in fixed graph proof to show the theorem. The nontrivial proof is really for the case of more sparse graph where $\binom{n-2}{M-2}p \le C \log n$ so that $n^\ddagger \lesssim L \log n$. Next we will only work on this sparse graph case,  i.e. $n^{\ddagger}\lesssim L\log n$ and $n^\dagger\gtrsim L\textrm{poly}(\log n)$, which  makes random graph achieve the weaker assumption of Assumption \ref{ass4.3}.

The inequality that makes the proof for the fixed graph case invalid for the random graph case is the first inequality bound in \eqref{eq:delta_bound_fixed_graph}, where because $\delta_j$ and $P_{ji}$ in the numerator of the first term of \eqref{eq_delta_sp} are not independent, we can only bound this using Cauchy-Schwarz in \eqref{eq:delta_bound_fixed_graph}. But if we define the leave-one-out version of $\hat\pi$ and $\delta$ corresponding to $\tilde P$ rather than $P$, denoted as $\hat\pi^{(-i)}$ and $\delta^{(-i)}$, we can have independence between $\delta_j^{(-i)}$ and $P_{ji}$. 

In the scenario of $n^{\ddagger}\lesssim L\log n$ and $n^\dagger\gtrsim L\textrm{poly}(\log n)$, i.e. $\textrm{poly}(\log n) n^\ddagger / n^\dagger = o(1)$, we first obtain
\begin{equation} \label{eq_delta_sp2}
\begin{aligned}
\delta_i &= \frac{\hat\pi_i - \bar\pi_i}{\pi_i^*} = \frac{\sum_{j:j\ne i} P_{ji} (\hat\pi_j - \pi_j^*)}{\pi_i^* \sum_{j:j\ne i} P_{ij}} \\
&= \frac{\sum_{j:j\ne i} P_{ji} (\hat\pi_j - \hat \pi_j^{(-i)})}{\pi_i^* \sum_{j:j\ne i} P_{ij}}+\frac{\sum_{j:j\ne i} P_{ji} \pi_j^*\delta_j^{(-i)}}{\pi_i^* \sum_{j:j\ne i} P_{ij}} + \frac{\sum_{j:j\ne i} P_{ji} (\bar\pi_j^{(-i)} - \pi_j^*)}{\pi_i^* \sum_{j:j\ne i} P_{ij}} \,.
\end{aligned}
\end{equation}
where $\delta_j^{(-i)}$ is the leave-one-out version of $\delta_j$ and we also define $R^{(-i)}_j:= \sum_{j:j\ne i} P_{ji} (\bar\pi_j^{(-i)} - \pi_j^*).$ 

\begin{lemma} \label{lem:pi_perturb_leave_one_out} Under the conditions of Theorem \ref{thm_consist_normality_rand}, we have 
\[
\|\hat\pi_j-\hat\pi_j^{(-i)}\|_2 \lesssim {e^{\bar\kappa}} \frac{\sqrt{\textrm{poly}(\log n) n^\dagger}}{nd} \asymp { e^{\bar\kappa}} \frac{\sqrt{\textrm{poly}(\log n)}}{n\sqrt{n^\dagger}}\,.
\]
with probability $1-o(1)$.
\end{lemma}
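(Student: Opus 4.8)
\textbf{Proof plan for Lemma \ref{lem:pi_perturb_leave_one_out}.}

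The plan is to bound the perturbation $\hat\pi_j - \hat\pi_j^{(-i)}$ by writing down the fixed-point equations that both stationary distributions satisfy and analyzing the difference. Recall $\hat\pi^\top P = \hat\pi^\top$ and, analogously, $(\hat\pi^{(-i)})^\top \tilde P = (\hat\pi^{(-i)})^\top$, where $\tilde P$ is the leave-one-out transition matrix that replaces any randomness involving comparisons with item $i$ by its conditional expectation. First I would subtract these two equations to obtain an identity of the form $(\hat\pi - \hat\pi^{(-i)})^\top(\text{operator}) = (\hat\pi^{(-i)})^\top(P - \tilde P)$, and then convert it, exactly as in \eqref{eq_delta_sp} and the definition of $\Omega$ in the proof of Theorem \ref{thm_approx}, into a linear system $\Omega^{(i)}(\hat\pi - \hat\pi^{(-i)}) = \rho^{(i)}$ where the source term $\rho^{(i)}$ collects only those entries of $P - \tilde P$ that are supported on comparisons containing item $i$. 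The key structural observation is that $P$ and $\tilde P$ differ only in rows/columns $j$ for which item $i$ appears jointly in some compared triple, so $\rho^{(i)}$ is supported on $O(n^\ddagger / d)$ nonzero coordinates each of size $O((dn)^{-1} n^\ddagger)$ in the relevant normalization, giving $\|\rho^{(i)}\|_2$ of order $\sqrt{n^\ddagger}\cdot n^\ddagger / (dn)$ up to $\mathrm{poly}(\log n)$ factors; since we are in the regime $n^\ddagger \lesssim L\log n$, this is controlled.

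Second, I would invoke Assumption \ref{ass4.2} (which holds under the PL model by Theorem \ref{thm_b_rand}) to invert $\Omega$ on the subspace orthogonal to $\mathbf 1$: the minimal nonzero eigenvalue of $E[\Omega\mid\tilde{\mathcal G}]$ is $\gtrsim n^\dagger/(dn)$, and $\|\Omega - E[\Omega\mid\tilde{\mathcal G}]\| = o_P(n^\dagger/(dn))$, so on the orthogonal complement $\Omega$ is invertible with operator norm of its inverse $\lesssim dn/n^\dagger$. As in \eqref{bound_delta_l2}, I must also control the component of $\hat\pi - \hat\pi^{(-i)}$ along $\mathbf 1$, i.e., the average; this follows because both $\hat\pi$ and $\hat\pi^{(-i)}$ are probability vectors summing to one, so the mean of their difference is exactly zero (or a negligible term after the $\theta$-to-$\pi$ normalization bookkeeping, handled just like $\bar\delta$ in the proof of Theorem \ref{thm_approx}). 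Combining, $\|\hat\pi - \hat\pi^{(-i)}\|_2 \lesssim (dn/n^\dagger)\|\rho^{(i)}\|_2 \lesssim (dn/n^\dagger)\cdot \mathrm{poly}(\log n)\cdot n^\ddagger\sqrt{n^\ddagger}/(dn)$, which, after using $n^\dagger \asymp \binom{n-1}{M-1}pL$ and the sparse-regime bound $n^\ddagger \lesssim L\log n$, reduces to the claimed $\sqrt{\mathrm{poly}(\log n)\, n^\dagger}/(nd) \asymp \sqrt{\mathrm{poly}(\log n)}/(n\sqrt{n^\dagger})$; one needs to check the exponents of the logarithm carefully, absorbing everything into the unspecified $\mathrm{poly}(\log n)$.

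The main obstacle I anticipate is controlling the source term $\rho^{(i)}$ sharply enough. Naively, $\rho^{(i)}$ involves $\hat\pi^{(-i)}$ which itself depends on all the data; but the crucial point is that $\hat\pi^{(-i)}$ is \emph{independent} of the comparisons involving item $i$ (that is the whole purpose of the leave-one-out construction), so conditionally on $\tilde{\mathcal G}$ and on the data not involving $i$, the quantity $(\hat\pi^{(-i)})^\top(P - \tilde P)$ is a sum of independent mean-zero terms (over the $L$ repetitions and the $O(n^\ddagger)$ triples containing $i$), to which Hoeffding's or Bernstein's inequality applies coordinatewise, yielding the $\sqrt{\log n}$ improvement over a crude bound. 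A secondary subtlety is that this is a recursive estimate — bounding $\hat\pi - \hat\pi^{(-i)}$ requires an a priori $\ell_\infty$ (or $\ell_2$) bound on $\hat\pi^{(-i)} - \pi^*$, which one gets by applying the same $\Omega$-inversion argument to $\delta^{(-i)}$ (the analogue of \eqref{eq_delta_sp2} with the leave-one-out quantities), using that now the relevant numerator terms are genuinely independent of $P_{ji}$ so Cauchy--Schwarz can be avoided; care is needed to set up the induction/bootstrapping so it closes in the regime $\mathrm{poly}(\log n)\, n^\ddagger = o(n^\dagger)$.
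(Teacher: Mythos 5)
Your overall architecture (write the difference of the two stationary-distribution fixed-point equations as a linear system, invert on the complement of $\mathbf 1$ using the spectral bounds of Assumption \ref{ass4.2}/Theorem \ref{thm_b_rand}, and note that $\mathbf 1^\top(\hat\pi-\hat\pi^{(-i)})=0$ exactly because both are probability vectors) is a legitimate alternative to what the paper does; the paper instead invokes a black-box Markov-chain perturbation bound (Lemma \ref{lem:perturbation}, from \cite{chen2019spectral}), which reduces the lemma to two checks: $\|\hat\pi^\top(P-\tilde P)\|_2 \lesssim \mathrm{poly}(\log n)\sqrt{n^\dagger}/(dn)$ and $\|P-\tilde P\|_2=o_P(1)$. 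These two routes are morally the same — the spectral gap of $P^*$ and $\lambda_{\min,\perp}(E[\Omega\mid\tilde{\mathcal G}])\gtrsim n^\dagger/(dn)$ with $d\asymp n^\dagger$ encode the same invertibility — so the entire content of the lemma lives in the bound on the source term, and that is where your proposal has a genuine gap.

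You claim $\rho^{(i)}=(\hat\pi^{(-i)})^\top(P-\tilde P)$ is supported on $O(n^\ddagger/d)$ coordinates of size $O((dn)^{-1}n^\ddagger)$, giving $\|\rho^{(i)}\|_2\lesssim \mathrm{poly}(\log n)\,n^\ddagger\sqrt{n^\ddagger}/(dn)$. This miscounts the support: $P-\tilde P$ is nonzero on the \emph{entire} row and column indexed by $i$ (one entry for every item $j$ ever compared with $i$ — up to $\min(n,\,n^\dagger/L)$ of them, governed by $n^\dagger$, not $n^\ddagger$), on the diagonal entries $P_{jj}-\tilde P_{jj}$ for all such $j$, and on the off-diagonal pairs $(j,k)$ sharing a triple with $i$. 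The paper's decomposition into $H_1,H_2,H_3$ shows the dominant contribution is $d^{-1}n^{-1}\big(\sum_{j}\big(\sum_{m}\tilde A_{mji}(Z_{mji}-E Z_{mji})\big)^2\big)^{1/2}$, which — using $\max|Z_{mji}-EZ_{mji}|=O_P(\sqrt{L\log n})$ and the sparse-regime fact $\sum_{m_2}\tilde A_{m_2ji}=O_P(\log n)$ to kill the cross terms — is of order $\sqrt{\mathrm{poly}(\log n)\,n^\dagger}/(dn)$. Your estimate has no $\sqrt{n^\dagger}$ in it at all, and consequently your final arithmetic does not close: $(dn/n^\dagger)\cdot\mathrm{poly}(\log n)(n^\ddagger)^{3/2}/(dn)=\mathrm{poly}(\log n)(n^\ddagger)^{3/2}/n^\dagger$ differs from the target $\sqrt{\mathrm{poly}(\log n)}/(n\sqrt{n^\dagger})$ by a factor that is polynomial in $n$ in the sparse regime ($n^\ddagger\lesssim L\log n$, $n^\dagger\gtrsim L\,\mathrm{poly}(\log n)$), not merely a question of "checking the exponents of the logarithm." Your independence/Hoeffding observation about $\hat\pi^{(-i)}$ is sound but is not the missing ingredient; what is missing is the correct accounting of the $\sim n^\dagger/L$ perturbed entries in row/column $i$ and the $\log n$ bound on the number of triples containing a fixed pair.
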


We first take care of the first term and show it is of negligible order. Specifically, we have \begin{align*}
\frac{\sum_{j:j\ne i} P_{ji} (\hat\pi_j - \hat \pi_j^{(-i)})}{\pi_i^* \sum_{j:j\ne i} P_{ij}}&\lesssim \frac{nd}{n^\dagger}(\sum_{j}P_{ji}^2)^{1/2}\|\hat\pi_j-\hat\pi_j^{(-i)}\|_2\\
&\lesssim {e^{\bar\kappa}} \frac{nd}{n^\dagger}\cdot \sqrt{n^\dagger n^\ddagger}/d \cdot \frac{\sqrt{\textrm{poly}(\log n)}}{n\sqrt{n^{\dagger}}}\\
&\lesssim {e^{\bar\kappa}} \frac{\sqrt{\textrm{poly}(\log n) n^\ddagger}}{n^\dagger}=o\bigg(\frac{1}{\sqrt{n^\dagger}}\bigg),
\end{align*} 
with probability $1-o(1)$.

Next, we upper bound $\|\delta\|_{\infty}$ by the second and third terms of \eqref{eq_delta_sp2}. According to Bernstein's inequality (conditioning on $\delta^{(-i)}$), we obtain
\begin{align}
    \|\delta\|_\infty & \lesssim \frac{d {e^{\bar\kappa}}}{n^\dagger}  \bigg( \max_{i,j}\EE[P_{ji}] \sqrt{n}\|\delta^{(-i)}\|_2 + \sqrt{\log n \cdot \max_{i,j}\textrm{Var}(P_{ji})\|\delta^{(-i)}\|_2^2} + \frac{\log n}{d} \bigg) + \frac{d n}{n^\dagger} \|R^{(-i)}\|_\infty \nonumber\\
& \lesssim \frac{d {e^{\bar\kappa}}}{n^\dagger} \bigg( \frac{(n^\dagger/n) \sqrt{n}\|\delta^{(-i)}\|_2}{d}+ \frac{\sqrt{(n^\dagger / n) \log n}\|\delta^{(-i)}\|_{2}}{d} + \frac{\log n}{d}\bigg) + \frac{dn}{n^\dagger} \|R^{(-i)}\|_\infty\,, \label{delta_bound2}
\end{align}
with probability $1-o(1)$, where $\EE[P_{ji}]$ and Var$(P_{ji})$, by definition of the random graph, are of orders $\binom{n-2}{M-2}p L / d \asymp n^{\dagger}/(dn)$ and $\binom{n-2}{M-2}p L / d^2 \asymp n^{\dagger}/(d^2n)$ respectively. 

The upper bound of $\|\delta^{(-i)}\|_{2}$ can be similarly achieved like $\|\delta\|_2$ (can be proved using similar techniques in the previous section) with order 
\[
\|\delta^{(-i)}\|_{2} = \cO_P\bigg( { e^{2\bar\kappa}} \sqrt{\frac{n n^\ddagger \log n}{ (n^\dagger)^2}} + {e^{\bar\kappa}} \sqrt{\frac{1}{ n^\dagger}}\bigg)\,.
\]
And the upper bound of $\|R^{(-i)}\|_{\infty}$ can also be achieved similarly with $\|R\|_{\infty}$, with order 
\[
\|R^{(-i)}\|_{\infty} = \cO_P\bigg(\frac{{ e^{\bar\kappa}}}{d}\sqrt{\frac{n^\ddagger \log n}{n^2}}\bigg)\,.
\]
Plugging these bounds into \eqref{delta_bound2}, we only need {$e^{4\bar\kappa} = o(n)$, } $n^{\ddagger}\lesssim L\log n$ and $n^\dagger\gtrsim {e^{6\bar\kappa}} L\textrm{poly}(\log n)$ in order to obtain $\|\delta\|_{\infty}=o_P(1/\sqrt{n^\dagger}).$

Once we have shown the uniform approximation, similar to the proof of Theorem \ref{thm_consist_normality}, the rate of convergence and asymptotic normality follow from the definition of $J_i^*$. Now the proof is complete. 
\qed

\subsection{Proof of Lemma \ref{lem:pi_perturb_leave_one_out}}

To prove Lemma \ref{lem:pi_perturb_leave_one_out}, we need to apply the following perturbation bound given by Lemma \ref{lem:perturbation} for $\hat\pi$, the stationary distribution of the Markov chain with transition probability $P$. 

\begin{lemma}[{Theorem 5.1 of \cite{chen2019spectral}, eigenvector perturbation}] 
\label{lem:perturbation}
Suppose that $P$, $\tilde P$ and $P^*$ are probability transition matrices with stationary distributions $\hat\pi$, $\tilde\pi$ and $\pi^*$ respectively. Also assume that $P^*$ represents a reversible Markov chain. When $\|P - \tilde P\|_2 < 1 - \max\{\lambda_2(P^*), -\lambda_n(P^*)\}$, it holds that
\[
\|\hat\pi - \tilde\pi\|_2 \le C \frac{\|\hat\pi^\top (P - \tilde P)\|_2}{1 - \max\{\lambda_2(P^*), -\lambda_n(P^*)\} - \|P - \tilde P\|_2} \,.
\]
\end{lemma}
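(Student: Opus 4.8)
This statement is a (mild) restatement of Theorem~8 in \cite{chen2019spectral}, and the plan is to reproduce its perturbation argument, whose three ingredients are a fixed-point identity, a spectral-gap contraction supplied by the reversible reference chain $P^*$, and a perturbation estimate. \emph{First}, I would exploit the two stationarity relations $\hat\pi^\top P = \hat\pi^\top$ and $\tilde\pi^\top \tilde P = \tilde\pi^\top$. Writing $\Delta := \hat\pi - \tilde\pi$ and adding and subtracting $\hat\pi^\top \tilde P$ gives the identity
\begin{equation*}
\Delta^\top(I - \tilde P) = \hat\pi^\top(P - \tilde P).
\end{equation*}
Since $\hat\pi$ and $\tilde\pi$ are both probability vectors, $\Delta^\top {\bf 1} = 0$, so $\Delta$ lives in the mean-zero subspace. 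The whole problem thus reduces to a \emph{lower} bound of the form $\|\Delta^\top(I - \tilde P)\|_2 \gtrsim \big(1 - \rho - \|P - \tilde P\|_2\big)\|\Delta\|_2$, where $\rho := \max\{\lambda_2(P^*), -\lambda_n(P^*)\}$; combined with the identity this immediately yields the claim upon dividing.

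The \emph{main obstacle} is that $I - \tilde P$ is not self-adjoint, so one cannot read the required contraction directly off its spectrum. I would circumvent this by transferring the contraction from the reversible reference $P^*$. By detailed balance, $D := \mathrm{diag}(\pi^*)$ symmetrizes $P^*$: the matrix $S^* := D^{1/2} P^* D^{-1/2}$ is symmetric, shares the eigenvalues of $P^*$, and has leading eigenvalue $1$ with eigenvector $D^{1/2}{\bf 1}$. Hence on the orthogonal complement of $D^{1/2}{\bf 1}$ all eigenvalues of $I - S^*$ lie in $[1 - \rho,\, 1 + \rho]$. Setting $y := D^{-1/2}\Delta$, the mean-zero condition $\Delta^\top {\bf 1} = 0$ is exactly $y \perp D^{1/2}{\bf 1}$, so $\|(I - S^*)y\|_2 \ge (1 - \rho)\|y\|_2$. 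Undoing the similarity transform gives $\|\Delta^\top(I - P^*)\|_2 \ge c(1 - \rho)\|\Delta\|_2$, where the constant $c$ collects the factor $\sqrt{\pi^*_{\min}/\pi^*_{\max}}$ produced when passing between the $\pi^*$-weighted and Euclidean norms; under Assumption~\ref{Assumption_dynamic_range_bound} (bounded dynamic range, hence $\pi^* \asymp 1/n$) this ratio is bounded below by an absolute constant and is absorbed into $C$.

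\emph{Finally}, I would compare the operator $I - \tilde P$ with the reversible reference by writing $I - \tilde P = (I - P^*) + (P^* - \tilde P)$ and using the triangle inequality,
\begin{equation*}
\|\Delta^\top(I - \tilde P)\|_2 \ge \|\Delta^\top(I - P^*)\|_2 - \|\Delta\|_2\,\|P^* - \tilde P\|_2 \ge \big(c(1 - \rho) - \|P^* - \tilde P\|_2\big)\|\Delta\|_2.
\end{equation*}
Together with the fixed-point identity this gives $\big(c(1 - \rho) - \|P^* - \tilde P\|_2\big)\|\Delta\|_2 \le \|\hat\pi^\top(P - \tilde P)\|_2$, and rearranging produces the stated bound. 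The denominator is exactly of the advertised ``gap minus perturbation'' form; in the regime where the lemma is applied (e.g.\ Lemma~\ref{lem:pi_perturb_leave_one_out}, with $P$ and $\tilde P$ the empirical and leave-one-out transition matrices and $P^*$ the population reversible chain) the deviations $\|P - P^*\|_2$, $\|\tilde P - P^*\|_2$ and $\|P - \tilde P\|_2$ are all of the same small order, so the perturbation term may be replaced by $\|P - \tilde P\|_2$ without affecting the positivity of the denominator or the order of the bound. The only quantitative care needed is to check that this denominator stays bounded away from zero, which is guaranteed precisely by the hypothesis $\|P - \tilde P\|_2 < 1 - \rho$.
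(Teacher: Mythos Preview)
The paper does not supply its own proof of this lemma: it is quoted verbatim as Theorem~8 of \cite{chen2019spectral} and invoked as a black box in the proof of Lemma~\ref{lem:pi_perturb_leave_one_out}. There is therefore no in-paper argument to compare against; your sketch is effectively a reconstruction of the external reference.

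Your reconstruction follows the standard route (fixed-point identity $\Delta^\top(I-\tilde P)=\hat\pi^\top(P-\tilde P)$, symmetrization of the reversible reference via $D^{1/2}P^*D^{-1/2}$, then a triangle-inequality comparison) and is correct in substance. One point worth flagging: your argument produces the denominator $c(1-\rho)-\|P^*-\tilde P\|_2$, whereas the lemma as stated has $1-\rho-\|P-\tilde P\|_2$. You handle this by arguing that in the intended application all three pairwise deviations $\|P-P^*\|$, $\|\tilde P-P^*\|$, $\|P-\tilde P\|$ are of the same small order, so the two denominators are interchangeable up to the unspecified constant $C$. That is fine for how the lemma is actually used here, but it means your argument proves a slightly different (though equivalent-in-application) inequality rather than the literal statement. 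If you want the exact stated form, you would either work throughout in the $\pi^*$-weighted norm (where $P^*$ is genuinely self-adjoint and the constant $c$ disappears from inside the denominator) and convert to Euclidean norm only at the end, or add the assumption $\|\tilde P-P^*\|_2\lesssim \|P-\tilde P\|_2$ explicitly.
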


\medskip
\medskip
We will use this result to prove Lemma \ref{lem:pi_perturb_leave_one_out}. Choose $\tilde P$ as our leave-one-out version of $P$, then $\tilde\pi = \hat\pi^{(-i)}$. 
Therefore, we only need to verify two things, firstly $\|\hat\pi^\top (P - \tilde P)\|_2 = \cO_P(\textrm{poly}({e^{\kappa}}\log n) \sqrt{n^\dagger}/(dn))$ gives the desired rate of convergence, and secondly $\|P - \tilde P\|_2 = o_P(1)$. 

Note that $P_{jk} - \tilde P_{jk}$ only involves terms that compare item $i$. We have
\begin{equation*}
\begin{aligned}
\|\hat\pi^\top (P - \tilde P)\|_2 & \lesssim \frac{{e^{\kappa}}}{dn} \bigg| \sum_{m=1}^n \sum_{k:k\ne i,m}  \sum_{l=1}^L \tilde A_{mik} (Z_{mik}^l - E[Z_{mik}^l]) \bigg| + \frac{{e^{\kappa}}}{n} \sqrt{\sum_{j:j\ne i} (P_{jj} - \tilde P_{jj})^2} \\
& \quad\quad + \frac{{e^{\kappa}}}{dn} \sqrt{\sum_{j: j\ne i} \bigg( \sum_{m: m\ne i,j} \sum_{l=1}^L \tilde A_{mji} (Z_{mji}^l - E[Z_{mji}^l]) + \sum_{k:k\ne i,j} \sum_{l=1}^L \tilde A_{ijk} (Z_{ijk}^l - E[Z_{ijk}^l]) \bigg)^2} \\
& =: H_1 + H_2 + H_3\,.
\end{aligned}
\end{equation*}

It is easy to see that $H_1 = { e^{\bar\kappa}}\sqrt{n^\dagger} / (dn)$ by Bernstein inequality directly. Then note that $P_{jj} - \tilde P_{jj} = \sum_{k: k\ne j} (P_{jk} - \tilde P_{jk})$, which actually makes $H_2$ the same as $H_3$. So we only need to find out the order of $H_3$. Define $Z_{mji} = \sum_{l=1}^L Z_{mji}^l$, we have with probability $1-o(1)$,
\begin{equation*}
\begin{aligned}
H_3 &\asymp \frac{{e^{\kappa}}}{dn} \sqrt{\sum_{j: j\ne i} \bigg( \sum_{m: m\ne i,j} \tilde A_{mji} (Z_{mji} - E[Z_{mji}]) \bigg)^2} \\
&\lesssim \frac{{e^{\kappa}}}{dn} \sqrt{\sum_{j: j\ne i} \sum_{m: m\ne i,j} \tilde A_{mji} (Z_{mji} - E[Z_{mji}])^2} \\
& \quad\quad + \frac{{ e^{\kappa}}}{dn} \sqrt{\sum_{j: j\ne i} \sum_{m_1: m_1 \ne i,j} \sum_{m_2: m_2\ne i,j,m_1} \tilde A_{m_1 ji} \tilde A_{m_2 ji} |Z_{m_1 ji} - E[Z_{m_1 ji}]| |Z_{m_2 ji} - E[Z_{m_2 ji}]|} \\
& \lesssim \frac{{ e^{\kappa}}\sqrt{n^\dagger}}{dn} + \frac{{ e^{\kappa}}\sqrt{\textrm{poly}(\log n) n^\dagger}}{dn} \,,
\end{aligned}
\end{equation*}
where the last inequality is because $\max_{m,j,i} |Z_{mji} - E[Z_{mji}]| = \cO_P(\sqrt{L \log n})$, $\sum_{m_2: m_2\ne i,j,m_1} \tilde A_{m_2 ji} = \cO_P(\log n)$ given that $p \le C \log n/\binom{n-2}{M-2}$, and $L \sum_{j: j\ne i} \sum_{m_1: m_1 \ne i,j} \tilde A_{m_1 ji} = \cO_P(n^\dagger)$. Therefore, combining the rates for $H_1, H_2, H_3$, we have shown that $\|\hat\pi^\top (P - \tilde P)\|_2 = \cO_P({ e^{\kappa}}\textrm{poly}(\log n) \sqrt{n^\dagger}/(dn))$.

Next, we show $\|P - \tilde P\|_2 \le \|P - \tilde P\|_F = o_P(\sqrt{n^\dagger} / d) = o_P(1)$. Actually the proof is similar to the above bound for $\|\hat\pi^\top (P - \tilde P)\|_2$, except for the rate $1/n$ from $\hat\pi$. $\|P - \tilde P\|_F^2 = \sum_{j=1}^n \sum_{m=1}^n (P_{mj} - \tilde P_{mj})^2$. We separate the summations over $j,m$ into 3 cases of $m=j$, $m \ne j$ but $m=i$ or $j=i$, and $m \ne j, m,j \ne i$. In the case of $m=j$, we bound $\sqrt{\sum_{j=1}^n (P_{jj} - \tilde P_{jj})^2}$, which is essentially the same as bounding the second case. In the second case of $m \ne j$ but $m=i$ or $j=i$, we bound $\sqrt{\sum_{j: j\ne i} (P_{ij} - \tilde P_{ij})^2 + \sum_{m:m\ne i} (P_{mi} - \tilde P_{mi})}$. Following the bound for $H_3$ above, this term is of order $\sqrt{n^\dagger} / d$. Finally in the third case $m \ne j, m,j \ne i$, we need to bound $\sqrt{\sum_{j: j\ne i}  \sum_{m:m\ne j,i} (P_{mj} - \tilde P_{mj})^2}$, where $P_{mj} - \tilde P_{mj} =\frac{1}{d} \tilde A_{mji} (Z_{mji} - E[Z_{mji}])$. Again following the proof of $H_3$, this term is also $\sqrt{n^\dagger}/d$. Therefore, we conclude that $\|P - \tilde P\|_2 \le \|P - \tilde P\|_F = o_P(1)$.
\qed

\newpage

\section{Proofs for Ranking Inference (Section \ref{sec4.3})}
For simplicity of notations, we write $V_{ijl} = 1(i, j \in A_{l}) 1(c_{l} = j) /f(A_{l})$ for any $i\neq j$ and $l \in \mathcal{D}$. For each $m \in [n]$, define 
\begin{align*}
    \xi_{m}^{*} = \sum_{j : j\neq m} (P_{jm} e^{\theta_{j}^{*}} - P_{mj} e^{\theta_{m}^{*}}) \enspace \mathrm{and} \enspace \nu_{m}^{*} = \sum_{j : j\neq m} \mathbb{E}[P_{mj}|\mathcal{G}] e^{\theta_{m}^{*}}.  
\end{align*}
Note that $\nu_m^*,m\in[n]$ has the same meaning as $\tau_m(\theta^*)$ defined in the main text. However, for simplicity of notation, we keep $\nu_m^*$ in the proof of this section.

Then $J_{m}^{*} = \xi_{m}^{*}/\nu_{m}^{*}$ for each $m \in [n]$ in view of~\eqref{approx2} and 
\begin{align}
\label{eq_xi_population_version}
    \xi_{m}^{*} = \frac{1}{d} \sum_{l \in \mathcal{D}} \sum_{j: j\neq m} (V_{jml} e^{\theta_{j}^{*}} - V_{mjl} e^{\theta_{m}^{*}}) =: \frac{1}{d} \sum_{l \in \mathcal{D}} \mathcal{V}_{ml},
\end{align}
where $\mathcal{V}_{ml} = \sum_{j : j \neq m} (V_{jml} e^{\theta_{j}^{*}} - V_{mjl} e^{\theta_{m}^{*}})$. Under Assumption~\ref{Assumption_dynamic_range_bound}, it follows that for all $m \in [n]$, with probability $1-o(1)$,
\begin{align}
\label{eq_tau_m_order}
    \frac{\nu_{m}^{*}}{e^{\theta_{m}^{*}}} \asymp \frac{1}{d} \sum_{l \in \mathcal{D}} 1(m \in A_{l}) \asymp \frac{n^{\dagger}}{d}. 
\end{align}

%\begin{align*}
%    \tau_{m}^{*} &\geq \frac{1}{d} \sum_{l \in \mathcal{D}} 1(m \in A_{l}) \left(1 - \frac{1}{1 + (|A_{l}| - 1)e^{-\bar{\kappa}}}\right) \frac{e^{\theta_{m}^{*}}}{f(A_{l})} \cr
%    \tau_{m}^{*} &\leq \frac{1}{d} \sum_{l \in \mathcal{D}} 1(m \in A_{l}) \left(1 - \frac{1}{1 + (|A_{l}| - 1)e^{\bar{\kappa}}}\right) \frac{e^{\theta_{m}^{*}}}{f(A_{l})}
%\end{align*}
%\textcolor{red}{Since $f(A_{l})$ is uniformly bounded from below and above for all $l \in \mathcal{D}$}, we have 

\subsection{Gaussian Approximation}
Our goal of this section is to derive the asymptotic distribution of $T_{\mathcal{M}}$. Recall $\sigma_{km}^{2} = \Var(J_{k}^{*} - J_{m}^{*}|\mathcal{G})$ for each $k \neq m \in [n]$. Define 
\begin{align*}
    \mathcal{J}_{\mathcal{M}}^{*} = \max_{m \in \mathcal{M}} \max_{k \neq m} \left|\frac{J_{k}^{*} - J_{m}^{*}}{\sigma_{km}}\right|.
\end{align*}
We first derive a Gaussian approximation result~\citep{CCK2017} for $\mathcal{J}_{\mathcal{M}}^{*}$. Let $Z = (Z_{km})_{m \in [n], k\neq m} \in \mathbb{R}^{n(n - 1)}$ be a centered Gaussian random vector such that
\begin{align*}
    \Cov(Z_{km}, Z_{k'm'}) = \frac{\Cov(J_{k}^{*} - J_{m}^{*}, J_{k'}^{*} - J_{m'}^{*})}{\sigma_{km} \sigma_{k'm'}}, 
\end{align*}
for all $k\neq m \in [n]$ and $k'\neq m' \in [n]$. Then the Gaussian analogue of $\mathcal{J}_{\mathcal{M}}^{*}$ is defined by $\max_{m \in \mathcal{M}} \max_{k \neq m} |Z_{km}|$. In the following lemma, we establish an upper bound for the Kolmogorov distance between the distribution functions of $\mathcal{J}_{\mathcal{M}}^{*}$ and its Gaussian analogue. 
\begin{lemma}
\label{Lemma_Gaussian_approximation}
Under the conditions of Theorem~\ref{thm_approx}, we have
%\begin{align}
%\label{eq_GA_J_M}
%    \sup_{z \geq 0} \left|\mathbb{P}(\mathcal{J}_{\mathcal{M}}^{*} \leq z|\mathcal{G}) - \mathbb{P}\left(\max_{m \in \mathcal{M}} \max_{k \neq m} |Z_{km}|\leq z\right)\right| \leq \frac{C \log^{5/4}(n |\mathcal{D}|)}{(n^{\dagger})^{1/4}}, 
%\end{align}
{
\begin{align}
\label{eq_GA_J_M}
    \sup_{z \geq 0} \left|\mathbb{P}(\mathcal{J}_{\mathcal{M}}^{*} \leq z|\mathcal{G}) - \mathbb{P}\left(\max_{m \in \mathcal{M}} \max_{k \neq m} |Z_{km}|\leq z\right)\right| \leq \frac{C e^{3\bar{\kappa}/4}\log^{5/4}(n |\mathcal{D}|)}{(n^{\dagger})^{1/4}}, 
\end{align}
}
where $C < \infty$ is a positive constant independent of $n$. 
\end{lemma}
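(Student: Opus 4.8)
\textbf{Proof proposal for Lemma~\ref{Lemma_Gaussian_approximation}.}

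The plan is to apply the high-dimensional Gaussian approximation theory of \cite{CCK2017} to the (rescaled) sum of independent random vectors that represents $\mathcal{J}_{\mathcal{M}}^{*}$ up to a negligible error. First, I would write $J_{m}^{*} = \xi_{m}^{*}/\nu_{m}^{*}$ with $\xi_{m}^{*} = d^{-1}\sum_{l\in\mathcal{D}} \mathcal{V}_{ml}$ as in \eqref{eq_xi_population_version}, so that for any pair $k\neq m$,
\begin{align*}
J_{k}^{*} - J_{m}^{*} = \frac{1}{d}\sum_{l\in\mathcal{D}} \left(\frac{\mathcal{V}_{kl}}{\nu_{k}^{*}} - \frac{\mathcal{V}_{ml}}{\nu_{m}^{*}}\right) =: \sum_{l\in\mathcal{D}} X_{l}(k,m),
\end{align*}
where, crucially, the summands $\{(\mathcal{V}_{kl})_{k\in[n]}\}_{l\in\mathcal{D}}$ are independent across $l$ in the fixed-graph setting (Remark~\ref{rem2.1}). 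Thus the vector $\{(J_{k}^{*}-J_{m}^{*})/\sigma_{km}\}_{m\in\mathcal{M},k\neq m}$ is an exact sum over $l\in\mathcal{D}$ of independent mean-zero vectors of dimension $p := (n-1)|\mathcal{M}|$, and $\mathcal{J}_{\mathcal{M}}^{*}$ is its $\ell_\infty$-norm. I would then invoke the Gaussian approximation bound (e.g. Proposition 2.1 / Theorem 2.1 of \cite{CCK2017}, or the refined version in \cite{Chernozhukov2019}), which controls $\sup_{z}|\mathbb{P}(\|\sum_l X_l\|_\infty \le z) - \mathbb{P}(\|Z\|_\infty\le z)|$ in terms of the moment quantities $M_n := \max_{j}\sum_l \mathbb{E}|X_{lj}|^3$, $\bar M := \mathbb{E}\max_j\max_l |X_{lj}|^q$ ($q=3$ or $4$), the variance lower bound $b := \min_j \mathrm{Var}(\sum_l X_{lj})$, and $\log p$.

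Second, I would verify the regularity conditions with the right rates. The normalization gives $\mathrm{Var}(\sum_l X_l(k,m)) = 1$ by construction, so $b=1$. For the third-moment term: each $\mathcal{V}_{ml}$ is bounded in absolute value by $C/f(A_l)\lesssim 1$ times $1(m\in A_l)$ (using Assumption~\ref{Assumption_dynamic_range_bound} so all $e^{\theta_u^*}\asymp 1/n$ after accounting for $\pi^*\asymp 1/n$, more precisely $|\mathcal{V}_{ml}|\lesssim 1(m\in A_l)$), and by \eqref{eq_tau_m_order}, $\nu_m^*\asymp e^{\theta_m^*} n^\dagger/d \asymp n^\dagger/d$; meanwhile $\sigma_{km}^2 = \mathrm{Var}(J_k^*-J_m^*|\mathcal{G})\asymp 1/n^\dagger$ from \eqref{eq:fan1} and \eqref{eq_tau_m_order}. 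Hence $|X_l(k,m)| \lesssim \frac{1}{d}\cdot\frac{d}{n^\dagger}\cdot\sqrt{n^\dagger}\cdot 1(k\in A_l\text{ or }m\in A_l) = \frac{1}{\sqrt{n^\dagger}}1(\{k,m\}\cap A_l\neq\emptyset)$, uniformly in $l$; and $\sum_l \mathbb{E}|X_l(k,m)|^3 \lesssim (n^\dagger)^{-3/2}\cdot n^\dagger = (n^\dagger)^{-1/2}$, so $M_n \lesssim (n^\dagger)^{-1/2}$. Also $\bar M \le \max_l\max_{k,m}|X_l(k,m)| \lesssim (n^\dagger)^{-1/2}$ deterministically, so all moment envelope quantities are $\lesssim (n^\dagger)^{-1/2}$. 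Plugging $M_n, \bar M^{1/2}\cdot(\text{something}) \lesssim (n^\dagger)^{-1/2}$ and $\log p \lesssim \log(n|\mathcal{D}|)$ into the \cite{CCK2017} bound — whose leading term scales like $(M_n \log^2 p)^{1/3}$ or $(\bar M \log^{7}p)^{1/6}$ type expressions, but in the bounded-summand regime the sharpest form gives $((n^\dagger)^{-1/2}\log^{?} p)^{1/2}$-type rates — I would track the exponents carefully so that the final bound reads $C\log^{5/4}(n|\mathcal{D}|)\,(n^\dagger)^{-1/4}$ as claimed. (Concretely, using the bound of the form $\lesssim (M_n^{1/2}\log^{?}p + \ldots)$: with $M_n\lesssim (n^\dagger)^{-1/2}$ we get $(n^\dagger)^{-1/4}$ and a logarithmic factor whose power $5/4$ comes from the specific form of the theorem invoked.)

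Third, a minor technical point: the statement should use $J_m^*$ conditional on $\mathcal{G}$, and all the probability statements in the lemma are conditional on $\mathcal{G}$, so the independence across $l$ and all bounds are purely in the randomness of the outcomes $\{c_l\}$; no extra argument on graph randomness is needed here. One must also confirm that the Gaussian analogue $\max_{m\in\mathcal{M}}\max_{k\neq m}|Z_{km}|$ in the lemma has covariance matching that of $\mathcal{J}_{\mathcal{M}}^{*}$, which holds by the definition of $Z$ given just before the lemma. The main obstacle I anticipate is \emph{bookkeeping the exponents}: matching the generic constants and log-powers of the off-the-shelf Gaussian approximation theorem to the exact rate $\log^{5/4}(n|\mathcal{D}|)(n^\dagger)^{-1/4}$ requires invoking the correct version of the theorem (one of the several in \cite{CCK2017, Chernozhukov2019}) and verifying that in our bounded-increment, unit-variance regime its hypotheses (in particular the condition relating $\bar M$, $b$ and $\log p$, roughly $\bar M \log^{?}(p)\cdot b^{-1} \to 0$) are satisfied given Assumption~\ref{ass4.1} ($n^\dagger \gtrsim n^{1/2}(\log n)^{1/2}\max\{n^\ddagger,1\}$, hence in particular $n^\dagger \gg \log^{5}(n|\mathcal{D}|)$ under the implicit polynomial growth of $|\mathcal{D}|$); everything else is a routine substitution of the moment estimates derived above.
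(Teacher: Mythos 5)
Your proposal follows essentially the same route as the paper's proof: reduce $\mathcal{J}_{\mathcal{M}}^{*}$ to the sup-norm of a sum over $l \in \mathcal{D}$ of independent, uniformly bounded, mean-zero vectors, verify $\sigma_{km}^2 \asymp 1/n^{\dagger}$ together with the envelope bound $\max_{l}|X_l(k,m)|^2 \lesssim 1/n^{\dagger}$, and invoke Theorem 2.1 of \citet{Chernozhukov2019}. The only step you gloss over that the paper carries out explicitly is computing $\Cov(J_k^*, J_m^*|\mathcal{G}) \asymp n^{\ddagger}/(n^{\dagger})^2$ and using $n^{\ddagger}/n^{\dagger} = o(1)$ from Assumption~\ref{ass4.1} to rule out cancellation in $\Var(J_k^* - J_m^*|\mathcal{G})$, but this is a minor bookkeeping point and your argument is otherwise the same.
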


%\textcolor{red}{
%\begin{equation}
%    \frac{1}{d} \sum_{l \in \mathcal{D}} 1(i \in A_{l}) \left(1 - \frac{e^{\theta_{i}^{*}}}{\sum_{u \in A_{l}} e^{\theta_{u}^{*}}}\right) \frac{1}{f(A_{l})} \asymp \frac{n^{\dagger}}{d} 
%\end{equation}
%}

\noindent\textbf{Proof of Lemma~\ref{Lemma_Gaussian_approximation}.}
Elementary calculation implies that for any $m \in [n]$,  
\begin{align*}
    \Var(J_{m}^{*}|\mathcal{G}) = \frac{e^{\theta_{m}^{*}}}{(\nu_{m}^{*} d)^{2}} \sum_{l \in \mathcal{D}} \frac{1(m \in A_{l})}{f^{2}(A_{l})} \left(\sum_{j \in A_{l}} e^{\theta_{j}^{*}} - e^{\theta_{m}^{*}}\right), 
\end{align*}
and for any $k \neq m$, 
\begin{align*}
    \Cov(J_{k}^{*}, J_{m}^{*}|\mathcal{G}) = -\frac{1}{d^{2} \nu_{m}^{*} \nu_{k}^{*}} \sum_{l \in \mathcal{D}} \frac{1(k, m \in A_{l})}{f^{2}(A_{l})} \frac{e^{\theta_{k}^{*} + \theta_{m}^{*}} (e^{\theta_{k}^{*}} + e^{\theta_{m}^{*}})}{\sum_{u \in A_{l}} e^{\theta_{u}^{*}}}. 
\end{align*} 
Since $f(A_{l})$ is uniformly bounded from below and above for $l \in \mathcal{D}$, it follows from~\eqref{eq_tau_m_order} that 
%\begin{align*}
%    \Var(J_{m}^{*}|\mathcal{G}) \asymp \frac{e^{\theta_{m}^{*}}}{(n^{\dagger} e^{\theta_{m}^{*}})^{2}} \sum_{l \in \mathcal{D}} \frac{1(m \in A_{l})}{f^{2}(A_{l})} \left(\sum_{j \in A_{l}} e^{\theta_{j}^{*}} - e^{\theta_{m}^{*}}\right) \asymp \frac{1}{(n^{\dagger})^{2}} \sum_{l \in \mathcal{D}} 1(m \in A_{l}) \asymp \frac{1}{n^{\dagger}}. 
%\end{align*}
{
\begin{align*}
    \Var(J_{m}^{*}|\mathcal{G}) &\asymp \frac{e^{\theta_{m}^{*}}}{(n^{\dagger} e^{\theta_{m}^{*}})^{2}} \sum_{l \in \mathcal{D}} \frac{1(m \in A_{l})}{f^{2}(A_{l})} \left(\sum_{j \in A_{l}} e^{\theta_{j}^{*}} - e^{\theta_{m}^{*}}\right) \cr 
    &= \frac{1}{(n^{\dagger})^{2}} \sum_{l \in \mathcal{D}} 1(m \in A_{l}) \left(1 - \frac{e^{\theta_{m}^{*}}}{\sum_{j \in A_{l}} e^{\theta_{j}^{*}}}\right) \frac{1}{f(A_{l})} \frac{\sum_{j \in A_{l}} e^{\theta_{j}^{*} - \theta_{m}^{*}}}{f(A_{l})} \cr 
    &\gtrsim \frac{1}{n^{\dagger} e^{\bar{\kappa}}}. 
\end{align*}
}
Similarly, it follows that
{
\begin{align*}
    |\Cov(J_{k}^{*}, J_{m}^{*}|\mathcal{G})| &\asymp \frac{1}{(n^{\dagger})^{2}} \sum_{l \in \mathcal{D}} \frac{1(k, m \in A_{l})}{f^{2}(A_{l})} \frac{(e^{\theta_{k}^{*}} + e^{\theta_{m}^{*}})}{\sum_{u \in A_{l}} e^{\theta_{u}^{*}}} \cr 
    &\lesssim \frac{e^{\bar{\kappa}}}{(n^{\dagger})^{2}} \sum_{l \in \mathcal{D}} 1(k, m \in A_{l}) \cr
    &\asymp \frac{e^{\bar{\kappa}} n^{\ddagger}}{(n^{\dagger})^{2}}, 
\end{align*}
}
%$|\Cov(J_{k}^{*}, J_{m}^{*}|\mathcal{G})| \asymp n^{\ddagger}/(n^{\dagger} n^{\dagger})$ 
uniformly for $k \neq m \in [n]$. By Assumption~\ref{ass4.1}, we have~{$e^{2 \bar{\kappa}} n^{\ddagger}/n^{\dagger} = o(1)$} and hence $\sigma_{km}^{2} \gtrsim 1/(n^{\dagger} e^{\bar{\kappa}})$ for all $k\neq m \in [n]$. Consequently, we have with probability $1-o(1)$, 
%\textcolor{blue}{
%\begin{align*}
%    \sigma_{k m}^{2} \gtrsim \frac{1}{n^{\dagger} e^{\bar{\kappa}}}. 
%\end{align*}
%}
%\begin{align}
%\label{eq_uniformly_bounded}
%    \max_{l \in \mathcal{D}} \max_{m \in \mathcal{M}} \max_{k \neq m} \left|\frac{\mathcal{V}_{kl}/\nu_{k}^{*} - \mathcal{V}_{ml}/\nu_{m}^{*}}{d \sigma_{km}}\right|^{2} \lesssim \frac{1}{n^{\dagger}}.
%\end{align}
{
\begin{align}
\label{eq_uniformly_bounded}
    \max_{l \in \mathcal{D}} \max_{m \in \mathcal{M}} \max_{k \neq m} \left|\frac{\mathcal{V}_{kl}/\nu_{k}^{*} - \mathcal{V}_{ml}/\nu_{m}^{*}}{d \sigma_{km}}\right|^{2} \lesssim \frac{e^{3\bar{\kappa}}}{n^{\dagger}}.
\end{align}
}
Putting all these pieces together, we obtain~\eqref{eq_GA_J_M} by applying Theorem 2.1 in~\citet{Chernozhukov2019}. 
\qed

%We shall adopt the Gaussian approximation result~\citep{Chernozhukov2019} and derive a nonasymptotic upper bound for the Kolmogorov distance between the distribution functions of $\mathcal{J}_{\mathcal{M}}$ and its Gaussian analogue. 

Combining Lemma~\ref{Lemma_Gaussian_approximation} with Theorem~\ref{thm_approx}, we obtain the asymptotic distribution for $T_{\mathcal{M}}$. 
\begin{theorem}
\label{Theorem_Gaussian_Approximation_T}
Under the conditions of Theorem~\ref{Theorem_Bootstrap_Validity}, we have 
\begin{align}
\label{eq_GA_T_M}
    \sup_{z \in \mathbb{R}} \left|\mathbb{P}(T_{\mathcal{M}} \leq z | \mathcal{G}) - \mathbb{P}\left(\max_{m \in \mathcal{M}} \max_{k \neq m} |Z_{km}|\leq z\right)\right| \to 0. 
\end{align}
\end{theorem}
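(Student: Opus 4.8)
The plan is to deduce~\eqref{eq_GA_T_M} from the conditional Gaussian approximation already available for $\mathcal{J}_{\mathcal{M}}^{*}$ in Lemma~\ref{Lemma_Gaussian_approximation}, by first showing that $T_{\mathcal{M}}$ and $\mathcal{J}_{\mathcal{M}}^{*}$ are uniformly close and then converting this proximity into a Kolmogorov-distance bound via Gaussian anti-concentration. Write $M_{Z} := \max_{m \in \mathcal{M}} \max_{k \neq m} |Z_{km}|$, and for each pair $k \neq m$ set $E_{km} := \tilde{\theta}_{k} - \tilde{\theta}_{m} - (\theta_{k}^{*} - \theta_{m}^{*}) - (J_{k}^{*} - J_{m}^{*})$. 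Then
\begin{align*}
    \frac{\tilde{\theta}_{k} - \tilde{\theta}_{m} - (\theta_{k}^{*} - \theta_{m}^{*})}{\tilde{\sigma}_{km}} = \frac{J_{k}^{*} - J_{m}^{*}}{\sigma_{km}} \cdot \frac{\sigma_{km}}{\tilde{\sigma}_{km}} + \frac{E_{km}}{\tilde{\sigma}_{km}},
\end{align*}
so that $|T_{\mathcal{M}} - \mathcal{J}_{\mathcal{M}}^{*}| \leq \mathcal{J}_{\mathcal{M}}^{*} \cdot \max_{k \neq m} |\sigma_{km}/\tilde{\sigma}_{km} - 1| + \max_{k \neq m} |E_{km}|/\tilde{\sigma}_{km}$. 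It therefore suffices to bound the right-hand side by an $\epsilon_{n}$ with $\epsilon_{n} \sqrt{\log n} \to 0$.

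The two perturbation sources are handled separately. For the numerator, I would use not merely the $o_{P}(1/\sqrt{n^{\dagger}})$ statement of Theorem~\ref{thm_approx} but the explicit rate $\|\delta\|_{\infty} \lesssim d_{n,p}$ established inside its proof (recall the residual between $(\hat\pi_{i} - \pi_{i}^{*})/\pi_{i}^{*}$ and $J_{i}^{*}$ is of the same order), giving $\max_{i \in [n]} |E_{km}| = O_{P}(d_{n,p})$; since $\sigma_{km} \asymp (n^{\dagger})^{-1/2}$ and (below) $\tilde{\sigma}_{km}/\sigma_{km} \to 1$ uniformly, this contributes $O_{P}(d_{n,p}\sqrt{n^{\dagger}})$. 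For the denominator, $\tilde{\sigma}_{km}^{2}$ in~\eqref{eq_sigma_hat_km} is an explicit smooth function of the estimated scores whose summands are bounded ratios multiplied by indicators $1(\cdot \in A_{l})$; a first-order Taylor expansion around $\theta^{*}$ combined with the $\ell_{\infty}$ rate $\|\tilde{\theta} - \theta^{*}\|_{\infty} \lesssim \sqrt{\log n / n^{\dagger}}$ from~\eqref{l_infty_consist} yields $|\tilde{\sigma}_{km}^{2} - \sigma_{km}^{2}| \lesssim \sigma_{km}^{2} \sqrt{\log n / n^{\dagger}}$ uniformly, hence $\max_{k \neq m} |\sigma_{km}/\tilde{\sigma}_{km} - 1| = O_{P}(\sqrt{\log n / n^{\dagger}})$. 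Because $\mathcal{J}_{\mathcal{M}}^{*} = O_{P}(\sqrt{\log n})$ (via Lemma~\ref{Lemma_Gaussian_approximation} and a standard Gaussian maximal inequality, using $|\mathcal{M}|(n-1) \leq n^{2}$), the two pieces combine to $|T_{\mathcal{M}} - \mathcal{J}_{\mathcal{M}}^{*}| = O_{P}\big((\log n)/\sqrt{n^{\dagger}} + d_{n,p}\sqrt{n^{\dagger}}\big) =: \epsilon_{n}$.

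The final step is the comparison inequality: for any deterministic $\eta_{n} \to 0$,
\begin{align*}
    \sup_{z \in \mathbb{R}} \big|\mathbb{P}(T_{\mathcal{M}} \leq z \,|\, \mathcal{G}) - \mathbb{P}(M_{Z} \leq z)\big| &\leq \sup_{z} \big|\mathbb{P}(\mathcal{J}_{\mathcal{M}}^{*} \leq z \,|\, \mathcal{G}) - \mathbb{P}(M_{Z} \leq z)\big| \\
    &\quad + \mathbb{P}(|T_{\mathcal{M}} - \mathcal{J}_{\mathcal{M}}^{*}| > \eta_{n}) + \sup_{z} \mathbb{P}(|M_{Z} - z| \leq \eta_{n}).
\end{align*}
The first term is $\lesssim \log^{5/4}(n|\mathcal{D}|)/(n^{\dagger})^{1/4}$ by Lemma~\ref{Lemma_Gaussian_approximation} (vanishing under the mild regularity $\log^{5}(n|\mathcal{D}|) = o(n^{\dagger})$); choosing $\eta_{n}$ to dominate $\epsilon_{n}$ makes the second term vanish; and the Gaussian anti-concentration inequality~\citep{Chernozhukov2019} bounds the third by $C \eta_{n} \sqrt{\log(n|\mathcal{M}|)} \lesssim \eta_{n} \sqrt{\log n}$. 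It remains only to verify $\epsilon_{n} \sqrt{\log n} \to 0$: the contribution $(\log n)^{3/2}/\sqrt{n^{\dagger}} \to 0$ is immediate, while $d_{n,p} \sqrt{n^{\dagger} \log n} \to 0$ is equivalent to $n^{\ddagger} n^{1/2} \log n = o(n^{\dagger})$, which is precisely the hypothesis $(n^{\ddagger} \vee 1) n^{1/2} \log n = o(n^{\dagger})$ of Theorem~\ref{Theorem_Bootstrap_Validity} — strictly stronger than Assumption~\ref{ass4.1} and calibrated exactly for this step.

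The main obstacle is the numerator estimate: the $o_{P}(1/\sqrt{n^{\dagger}})$ form of Theorem~\ref{thm_approx} is insufficient, because after dividing by $\tilde{\sigma}_{km} \asymp (n^{\dagger})^{-1/2}$ it would only give $o_{P}(1)$, which cannot survive multiplication by $\sqrt{\log n}$ in the anti-concentration step. One must instead reuse the quantitative bound $d_{n,p}$ from the proof of Theorem~\ref{thm_approx} and check that the stronger sampling condition makes $d_{n,p} \sqrt{n^{\dagger} \log n} \to 0$; the accompanying bookkeeping is the uniform control of the plug-in variance estimator $\tilde{\sigma}_{km}$ over all $O(n^{2})$ pairs, which relies on the $\ell_{\infty}$-consistency of $\tilde{\theta}$ and the smoothness of the closed-form expression~\eqref{eq_sigma_hat_km}.
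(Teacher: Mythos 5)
Your argument follows essentially the same route as the paper's proof: decompose $T_{\mathcal{M}} - \mathcal{J}_{\mathcal{M}}^{*}$ into a remainder piece controlled by the quantitative bound $d_{n,p}$ extracted from inside the proof of Theorem~\ref{thm_approx} (correctly observing that the bare $o_P(1/\sqrt{n^{\dagger}})$ statement is too weak once divided by $\tilde{\sigma}_{km} \asymp (n^{\dagger})^{-1/2}$) and a variance-ratio piece, note $\mathcal{J}_{\mathcal{M}}^{*} = O_P(\sqrt{\log n})$, and close via Lemma~\ref{Lemma_Gaussian_approximation} together with Gaussian anti-concentration --- the last step being spelled out more explicitly than in the paper, which simply writes ``combining''. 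The one inaccuracy is your claim that $|\tilde{\sigma}_{km}^{2} - \sigma_{km}^{2}| \lesssim \sigma_{km}^{2}\sqrt{\log n/n^{\dagger}}$ by Taylor expansion alone: the estimator \eqref{eq_sigma_hat_km} targets $\Var(J_{k}^{*}|\mathcal{G}) + \Var(J_{m}^{*}|\mathcal{G})$ rather than $\Var(J_{k}^{*} - J_{m}^{*}|\mathcal{G})$, so there is an additional deterministic bias $2|\Cov(J_{k}^{*}, J_{m}^{*}|\mathcal{G})| \asymp \sigma_{km}^{2}\, n^{\ddagger}/n^{\dagger}$ that a plug-in expansion does not see; the paper's bound is $n^{\ddagger}/n^{\dagger} + \|\tilde{\theta} - \theta^{*}\|_{\infty}$, and since $n^{\ddagger}/n^{\dagger} = o(1/(n^{1/2}\log n))$ under the hypothesis of Theorem~\ref{Theorem_Bootstrap_Validity}, this extra term is still $o(1/\log n)$ and your conclusion is unaffected.
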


\noindent\textbf{Proof of Theorem~\ref{Theorem_Gaussian_Approximation_T}.}

Recall~\eqref{eq_sigma_hat_km} for $\{\tilde{\sigma}_{km}\}_{k \neq m}$. It is straightforward to verify that under the conditions of Theorem~\ref{thm_approx} and $\|\tilde{\theta} - \theta^{*}\|_{\infty} \leq 1$, 
{
\begin{align}
\label{eq_ratio_consistency_variance}
    \max_{m \in [n]} \max_{k \neq m} \left|\frac{\tilde{\sigma}_{km}}{\sigma_{km}} - 1\right| & = \cO_P\bigg( \frac{e^{2 \bar{\kappa}} n^{\ddagger}}{n^{\dagger}} + \|\tilde{\theta} - \theta^{\star}\|_{\infty}\bigg) = o_P\left(\frac{1}{\log^{3/2} n}\right).  
\end{align}
}
Therefore, with probability at least $1 - o(1)$, we have
\begin{align*}
    |T_{\mathcal{M}} - \mathcal{J}_{\mathcal{M}}^{*}| \lesssim \max_{m \in \mathcal{M}} \max_{k \neq m} \left|\frac{\vartheta_{m} - \vartheta_{k}}{\sigma_{km}}\right| + \mathcal{J}_{\mathcal{M}}^{*} \max_{m \in \mathcal{M}} \max_{k \neq m} \left|\frac{\sigma_{km}}{\tilde{\sigma}_{km}} - 1\right|,  
\end{align*}
where $\vartheta_{m} = \tilde{\theta}_{m} - \theta_{m}^{*} - J_{m}^{*}$ for each $m \in [n]$. By the proof of Theorem~\ref{thm_approx} and~{the condition $e^{5\bar{\kappa}} n^{\ddagger} n^{1/2} (\log n)^{3} = o(n^{\dagger})$}, 
%$(n^{\ddagger} \vee 1) n^{1/2} \log n = o(n^{\dagger})$, 
it follows that
{
\begin{align*}
    \max_{m \in \mathcal{M}} \max_{k \neq m} \left|\frac{\vartheta_{m} - \vartheta_{k}}{\sigma_{km}}\right| = o_P\left(\frac{1}{\log n}\right). 
\end{align*}
}
By Lemma~\ref{Lemma_Gaussian_approximation}, we have $\mathcal{J}_{\mathcal{M}}^{*} = \cO_P(\sqrt{\log n})$. Therefore $|T_{\cM} - \cJ_{\cM}^{*}| = o_P(1/\log n)$. Combining this with~\eqref{eq_GA_J_M} yields~\eqref{eq_GA_T_M}.

%\textcolor{blue}{
%\begin{align*}
%    \max_{m \in [n]} |\vartheta_{m}|^{2} = \max_{m \in [n]} |\tilde{\theta}_{m} - \theta_{m}^{*} - J_{m}^{*}|^{2} = O_{P}\left(\right). 
%\end{align*}
%}
%$\max_{m \in [n]} |\vartheta_{m}|^{2} = o_P(1/(n^{\dagger} \log n))$. Therefore  

%By Lemma~\ref{Lemma_Gaussian_approximation}, we have $\mathcal{J}_{\mathcal{M}}^{*} = \cO_P(\sqrt{\log n})$. Therefore $|T_{\cM} - \cJ_{\cM}^{*}| = o_P(1/\sqrt{\log n})$. Combining this with~\eqref{eq_GA_J_M} yields~\eqref{eq_GA_T_M}.

%\textcolor{red}{
%begin{align*}
%    \mathcal{J}_{\mathcal{M}}^{*} \max_{m \in \mathcal{M}} \max_{k \neq m} \left|\frac{\sigma_{km}}{\tilde{\sigma}_{km}} - 1\right| = o_{P}\left(\frac{1}{\log n}\right). 
%\end{align*}
%}
\qed

\subsection{Bootstrap Calibration, Proof of Theorem~\ref{Theorem_Bootstrap_Validity}}

For simplicity of notation, we write 
\begin{align*}
    \hat{X}_{mkl} = \frac{1}{d\tilde{\sigma}_{km}}\bigg(\frac{\hat{\mathcal{V}}_{kl}}{\hat{\nu}_{k}} - \frac{\hat{\mathcal{V}}_{ml}}{\hat{\nu}_{m}}\bigg) \enspace \mathrm{and} \enspace X_{mkl} = \frac{1}{d\sigma_{km}}\bigg(\frac{\mathcal{V}_{kl}}{\nu_{k}^{*}} - \frac{\mathcal{V}_{ml}}{\nu_{m}^{*}}\bigg). 
\end{align*}
Here $\hat\nu_m,m\in [n],$ have the same meaning as $\tau_m(\tilde{\theta}),m\in[n]$ defined in the main text.
Define 
\begin{align*}
    \bar{G}_{\mathcal{M}} = \max_{m \in \mathcal{M}} \max_{k \neq m} \left|\sum_{l \in \mathcal{D}} X_{mkl} \omega_{l}\right| 
\end{align*}
and let $\bar{Q}_{1 - \alpha}$ denote its $(1 - \alpha)$th conditional quantile. Following the proof of Lemma~\ref{Lemma_Gaussian_approximation}, by Theorem 2.2 in~\citet{Chernozhukov2019}, we have 
\begin{align*}
    \sup_{\alpha \in (0, 1)} \bigg|\mathbb{P}(\mathcal{J}_{\mathcal{M}}^{*} > \bar{Q}_{1 - \alpha}|\mathcal{G}) - \alpha \bigg| \leq \frac{C e^{3\bar{\kappa}/4}\log^{5/4}(n |\mathcal{D}|)}{(n^{\dagger})^{1/4}}.
\end{align*}
Hence it suffices to upper bound the Kolmogorov distance between the conditional distribution functions of $\bar{G}_{\mathcal{M}}$ and $G_{\mathcal{M}}$. Note that 
\begin{align*}
    |G_{\mathcal{M}} - \bar{G}_{\mathcal{M}}| \leq \max_{m \in \mathcal{M}} \max_{k \neq m} \bigg|\sum_{l \in \mathcal{D}} (\hat{X}_{mkl} - X_{mkl}) \omega_{l}\bigg|. 
\end{align*}
For any $m \in [n]$ and $l \in \mathcal{D}$, it follows that $|\hat{\cV}_{ml}/\hat{\nu}_{m} - \cV_{ml}/\nu_{m}^{*}| \lesssim \cV_{ml} \|\tilde{\theta} - \theta^{*}\|_{\infty}/\nu_{m}^{*}$. Consequently, it follows from~\eqref{eq_uniformly_bounded} and~\eqref{eq_ratio_consistency_variance} that for any $k \neq m$, we have with probability $1-o(1)$,
\begin{align*}
    \sum_{l \in \mathcal{D}} |\hat{X}_{mkl} - X_{mkl}|^{2} \lesssim \left(\|\tilde{\theta} - \theta^{*}\|_{\infty} + \frac{e^{2\bar{\kappa}} n^{\ddagger}}{n^{\dagger}}\right)^{2} \sum_{l \in \mathcal{D}} |X_{mkl}|^{2} \lesssim {e^{3\bar{\kappa}}} \left(\|\tilde{\theta} - \theta^{*}\|_{\infty} + \frac{e^{2\bar{\kappa}} n^{\ddagger}}{n^{\dagger}}\right)^{2}. 
\end{align*}
Consequently, we obtain $|\PP(T_{\mathcal{M}} > \mathcal{Q}_{1 - \alpha}) - \alpha| \to 0$. 

\qed

\end{document}